\documentclass[11pt]{article}

\usepackage[top=1in, bottom=1in, left=0.8in, right=0.8in]{geometry}

\usepackage{amsmath,amsthm,amssymb} 
\usepackage{mathtools}
\usepackage{tikz-cd}
\usepackage{graphicx}
\usepackage{fancybox}
\usepackage{hyperref}
\usepackage{varwidth}
\usepackage{mdframed}
\usepackage{mathrsfs}
\usepackage{physics}
\usepackage{dsfont}
\usepackage{xcolor}
\usepackage{subcaption}
\usepackage{float}

\hypersetup{
    colorlinks,
    linkcolor={red!50!black},
    citecolor={blue!50!black},
    urlcolor={blue!80!black}
}
\DeclareSymbolFont{usualmathcal}{OMS}{cmsy}{m}{n}
\DeclareSymbolFontAlphabet{\mathcal}{usualmathcal}

\usepackage{helvet}

\AtBeginDocument{
  \DeclareSymbolFont{AMSb}{U}{msb}{m}{n}
  \DeclareSymbolFontAlphabet{\mathbb}{AMSb}}

\newcommand{\zz}{\mathbb Z}   
\newcommand{\rr}{\mathbb R}   
\newcommand{\nn}{\mathbb N}   
\newcommand{\cc}{\mathbb C}   
\newcommand{\cH}{\mathcal H} 
 
\newcommand{\cB}{\mathcal B}  
  
\newcommand{\cA}{\mathcal A}  
\newcommand{\cE}{\mathcal E}  
\newcommand{\cK}{\mathcal K}  
\newcommand{\cO}{\mathcal O}  
\newcommand{\rmEnd}{\mathrm{End}}

\newcommand{\al}{\alpha}
\newcommand{\bet}{\beta}  
\newcommand{\ga}{\gamma}  
\newcommand{\ds}{\mathds{1}}  
  
\newcommand{\lpline}[1]{\draw[gray!55] (0.55,0) -- ({#1+0.45},0);}
\newcommand{\lpsites}[1]{\lpline{#1}\foreach \i in {1,...,#1}{\fill (\i,0) circle (1.7pt);}}
\newcommand{\lplabels}[1]{\foreach \i in {1,...,#1}{\node[font=\tiny,gray!75] at (\i,-0.34) {\i};}}
\newcommand{\lparc}[2]{\draw[thick] (#1,0) .. controls (#1,{0.5*(#2-#1)+0.30})
    and (#2,{0.5*(#2-#1)+0.30}) .. (#2,0);}
\newcommand{\lpthru}[1]{\draw[thick] (#1,0) -- (#1,1.05);}

\renewcommand{\epsilon}{\varepsilon}
\renewcommand{\phi}{\varphi}
\renewcommand{\emptyset}{\varnothing}
\renewcommand{\geq}{\geqslant}
\renewcommand{\leq}{\leqslant}

\renewcommand{\Im}{\operatorname{Im}}

\numberwithin{equation}{section}

\newtheorem{thm}{Theorem}[section]
\newtheorem{lem}[thm]{Lemma}
\newtheorem{prop}[thm]{Proposition}
\newtheorem{cor}[thm]{Corollary}
\newtheorem{definition}[thm]{Definition}

\theoremstyle{remark}
\newtheorem{remark}[thm]{Remark}
\theoremstyle{plain}

\begin{document}

\begin{center}{\Large \textbf{
On the Spectrum of Some Temperley--Lieb Spin Chains\\
}}\end{center}

\begin{center}\textbf{
Robert Ferydouni\textsuperscript{1,2$\star$} and
Bruno Nachtergaele\textsuperscript{1,2$\dagger$}
}\end{center}

\begin{center}
{\bf 1} Department of Mathematics, University of California Davis, Davis, CA 95616, USA
\\
{\bf 2} Center for Quantum Mathematics and Physics (QMAP), University of California Davis, Davis, CA 95616, USA
\\[\baselineskip]
$\star$ \href{mailto:rferydouni@ucdavis.edu}{\small rferydouni@ucdavis.edu}\,,\quad
$\dagger$ \href{mailto:bnachtergaele@ucdavis.edu}{\small bnachtergaele@ucdavis.edu}
\end{center}

\section*{Abstract}
We review mathematical results about the spectrum of spin-$s$ chains with a nearest neighbor interaction given by the projection onto the singlet state of two spin-$s$ degrees of freedom. Up to normalization, these interaction terms generate a Temperley--Lieb algebra at loop parameter $d=2s+1$. We also present several new results. First, we prove a ferromagnetic ordering of energy levels across symmetry sectors of the model, thus generalizing a previous result for the spin-$\tfrac12$ $XXZ$ chain \cite{nachtergaele2004ferromagnetic}. This enables us to determine a uniform lower bound on the finite volume spectral gap for these spin chains. By combining a variational estimate with Knabe's finite size criterion, we show that the exact spectral gap of the GNS Hamiltonian of the fully polarized state is $1-2/d$. Furthermore, this value serves as a lower bound for the gap of the maximally mixed ground state. Finally, we show that the ground state projectors on open chains satisfy the Jones--Wenzl recursion, thereby recovering a known result that the degeneracies are the quantum integers $[L+1]_q$ for $q+q^{-1}=d$. We discuss how the high degeneracy of the ground states allows the model to break a continuous symmetry while remaining gapped, and renders that gap unstable.

\vspace{\baselineskip}

\vspace{10pt}
\noindent\rule{\textwidth}{1pt}
\tableofcontents
\noindent\rule{\textwidth}{1pt}
\vspace{10pt}

\section{Introduction}
\label{sec:intro}

The Temperley--Lieb algebra was introduced in the study of the Potts and ice-type models \cite{temperley1971relations} and has since become one of the standard organizing structures of one dimensional quantum magnetism. Its role there is unusual: it is not a symmetry of any particular chain but an algebra in which many different chains simultaneously live. If the nearest neighbour interaction terms of a spin chain satisfy the Temperley--Lieb relations at a common loop parameter, then the Hamiltonian is, as an abstract algebra element, the same in every such model. If the representation is faithful, its spectrum is determined by the algebra rather than by the local Hilbert space in which the algebra happens to be realized. What is \emph{not} determined this way are the multiplicities, and it is in the gap between these two statements that the models differ.

This paper studies the family of chains whose interaction is the orthogonal projection $P^{(0)}_{x,x+1}$ onto the singlet state of two neighbouring spin-$s$ particles at $\{x,x+1\}$. We refer to these chains collectively as the singlet model throughout the rest of the paper. These terms generate a Temperley--Lieb algebra at loop parameter $d = 2s+1$, and the resulting Hamiltonians are frustration free and positive semidefinite with the fully polarized state among their ground states. For $s = \tfrac12$ the model is the isotropic ferromagnetic Heisenberg chain; for $s = 1$ it is, up to a shift and a rescaling, the purely biquadratic chain $\sum_x (\vec S_x \cdot \vec S_{x+1})^2$, which sits at the boundary of the ferromagnetic phase of the bilinear-biquadratic family.

That family, $\cos\theta\,\left(\vec S_x\cdot\vec S_{x+1}\right) + \sin\theta\,\left(\vec S_x\cdot\vec S_{x+1}\right)^2$, whose conjectured phase diagram is shown in Figure~\ref{fig:Spin1 Diagram}, has been studied intensively. The Uimin--Lai--Sutherland models at $\theta = \tfrac\pi4, \tfrac{5\pi}{4}$ were shown to be integrable, gapless and Bethe ansatz solvable \cite{uimin1970one,lai1974lattice,sutherland1975model}. The same goes for $\theta = -\tfrac\pi4$, solved by Takhtajan and Babujian \cite{takhtajan1982, babujian1982}. The valence bond solid point $\theta = \arctan\tfrac13$ of Affleck, Kennedy, Lieb and Tasaki (AKLT) \cite{affleck1987rigorous, affleck1988valence} provided the first rigorous example of a gapped isotropic chain. The point $\theta = -\tfrac\pi2$ and its higher spin analogs were argued to exhibit dimerization in \cite{affleck1990exact, xian1993exact}. The spectral gap and dimerization were shown rigorously in \cite{warzel2020dimers} and \cite{bjornberg2021dimerization}. The family's phase diagram as a function of $\theta$ was first mapped by Sólyom \cite{solyom1987competing} from finite chain spectra. The AKLT point is contained in the topologically nontrivial Haldane phase $\theta \in \left(-\tfrac\pi4, \tfrac\pi4\right)$. Dimerization is expected to hold in $\theta \in \left(\tfrac{5\pi}{4}, \tfrac{7\pi}{4}\right)$. Ferromagnetism holds for $\theta \in \left(\tfrac{\pi}{2}, \tfrac{5\pi}{4}\right)$. The region $\theta \in \left(\tfrac\pi4, \tfrac\pi2\right)$ is expected to be critical and gapless \cite{itoi1997extended}, and characterized by soft modes at momenta $\pm \tfrac{2\pi}{3}$ \cite{fath1991period}. The Uimin--Lai--Sutherland model, which sits at $\theta=\tfrac\pi4$, and the singlet model, which up to a shift and normalization sits at  $\theta = \tfrac\pi2$, occupy the edges of this region. The critical phase, the ferromagnetic phase, and the role the singlet model plays in between them are discussed further in Section~\ref{sec:instability}.

The spin-$1$ case of the singlet model has the longest history given the literature just discussed. Parkinson attacked the spin-$1$ singlet model by a coordinate Bethe ansatz \cite{parkinson1988}, introducing the one- and two-site deviate states that we use below to construct an explicit band of excitations. Barber and Batchelor \cite{barber1989spectrum} observed that the biquadratic interaction, the nine state Potts model and the spin-$\tfrac12$ $XXZ$ chain all furnish representations of the same Temperley--Lieb algebra, and used this to transfer spectral information between them. Kl\"umper studied the model by inversion relations and by the quantum transfer matrix \cite{klumper1989new, klumper1990investigation, klumper1993thermodynamics}.

Bethe ansatz methods have been applied to Temperley--Lieb chains well beyond spin one. K\"oberle and Lima-Santos solved the deformed biquadratic spin-$1$ chain by a coordinate Bethe ansatz \cite{koberle1994deformed}, and subsequently treated the $A$--$D$ Temperley--Lieb models \cite{koberle1995AD} and graded Temperley--Lieb models \cite{limasantos1998exact} by the same technique, in each case constructing the eigenstates from excitations of a reference state closely related to the vectors of Section~\ref{sec:upper}. These works argue that the spectra of a large class of Temperley--Lieb chains coincide up to degeneracies, a statement we recover for open chains in Proposition~\ref{prop:isospectral} by purely representation theoretic means. 

The symmetry algebra of these chains, that is the commutant of the Temperley--Lieb action, has been identified twice by independent routes. Kulish \cite{kulish2003spin} and Kulish, Manojlovic and Nagy \cite{kulish2008quantum} approach it through the $R$-matrix formalism, showing that the relevant antisymmetrizers vanish and deriving in the process a recursion for the dimension of the space annihilated by all interaction terms, that is for the ground state degeneracy of the models considered here. Read and Saleur \cite{read2007enlarged} arrive at the same enlarged symmetry algebra from the loop model side, give explicit generators for it, and describe its irreducible representations in terms of configurations of nested parentheses and dots; it is their description that we use in Section~\ref{sec:reptheory}, where the parentheses become the arcs of a link pattern and the dots carry a Jones--Wenzl projected vector. The multiplicity structure for periodic chains, which is considerably more delicate, was settled by Aufgebauer and Kl\"umper \cite{aufgebauer2010quantum}, who also derive a recursion for the ground state degeneracy on open and periodic chains.

Our motivation for revisiting these models is that the spin-$\tfrac12$ $XXZ$ chain with anisotropy $\Delta$, which forms a Temperley--Lieb chain at loop parameter $2\Delta$, is understood in far greater detail than its higher spin relatives. Koma and Nachtergaele computed the spectral gap of the ferromagnetic $XXZ$ chain exactly \cite{koma1997spectral} and classified its infinite volume ground states, which include interface states in addition to the polarized ones \cite{koma1998complete}. Nachtergaele, Spitzer and Starr proved that the energy levels of that chain are ordered by total spin, the lowest energy in a sector decreasing as the spin increases \cite{nachtergaele2004ferromagnetic}, a property subsequently extended to $U_q(\mathfrak{sl}_2)$ symmetric chains \cite{nachtergaele2012ferromagnetic}. This type of statement originates with Lieb and Mattis in \cite{liebmattis1962ordering}, who showed the antiferromagnetic Heisenberg chain (and some generalizations) had its lowest energy in a spin sector increasing as the total spin increases. Since the spectrum is an invariant of the algebra, one expects analogues of these results for every $s$; since the multiplicities are not, one expects the analogues to be statements about different sectors. Both expectations are borne out below.

The paper is organized as follows. Section~\ref{sec:notation} fixes notation, records the equivalent formulations of the model, and the sense in which they are isospectral. Section~\ref{sec:foel} proves a ferromagnetic ordering of energy levels, with the sectors labeled not by the $SU(2)$ total spin, which we show cannot work for $s \geq 1$, but by the sectors given from the double centralizer theorem, Theorem~\ref{thm:double_centralizer}, or equivalently by the number of arcs in a link pattern. Section~\ref{sec:gap} computes the spectral gap of the GNS Hamiltonian of the fully polarized state exactly, combining an explicit band of excitations with Knabe's finite size criterion \cite{knabe1988energy}. It also constructs a second gapped infinite volume ground state, the maximally mixed state on the ground state spaces, and computes its mean entropy and mutual information. Section~\ref{sec:projectors} shows that the ground state projections of the open chains satisfy the Jones--Wenzl recursion \cite{wenzl1987sequences}, from which the exponential degeneracy in terms of quantum integers $[L+1]_q$ follows, where we parametrize $q+q^{-1} = d$. Section~\ref{sec:discussion} discusses the consequences of that exponential degeneracy: the model breaks a continuous symmetry while remaining gapped, evading the conclusion of the Goldstone theorem \cite{landau1981energy}, and its gap is unstable.

\section{Notation and Main Results}
\label{sec:notation}

\subsection{The Model and its Equivalent Formulations}
\label{sec:model}

Throughout, $s \in \{1, \tfrac32, 2, \tfrac52, 3, \tfrac72, \dots\}$ denotes a fixed spin value (definitions and results extend to $s=\tfrac12$ when explicitly stated) and $d=2s+1$ the dimension of the on-site Hilbert space $V_s \cong \cc^{d}$. We write $\{\ket{\al} : \al = -s,\dots,s\}$ for the standard basis of $V_s$ diagonalizing $S^3$, and for a chain of length $L$ we set the physical Hilbert space
\begin{equation}\label{eq:hilbert_space}
    \cH_L \coloneq V_s^{\otimes L} = \underbrace{\cc^{d} \otimes \cdots \otimes \cc^{d}}_{L}\,.
\end{equation}
Operators carrying site labels, such as $P^{(0)}_{x,x+1}$ below, are understood to act as the identity on all unlabelled tensor factors.

The decomposition of a pair of neighbouring spins into irreducible $\mathfrak{su}(2)$ representations reads $V_s \otimes V_s = V_{2s} \oplus \cdots \oplus V_0$. The singlet space $V_0$ is one-dimensional, and we fix once and for all the \emph{unnormalized} singlet vector that spans this space as
\begin{equation}\label{eq:singlet_vector}
    \ket{\psi} \coloneq \sum_{\al = -s}^{s} (-1)^{s-\al}\ket{\al, -\al}\,, \qquad \braket{\psi}{\psi} = d\,,
\end{equation}
so that the orthogonal projection of $V_s \otimes V_s$ onto $V_0$ is
\begin{equation}\label{eq:projector_P0}
    P^{(0)} = \frac{1}{d}\ket{\psi}\bra{\psi} = \frac{1}{d} \sum_{\al, \bet = -s}^{s} (-1)^{2s- \al - \bet}\ket{\al, -\al}\bra{\bet, -\bet}\,.
\end{equation}
When we need to record the sites on which \eqref{eq:singlet_vector} is supported we write $\ket{\psi}_{x,y}$, with the convention that the first tensor slot is the site $x$ and the second is the site $y$. Note the reversal rule
\begin{equation}\label{eq:reversal}
    \ket{\psi}_{y,x} = (-1)^{2s}\ket{\psi}_{x,y}\,,
\end{equation}
which reflects the exchange parity of the spin-$s$ singlet and will be responsible for the distinction between integer and half-odd integer spin in Section~\ref{sec:foel}.

For $L \geq 2$ the singlet model on the open chain $[1,L]$ is the Hamiltonian
\begin{equation}\label{eq:hamiltonian_obc}
    H_L \coloneq \sum_{x=1}^{L-1} P^{(0)}_{x,x+1}\,,
\end{equation}
and on the periodic chain of the same length it is
\begin{equation}\label{eq:hamiltonian_pbc}
    H_L^{\text{PBC}} \coloneq \sum_{x=1}^{L} P^{(0)}_{x,x+1}\,, \qquad L+1 \equiv 1\,.
\end{equation}
Each summand is an orthogonal projection, hence $H_L, H_L^{\text{PBC}} \geq 0$. The fully polarized vector $\ket{s\cdots s}$ satisfies $P^{(0)}_{x,x+1}\ket{s \cdots s} = 0$ for every $x$, because $\ket{s, s}$ is the highest weight vector of $V_{2s}$ and therefore has no singlet component. Consequently $\ket{s \cdots s} \in \ker H_L \cap \ker H_L^{\text{PBC}}$. In other words, the model is frustration free with ground state energy $0$.

\subsubsection*{The Temperley--Lieb structure}

The algebraic mechanism behind every result in this paper is that the interaction terms of \eqref{eq:hamiltonian_obc} generate a Temperley--Lieb algebra.

\begin{definition}[Temperley--Lieb algebra]\label{def:TL}
For $L \geq 2$ and $\lambda \in \cc$, the Temperley--Lieb algebra $TL_L(\lambda)$ at loop parameter $\lambda$ is the unital associative $\cc$-algebra with generators $e_1,\dots,e_{L-1}$ and relations
\begin{equation}\label{eq:TL_relations}
    e_x^2 = \lambda e_x\,, \qquad e_x e_{x\pm 1}e_x = e_x\,, \qquad e_xe_y = e_ye_x \ \text{ for } \abs{x-y} > 1\,,
\end{equation}
where $1 \leq x, y, x\pm 1 \leq L-1$.
\end{definition}

The assignment $e_x \mapsto d\,P^{(0)}_{x,x+1}$ extends to a representation $\rho_d : TL_L(d) \to \cB(\cH_L)$ at loop parameter $\lambda = d = 2s+1$. In particular
\begin{equation}\label{eq:hamiltonian_TL}
    H_L = \frac{1}{d}\sum_{x=1}^{L-1}\rho_d(e_x)\,.
\end{equation}
We fix the convention that lowercase $e_1,\dots,e_{L-1}$ always denote the abstract generators of $TL_L(d)$ of Definition~\ref{def:TL}, and any statement written in terms of them is a statement about the algebra. When a specific realization is intended we write
\begin{equation}\label{eq:Ex}
    E_x \coloneq \rho_d(e_x) = d\,P^{(0)}_{x,x+1} \in \cB(\cH_L)\,,
\end{equation}
so that $H_L = \tfrac1d\sum_{x=1}^{L-1}E_x$. The subscript $d$ on $\rho_d$ records that the representation, and with it the loop parameter, is fixed by the spin through $d = 2s+1$; general representations of $TL_L(d)$, not necessarily this one, are denoted $\rho$ and $\mu$. Parameterizing $d=q+q^{-1}$ we have that $q$ is not a root of unity since $d \geq 3$, so $TL_L(d)$ is semisimple by \cite{ridout2014standard}. The same reference actually shows the algebra is semisimple in the case $q=1$ ($d=2$) as well by extra arguments. 

\subsubsection*{Equivalent formulations}

The model admits several presentations that we record here, not all of which are used below. First, the operators $\ds, \vec{S}\cdot\vec{S}, \dots, (\vec{S}\cdot\vec{S})^{2s}$ form a basis of the commutant of $V_s \otimes V_s$ as an $\mathfrak{su}(2)$ representation, so $P^{(0)}$ is a polynomial of degree $2s$ in $\vec{S}\cdot\vec{S}$. For $s=1$, up to normalization and a shift of the spectrum, $P^{(0)}$ is the purely biquadratic interaction $(\vec{S}\cdot\vec{S})^2$; for $s=\tfrac32$ one has
\begin{equation}\label{eq:P0_poly32}
    P^{(0)} = \tfrac{33}{128}\ds + \tfrac{31}{96}\,\left(\vec{S}\cdot\vec{S}\right) - \tfrac{5}{72}\,(\vec{S}\cdot\vec{S})^2 - \tfrac{1}{18}\,(\vec{S}\cdot\vec{S})^3\,.
\end{equation}

The next presentation is in terms of the $O(d)$ singlet. We can relabel the basis of $\cc^d$ given above as $\{e_\alpha\}_{\alpha=-s}^{s}$, and write the $O(d)$ invariant singlet vector as
\begin{equation}\label{eq:Od_singlet}
    \ket{\phi} \coloneq \sum_{\alpha=-s}^{s} e_\alpha\otimes e_\alpha\,,
    \qquad \braket{\phi} = d\,,
\qquad Q \coloneq \tfrac{1}{d}\ket{\phi}\bra{\phi}\,.
\end{equation}
The invariance comes from the fact $\ket{\phi}$ is invariant under $R\otimes R$ for every real orthogonal $R$ on $\cc^d$. Define the matrix $V$ on $\cc^d$ as in \cite[Appendix A]{bjornberg2021dimerization}:
\begin{equation}\label{eq:V_def}
    V e_\alpha \coloneq (-1)^{s-\alpha} e_{-\alpha}\,,
    \qquad\text{equivalently}\qquad
    V_{\beta\alpha} = (-1)^{s-\alpha}\delta_{\beta,-\alpha}\,.
\end{equation}
Then $V$ is real, orthogonal, and
\begin{equation}\label{eq:V_FS}
    V^{T} = (-1)^{2s} V\,, \qquad V^{2} = (-1)^{2s}\ds\,.
\end{equation}
The definition gives for any $s$,
\begin{equation}\label{eq:V_maps_Omega}
(\ds\otimes V)\ket{\phi} = \sum_{\alpha}(-1)^{s-\alpha}\,e_\alpha\otimes e_{-\alpha} = \ket{\psi}\,,
\qquad
(V\otimes \ds)\ket{\phi} = (-1)^{2s}\ket{\psi}\,,
\end{equation}
where the second identity follows from the first after the substitution $\beta = -\alpha$ and the fact $(-1)^{s-\beta} = (-1)^{2s} \cdot (-1)^{s+\beta}$. So a single factor of $V$, placed on either site, carries the $O(d)$ singlet to the $SU(2)$ singlet up to a sign. For integer $s$ one can do better and use the same unitary on both sites. Indeed, $(W\otimes W)\ket{\phi} = \ket{\psi}$ holds for the unitary $W$ as in \cite[Eq.~(1.4)]{bjornberg2021dimerization}:
\begin{equation}\label{eq:W_explicit}
    W e_0 = i^{s} e_0\,,\qquad
    W e_{\alpha} = \tfrac{i^{s-\alpha}}{\sqrt{2}}\bigl(e_{\alpha}+e_{-\alpha}\bigr)\,,\qquad
    W e_{-\alpha} = \tfrac{i^{s-\alpha+1}}{\sqrt{2}}\bigl(e_{\alpha}-e_{-\alpha}\bigr)\,,
    \qquad \alpha > 0\,.
\end{equation}
Then
\begin{equation}\label{eq:two_site_equiv}
    (\ds\otimes V)\,Q\,(\ds\otimes V^T) = P^{(0)} = (V\otimes\ds)\,Q\,(V^{-1}\otimes\ds)
    \qqtext{and}
    (W\otimes W)\,Q\,(W\otimes W)^{\ast} = P^{(0)}
\end{equation}
for all $s$ and all integer $s$ respectively. These are statements about a single bond interaction term, but we need unitary transformations that identify the spectra of the Hamiltonians on the full chain. Writing $H_L^{\,\phi} \coloneq \sum_{x=1}^{L-1} Q_{x,x+1}$ for the open $O(d)$ singlet chain, we have the unitary equivalence
\begin{equation}\label{eq:Od_equiv}
    W^{\otimes L}\,H_L^{\,\phi}\,(W^{\ast})^{\otimes L} = H_L\,,
\end{equation}
and for periodic boundary conditions as well. For half-odd integer $s$ no uniform choice is available and we must place $V$ on one site of each bond; if the lattice is bipartite with vertex classes $A$ and $B$, then setting $\mathcal{V} \coloneq \bigotimes_{x\in A} V_x \otimes \bigotimes_{x\in B}\ds_x$ every edge sees either $V\otimes\ds$ or $\ds\otimes V$, whence $\mathcal{V}\,H_L^{\,\phi}\,\mathcal{V}^{-1} = H_L$. In particular the $SU(2)$ and $O(d)$ singlet models are unitarily equivalent for every $s$ on the open chain, and on the periodic chain for $L$ even. For half-odd integer $s$, an odd length periodic chain admits no two-coloring and the argument gives nothing there.

The next presentation is in terms of the $SU(d)$ singlet which requires no further work, since the $SU(d)$ singlet is the same vector $\ket{\phi}$. Write $\mathbf{d}$ and $\overline{\mathbf{d}}$ for the fundamental and antifundamental representations of $SU(d)$. Then the tensor representation decomposes into the adjoint and singlet representation as $\mathbf{d}\otimes\overline{\mathbf{d}} = (\mathbf{d^2-1})\oplus\mathbf{1}$. Using
$g^{T}=\overline{g^{-1}}$ for $g \in SU(d)$, one has
\begin{equation}\label{eq:SUd_invariance}
\begin{split}
    (g\otimes\bar g)\ket{\phi}
    &\;=\; \sum_{\al} g\,e_\al \otimes \bar g\,e_\al
    \;=\; \sum_{\al,\bet} g_{\bet\al}\, e_{\bet}\otimes \bar g\,e_\al \\[4pt]
    &\;=\; \sum_{\bet} e_{\bet}\otimes \bar g\Big(\sum_{\al} g_{\bet\al}e_\al\Big)
    \;=\; \sum_{\bet} e_{\bet}\otimes \bar g\, g^{T} e_{\bet}
    \;=\; \ket{\phi}\,,
\end{split}
\end{equation}
so $\ket{\phi}$ spans the singlet of $\mathbf{d}\otimes\overline{\mathbf{d}}$ and
$Q$ is the projection onto it. The alternating $SU(d)$ chain is therefore the same Hamiltonian $H^{\,\phi}_L$. \eqref{eq:Od_equiv} and its staggered counterpart already give its unitary equivalence to the $SU(2)$ singlet model; only the group one lets act distinguishes the two descriptions.

Since each interaction term was defined as the projection onto an $SU(2)$ representation, the Hamiltonian manifestly has $SU(2)$ symmetry. The other formulations show there is a higher $SU(d)$ and $O(d)$ symmetry. The ground state spaces form representations of these symmetry groups, though reducible for $L \geq 3$. The correct way to decompose the spectrum of the model is discussed below when we use the double centralizer theorem, which generalizes Schur--Weyl duality to spins $s > \tfrac12$. 

The last presentation we record differs in kind from the previous ones. It is a realization of $TL_L(d)$ on a Hilbert space that is not $\cH_L$, so it is not a rewriting of the singlet model but a second model built on the same algebra; it will be our reference chain in Section~\ref{sec:foel_all}. Recall $d=q+q^{-1}$ for $q > 1$, or equivalently, $q = \tfrac12\big(d+\sqrt{d^2-4}\,\big)$, and put $\Delta = d/2$. Let $\mu_d : TL_L(d) \to \cB\big((\cc^2)^{\otimes L}\big)$ be determined by
\begin{equation}\label{eq:xxz_kink_interactions}
    \mu_d(e_x) = \ket{\psi_q}\bra{\psi_q}
\end{equation}
where $\ket{\psi_q} = q^{1/2}\ket{\uparrow \downarrow} - q^{-1/2}\ket{\downarrow \uparrow}$ is the $U_q(\mathfrak{sl}_2)$ singlet. In terms of spin operators,
\[
    \mu_d(e_x) = -2\big(S^1_xS^1_{x+1}+S^2_xS^2_{x+1}+\Delta\,S^3_xS^3_{x+1}\big)
    + \tfrac{q-q^{-1}}{2}\big(S^3_x - S^3_{x+1}\big) + \tfrac{\Delta}{2}\,\ds\,,
\]
in which the middle term telescopes on summation, so that
\begin{equation}\label{eq:xxz_kink}
    \mu_d\left(\sum_{x=1}^{L-1}e_x\right)
    = -2\sum_{x=1}^{L-1}\big(S^1_xS^1_{x+1}+S^2_xS^2_{x+1}+\Delta\,S^3_xS^3_{x+1}\big)
    + \tfrac{\sqrt{d^2-4}}{2}\big(S^3_1-S^3_L\big) + \tfrac{\Delta}{2}(L-1)\,\ds\,.
\end{equation}
A direct computation verifies \eqref{eq:TL_relations} at $\lambda = d$. Multiplying by a factor of $\frac{1}{d}$ gives the ferromagnetic $XXZ$ chain at anisotropy $\Delta = d/2$ carrying the kink boundary fields as presented by \cite{nachtergaele2004ferromagnetic}, who show the spectral gap of the model on an open chain of length $L$ is $1-\frac{2}{d}\,\cos\left(\tfrac{\pi}{L}\right)$; see also \cite{koma1997spectral}. The boundary fields are what make the Hamiltonian $U_q(\mathfrak{sl}_2)$ invariant. The resulting decomposition is what we use in Section~\ref{sec:foel_all} to argue $\mu_d$ is faithful. 

It is not a coincidence that this is the spectral gap we obtain as well for the singlet model. All of these presentations of the singlet and $XXZ$ chains produce the same abstract Temperley--Lieb element $\sum_x e_x$ at the same loop parameter $d$. On the open chain this is already enough to force the spectra to agree (once we show faithfulness of the higher spin representations $\rho_d$ in Section~\ref{sec:reptheory}).

\begin{prop}[Isospectrality for open chains]\label{prop:isospectral}
Let $d \geq 2$ be an integer, $h = \sum_{x=1}^{L-1}e_x \in TL_L(d)$, and $\rho$ and $\mu$ be finite dimensional faithful $TL_L(d)$-representations. Then
\[
    \sigma\big(\rho(h)\big) = \sigma\big(\mu(h)\big)\,,
\]
that is, the two operators have the same set of eigenvalues.
\end{prop}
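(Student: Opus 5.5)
The plan is to show that the set of eigenvalues of $\rho(h)$ depends only on the abstract element $h\in TL_L(d)$, as soon as $\rho$ is finite dimensional and faithful. The argument uses the minimal polynomial of $h$ in the algebra. Since $TL_L(d)$ is finite dimensional (of dimension the Catalan number $C_L$), the powers $1,h,h^2,\dots$ are linearly dependent. Hence $h$ has a monic minimal polynomial $m_h\in\cc[t]$, namely the monic generator of the ideal $\{p : p(h)=0\}$ of $\cc[t]$.

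First, for a faithful $\rho$ and any polynomial $p$, we have $p(\rho(h))=\rho(p(h))$. By injectivity of $\rho$, this vanishes if and only if $p(h)=0$. So the minimal polynomial of the matrix $\rho(h)$ equals $m_h$. The same holds for $\mu$. (If the representations are not unital, one restricts to the image of the unit, which is an idempotent; for the representations in the paper $\rho(1)=\ds$, and I will assume unitality as is implicit in the definition.)

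Second, I will invoke the standard linear algebra fact that the eigenvalues of a matrix on a finite dimensional space are exactly the roots of its minimal polynomial. Every eigenvalue $\lambda$ with eigenvector $v$ satisfies $0=p(A)v=p(\lambda)v$, so $\lambda$ is a root. Conversely, every root $\lambda$ of the minimal polynomial divides off a factor $(t-\lambda)$, and minimality forces $A-\lambda$ to be non-invertible. Therefore $\sigma(\rho(h))=\{\text{roots of }m_h\}=\sigma(\mu(h))$.

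The only real subtlety is the faithfulness and unitality hypothesis, which the statement assumes. The step that needs care is making sure "faithful" is used on the full algebra including the unit, so that constant terms of polynomials are handled. Semisimplicity is not needed for this argument. It would only be needed for the finer statement that multiplicities are governed by the decomposition into simple modules, which we do not claim. Note also that multiplicities genuinely differ between $\rho$ and $\mu$, which is why only the set of eigenvalues is asserted.
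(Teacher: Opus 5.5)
Your argument is correct, but it takes a genuinely different route from the paper's.

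The paper simply invokes Corollary~\ref{cor:isospectral_general}, whose proof (Proposition~\ref{prop:isospectral_general}) is module-theoretic. It uses semisimplicity of $TL_L(d)$ to split an eigenvector of $\rho(h)$ into components in simple summands, each of which is again an eigenvector. It then uses Proposition~\ref{prop:faithful_iff} (via Artin--Wedderburn) to guarantee that the same simple module also occurs in the faithful module $\mu$.

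You instead observe that a faithful unital representation is an injective algebra homomorphism. Hence $\rho(h)$ and $\mu(h)$ share the minimal polynomial $m_h$ of $h$ in $TL_L(d)$, and the eigenvalues of a matrix are exactly the roots of its minimal polynomial.

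What your route buys:
\begin{itemize}
\item It needs neither semisimplicity nor any knowledge of the simple modules. It therefore works for every loop parameter, including non-semisimple ones, and sidesteps the separate justification of semisimplicity at $d=2$.
\item It gives a little more. Since $\rho_d(h)=d\,H_L$ is self-adjoint and $\rho_d$ is faithful, $m_h$ has simple roots, so $\mu(h)$ is diagonalizable in every representation $\mu$.
\end{itemize}

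What the paper's route buys is bookkeeping by sector. It shows that each eigenvalue is carried by a specific simple constituent $W_{L,k}$ present in both representations. This is the viewpoint the paper relies on later when $\cE(L,k)$ is computed in the $XXZ$ realization, whereas the minimal polynomial only sees the union over all sectors.

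Your unitality caveat is harmless. The paper's representations are unital. Even a non-unital faithful one would at most add the eigenvalue $0$, and $0$ is already a root of $m_h$ because $h$ acts as zero on $W_{L,0}$.
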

\begin{proof}
This is by a direct application of Corollary~\ref{cor:isospectral_general}.
\end{proof}

In particular the $SU(2)$ singlet model for spin $s$, the alternating $SU(d)$ chain, and the $O(d)$ singlet chain are mutually isospectral, and all three are isospectral to $\tfrac{1}{d}$ times the reference chain \eqref{eq:xxz_kink}. The multiplicities, by contrast, are highly realization-dependent, as we discuss in Section~\ref{sec:reptheory}.

Proposition~\ref{prop:isospectral} does not extend to $H^{\text{PBC}}_L$. The wrap-around term $P^{(0)}_{L,1}$ in \eqref{eq:hamiltonian_pbc} is not a product of the generators $E_1,\dots,E_{L-1}$, so the relevant algebra is the periodic Temperley--Lieb algebra, and the proof fails. The periodic spectra are not contained in that of any single $XXZ$ chain: as shown in \cite{aufgebauer2010quantum}, the Hilbert space decomposes instead into representations each isomorphic to a sector of an $XXZ$ chain with twisted boundary conditions, so that recovering the spectrum of $H^{\text{PBC}}_L$ requires a whole family of reference chains.

\subsection{The Infinite Volume Setting}
\label{sec:cstar}

We recall operator-algebraic vocabulary needed to formulate the infinite volume results in Section~\ref{sec:gap}; see \cite{nachtergaele2019quasilocality} for a complete treatment. Let $\cA$ denote the quasi-local algebra on $\zz$, that is, the norm closure of the local algebra $\bigcup_X \cA_X$, whose union is over finite subsets of the lattice and each $\cA_X = M_d(\cc)^{\otimes \abs{X}}$. An interaction $\Phi$ is a map between finite subsets of the lattice and observables supported on that subset, $X \mapsto \Phi(X) \in \cA_X$, such that each $\Phi(X)$ is self-adjoint. Hamiltonians on a finite region $\Lambda$ are given by sums of interaction terms over subsets of $\Lambda$: $H_\Lambda^\Phi = \sum_{X \subseteq \Lambda} \Phi(X)$. We now fix notation for the rest of the paper. The open chain interaction is
\begin{equation}\label{eq:interaction_obc}
    \Phi(X) = \begin{cases} P^{(0)}_{x,x+1} & X = \{x,x+1\}\,,\\ 0 & \text{otherwise,}\end{cases}
\end{equation}
so that $H_{[1,L]}^\Phi = H_L$. For a sequence of volumes $\Lambda_n = [-L_n,L_n]$ with $L_n \to \infty$, the periodic interactions are
\begin{equation}\label{eq:interaction_pbc}
    \Phi_n(X) = \begin{cases} P^{(0)}_{x,x+1} & X = \{x,x+1\} \subset \Lambda_n \text{ and } \ x=-L_n,\dots,L_n-1\,,\\ P^{(0)}_{L_n,-L_n} & X = \{L_n,-L_n\}\,,\\ 0 & \text{otherwise,}\end{cases}
\end{equation}
so that $H^{\Phi_n}_{\Lambda_n} = H_{2L_n+1}^{\text{PBC}}$. We denote $N_n = 2L_n+1$ for the length of $\Lambda_n$. Note $\Phi_n$ has terms only supported in $\Lambda_n$, while $\Phi$ has nonzero terms on all bonds of $\zz$. We let $\tau_t$ denote the infinite volume dynamics coming from $\Phi$, and $i\delta$ its generator. We similarly define $\tau_t^{\Phi_n}$ and $i\delta^{\Phi_n}$.

The fully polarized vectors define states $\omega_\Lambda(A) = \bra{s\cdots s}A\ket{s\cdots s}$ on $\cA_\Lambda$, and the family $\{\omega_\Lambda\}_{\Lambda}$ (indexed by finite subsets of the lattice $\Lambda$) is consistent: $\omega_{\Lambda'}\big|_{\cA_\Lambda} = \omega_\Lambda$ whenever $\Lambda \subseteq \Lambda'$. Therefore it has a unique weak-$*$ limit
\begin{equation}\label{eq:omega}
    \omega : \cA \to \cc\,, \qquad \omega = \lim_{\Lambda \uparrow \zz}\omega_\Lambda\,.
\end{equation}
$\omega$ is a ground state of $(\cA, \tau_t)$ and of $(\cA, \tau_t^{\Phi_n})$ for every $n$. In Section~\ref{sec:other_ground_states} we give another family of finite volume ground states $\{\bar\omega_\Lambda\}$ which is consistent when indexed by \emph{intervals} $\Lambda$ rather than arbitrary finite subsets. These will be the maximally mixed ground states of the model, and they again have a unique weak-$*$ limiting infinite volume ground state. We write $(\cH_\omega, \pi, \Omega)$ for the GNS triple of $\omega$ with respect to $(\cA, \tau_t)$ and $H$ for the associated GNS Hamiltonian, i.e. the unique positive self-adjoint operator on $\cH_\omega$ with $H\Omega = 0$ and $e^{itH}\pi(A)e^{-itH} = \pi(\tau_t(A))$.

\subsection{Main Results}
\label{sec:results}

Our first result concerns the ordering of the energy levels across the summands $W_{L,k}\otimes U_k$ into which $\cH_L$ decomposes from the joint action of $TL_L(d)$ and its commutant $\rmEnd_{TL_L(d)}(\cH_L)$; the precise statement, and the use of link diagrams to equivalently label the sectors as $k$ does, are recalled in Section~\ref{sec:reptheory}. On the $k$-th summand $H_L$ acts as $\rho_{d,k}(H_L)\otimes\ds_{U_k}$ (see Proposition~\ref{prop:decomposition}), so its lowest energy there, which we write $\cE(L,k)$, is determined by the standard module alone, $U_k$ contributing only the multiplicity. We prove the analogue for the singlet model of the ferromagnetic ordering of energy levels established for the spin-$\tfrac12$ $XXZ$ chain in \cite{nachtergaele2004ferromagnetic}, with $k$ in the role of decreasing total spin.

\begin{thm}[Ferromagnetic ordering of energy levels]\label{thm:foel}
For every $L \geq 2$,
\begin{equation}\label{eq:foel}
    \cE(L,0) < \cE(L,1) < \cdots < \cE(L,\lfloor L/2\rfloor)\,.
\end{equation}
\end{thm}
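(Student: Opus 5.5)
The plan is to transfer the problem to the reference chain $\mu_d$ of \eqref{eq:xxz_kink_interactions} and invoke the ordering result of \cite{nachtergaele2004ferromagnetic} there. By Proposition~\ref{prop:decomposition}, $\cE(L,k)$ is the smallest eigenvalue of $\tfrac1d\,\rho_{d,k}(h)$, with $h=\sum_x e_x$, on the standard module $W_{L,k}$. This number depends only on the isomorphism class of the $TL_L(d)$-module $W_{L,k}$ and not on the realization in which it appears. It therefore suffices to exhibit every $W_{L,k}$, $0\le k\le \lfloor L/2\rfloor$, as a multiplicity space inside the $XXZ$ representation $\mu_d$ on $(\cc^2)^{\otimes L}$, with $k$ matched to a known symmetry label.

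\textbf{Step 1: decompose $\mu_d$.} Since $q>1$ is not a root of unity, $U_q(\mathfrak{sl}_2)$ acts semisimply on $(\cc^2)^{\otimes L}$ and commutes with $\mu_d(TL_L(d))$. Quantum Schur--Weyl duality, together with the double centralizer theorem (Theorem~\ref{thm:double_centralizer}), then gives
\[
(\cc^2)^{\otimes L}\cong\bigoplus_{J} M_J\otimes V_J^{q}, \qquad J=\tfrac L2-k .
\]
Here $M_J$ is the space of $U_q$-highest weight vectors of weight $J$, and it is an irreducible $TL_L(d)$-module. Counting highest weight vectors gives $\dim M_J=\binom{L}{k}-\binom{L}{k-1}$, which is exactly the number of link patterns with $k$ arcs, i.e.\ $\dim W_{L,k}$.

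\textbf{Step 2: identify $M_{L/2-k}\cong W_{L,k}$.} Dimension alone does not suffice, because distinct standard modules can share a dimension. I would use the through-line filtration instead. Let $I_{m}$ be the two-sided ideal of $TL_L(d)$ spanned by diagrams with at most $m$ through lines. The standard module with $k$ arcs is characterized among the irreducibles as the one annihilated by $I_{L-2k-2}$ but not by $I_{L-2k}$. On $M_J$, a diagram with $m<2J$ through lines factors through a product of $U_q$-singlets $\ket{\psi_q}$ on its caps. Its image therefore lies in the sum of $U_q$-spin sectors of spin at most $m/2<J$. Since the diagram commutes with $U_q$, it maps $M_J$ into spin-$J$ highest weight vectors, so it acts as zero on $M_J$. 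Conversely, the element of $I_{L-2k}$ built as $e_1e_3\cdots e_{2k-1}$ followed by the Jones--Wenzl projector on the remaining $L-2k$ strands acts nonzero on a suitable $q$-symmetric vector in $M_J$. This matches labels. Faithfulness of $\mu_d$, needed so that every $k$ occurs, also follows from this: every $W_{L,k}$ appears, and by semisimplicity the algebra is the direct sum of their endomorphism algebras.

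\textbf{Step 3: apply the $XXZ$ ordering.} Combining the two steps, $\cE(L,k)$ equals the minimal energy of $\tfrac1d\mu_d(h)$ in the $U_q(\mathfrak{sl}_2)$ total spin sector $J=\tfrac L2-k$. By \eqref{eq:xxz_kink}, this operator is the kink-boundary ferromagnetic $XXZ$ chain at $\Delta=d/2$. For that chain, \cite{nachtergaele2004ferromagnetic} proves that the lowest energy in spin sector $J$ is strictly decreasing in $J$, i.e.\ $E_{\min}(J)>E_{\min}(J+1)$ for $J<L/2$. Rewriting in terms of $k$ gives \eqref{eq:foel}.

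The main obstacle I expect is Step 2, the precise label matching. In particular, the nonvanishing of the Jones--Wenzl-type element on $M_J$ must be checked, and the conventions (orientation of $\ket{\psi_q}$, the direction of the kink fields) must produce $\mu_d(e_x)$ commuting with the same $U_q$ coproduct used in \cite{nachtergaele2004ferromagnetic}. I would also need to confirm that their result is stated with strict inequalities for all $L\ge2$ and at every $\Delta>1$; it is, since $\Delta=d/2\ge 3/2$ here. If strictness were only available non-strictly, I would fall back on the Perron--Frobenius argument used there, applied in the $S^3$ basis, which is the source of their strict inequality.
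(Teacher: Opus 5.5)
Your proposal is correct and follows the paper's own all-spin argument (Section~\ref{sec:foel_all}): reduce $\cE(L,k)$ to the abstract standard module via Lemma~\ref{lem:iso_spectrum}, realize $W_{L,k}$ as the multiplicity space of $U_q(\mathfrak{sl}_2)$-spin $J=L/2-k$ in the kink-boundary $XXZ$ chain $\mu_d$, and import the strict ordering of \cite{nachtergaele2004ferromagnetic}. The only difference is how the labels are matched. The paper builds the copies of $W_{L,k}$ explicitly, with $\ket{\psi_q}$ on the arcs and a spin-$(L/2-k)$ multiplet on the dots, as in Lemma~\ref{lem:local_action}; you instead use the through-line ideal filtration, which is equally valid and avoids re-verifying the local action rules for $\ket{\psi_q}$.
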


Since $\cE(L,0) = 0$ by frustration-freeness, Theorem~\ref{thm:foel} identifies the first excited energy of $H_L$ as $\cE(L,1)$, and the sector $k=1$ is small enough to be diagonalized in closed form. This yields the exact finite volume spectral gaps.

\begin{cor}\label{cor:fv_gap}
For every $L \geq 2$, the spectral gap of $H_L$ is
\begin{equation}\label{eq:fv_gap}
    \gamma_L = 1 - \frac{2}{d}\cos\!\left(\frac{\pi}{L}\right)\,.
\end{equation}
In particular the infimum of the finite volume gaps is $\inf_L \gamma_L = 1 - \tfrac{2}{d} > 0$.
\end{cor}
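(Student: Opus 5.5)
The plan is to reduce the statement to two facts: that $\cE(L,1)$ is the first excited energy of $H_L$, which Theorem~\ref{thm:foel} supplies, and an explicit diagonalization of $H_L$ on the standard module $W_{L,1}$. By Proposition~\ref{prop:decomposition}, $H_L$ acts on the $k$-th summand as $\rho_{d,k}(H_L)\otimes\ds_{U_k}$. The spectrum of $H_L$ is therefore the union over $k$ of the spectra of $\rho_{d,k}(H_L)$. The sector $k=0$ contributes exactly the ground state energy $0$, since $W_{L,0}$ is one-dimensional, spanned by the link pattern with no arcs, on which every $e_x$ acts as $0$. By Theorem~\ref{thm:foel} the smallest energy appearing in any sector $k\ge 1$ is $\cE(L,1)>0$. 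Hence $\gamma_L=\cE(L,1)$.

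To compute $\cE(L,1)$, I would use the link-pattern basis of $W_{L,1}$. Its elements are the $L-1$ patterns $v_j$, $j=1,\dots,L-1$, each having a single arc joining $j$ and $j+1$, with the remaining $L-2$ sites through-lines. The standard-module action is then read off diagrammatically:
\begin{itemize}
\item $e_j v_j = d\,v_j$, since a closed loop is formed;
\item $e_{j\pm1} v_j = v_{j\pm1}$, by the isotopy $e_xe_{x\pm1}e_x=e_x$;
\item $e_x v_j = 0$ whenever $\{x,x+1\}\cap\{j,j+1\}=\emptyset$, because the result would cap two through-lines, which is zero in the standard module.
\end{itemize}
Consequently $\sum_x e_x$ acts on $W_{L,1}$ as the $(L-1)\times(L-1)$ tridiagonal matrix $d\,\ds + A$, where $A$ is the adjacency matrix of the path on $L-1$ vertices. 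The eigenvalues of $A$ are $2\cos(\pi m/L)$ for $m=1,\dots,L-1$. Dividing by $d$ as in \eqref{eq:hamiltonian_TL}, the spectrum of $H_L$ on this sector is $1+\frac{2}{d}\cos(\pi m/L)$. Its minimum, attained at $m=L-1$, is $1-\frac2d\cos(\pi/L)$.

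As a cross-check, this matches the value $1-\frac2d\cos(\pi/L)$ quoted from \cite{nachtergaele2004ferromagnetic} for the reference chain $\frac1d\mu_d(h)$. Alternatively, Proposition~\ref{prop:isospectral}, together with the observation that both gaps are the lowest nonzero eigenvalue of the common spectrum, yields the formula directly, provided $\rho_d$ is faithful.

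The main obstacle I anticipate is justifying the action on $W_{L,1}$ within the paper's conventions. I expect it to follow from the description of the standard modules in Section~\ref{sec:reptheory}, in particular the vanishing rule for capping two through-lines; I would verify it on small $L$ before relying on it. The final claim needs only monotonicity: $\cos(\pi/L)$ increases to $1$ as $L\to\infty$, so $\gamma_L$ decreases strictly to $1-\frac2d$. Hence $\inf_L\gamma_L=1-\frac2d$, which is positive because $d\geq3$.
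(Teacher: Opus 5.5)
Your proposal is correct and follows the paper's proof: Theorem~\ref{thm:foel} reduces $\gamma_L$ to $\cE(L,1)$, and the one-arc sector is then diagonalized as nearest-neighbour hopping on a path of $L-1$ vertices. The only difference is that you compute in the abstract standard module, where the off-diagonal entries are $+1$. The paper instead works in the physical basis $\Psi_{\pi,V}$ via Lemma~\ref{lem:local_action} and carries a factor $(-1)^{2s}$. That sign is harmless because the set $\{\cos(\pi m/L)\}_{m=1}^{L-1}$ is symmetric under sign change, and your passage to the abstract module is exactly what \eqref{eq:E_intrinsic} together with Lemma~\ref{lem:iso_spectrum} licenses.
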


Theorem~\ref{thm:foel} and Corollary~\ref{cor:fv_gap} are proved in Section~\ref{sec:foel}. Our second main result refers to the GNS gap, and shows that the value $1-\tfrac{2}{d}$ appearing in Corollary~\ref{cor:fv_gap} is the exact gap of the GNS Hamiltonian $H$.

\begin{thm}[GNS gap]\label{thm:gns_gap}
Let $H$ be the GNS Hamiltonian of the fully polarized ground state $\omega$ defined above. Then $H$ has spectral gap exactly $1 - \tfrac{2}{d}$.
\end{thm}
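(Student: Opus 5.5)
The proof has two halves. The upper bound comes from an explicit band of excitations with energies reaching down to $1-\tfrac2d$. The matching lower bound comes from Knabe's finite size criterion applied to the periodic chains $H^{\text{PBC}}_{N}$, whose gap controls the GNS Hamiltonian through the periodic interactions $\Phi_n$.

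\emph{Upper bound.} Following Parkinson's one-site deviate states, for each momentum $p$ I will consider vectors in the GNS space of the form
\[
\Psi_p = \sum_{x} e^{ipx}\, \pi(A_x)\Omega,
\]
where $A_x$ maps $\ket{s}_x\ket{s}_{x+1}$ into a two-site state with a singlet component. Concretely, I take $A_x$ so that $\pi(A_x)\Omega$ equals $\ket{s,-s}_{x,x+1}$ tensored with polarized spins elsewhere, suitably normalized. The essential point is that $P^{(0)}$ acting on $\ket{s}\ket{-s}$ produces $\tfrac1d\ket{\psi}$. The operators $E_{x\pm1}$ then shuffle $\ket{\psi}$ along, because $\ket{s}_y\ket{\psi}_{y+1,y+2}$ and $\ket{\psi}_{y,y+1}\ket{s}_{y+2}$ are related through the TL relation $e_xe_{x\pm1}e_x=e_x$. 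On the subspace spanned by $\{\ket{s\cdots s\,\psi_{x,x+1}\, s\cdots s}\}$, the Hamiltonian therefore acts as a tight-binding hopping operator: diagonal entry $1$, hopping amplitude $\pm\tfrac1d$ (the sign tracked via the reversal rule \eqref{eq:reversal}). Its dispersion is $1-\tfrac2d\cos p$, which is exactly the $k=1$ sector computation behind Corollary~\ref{cor:fv_gap}. Taking localized wave packets near $p=0$, normalized in $\ell^2(\zz)$, and checking that they are orthogonal to $\Omega$ gives Rayleigh quotients converging to $1-\tfrac2d$. Hence $\inf(\sigma(H)\setminus\{0\})\le 1-\tfrac2d$, and the spectrum indeed contains $[1-\tfrac2d,\,1+\tfrac2d]$.

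\emph{Lower bound.} The route is through Knabe's criterion. Since $H_L$ is frustration free with local projections $P^{(0)}_{x,x+1}$, Knabe's inequality bounds the periodic gap $\gamma^{\text{PBC}}_N$ below by $\tfrac{n-1}{n-2}\big(\gamma_n-\tfrac1{n-1}\big)$ (or the appropriate version for frustration-free nearest-neighbor chains), with $\gamma_n$ taken from Corollary~\ref{cor:fv_gap}. Substituting $\gamma_n = 1-\tfrac2d\cos(\pi/n)$ and sending $n\to\infty$ yields $\liminf_N \gamma^{\text{PBC}}_N \ge 1-\tfrac2d$; this is where the exact formula is used, not just its infimum. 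To transfer this to $H$, I will use the fact that $\omega$ is the ground state of each periodic system and that $\tau^{\Phi_n}_t\to\tau_t$ strongly. For local $A$ with $\omega(A)=0$, the expression $\omega(A^*[H^{\text{PBC}}_{N_n},A])$ coincides with $\omega(A^*\delta(A))$ once $\Lambda_n$ contains a neighborhood of the support. The variational inequality $\omega(A^*\delta^{\Phi_n}(A)) \ge \gamma^{\text{PBC}}_{N_n}\,\omega(A^*(1-P_0)A)$ then passes to the limit. Since $\pi(\cA_{\text{loc}})\Omega$ is a core, this gives $H\ge(1-\tfrac2d)(1-P_{\ker H})$.

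\emph{Main obstacle.} I expect the hardest step to be this last transfer. Beyond the routine core argument, one needs that the ground-state projection $P_0$ of $H^{\text{PBC}}_{N_n}$ behaves compatibly with $\ket{\Omega}\bra{\Omega}$ in the GNS space. The periodic ground space is highly degenerate (of dimension $[L+1]_q$-like), so $\omega(A^*P_0A)$ need not be $|\omega(A)|^2$, and the limit must be shown to still project onto $\ker H$. My first attempt will be to identify $\ker H=\cc\Omega$ directly: any vector in $\ker H$ is a limit of ground states of $H_L$ agreeing with $\ket{s\cdots s}$ outside a finite region, and the $U_k$-sector decomposition forces those to be polarized. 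Failing that, I will run Knabe's argument directly for local GNS vectors using open-chain gaps, avoiding the periodic chains altogether.
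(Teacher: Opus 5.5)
\textbf{The gap is in the lower bound: transferring the periodic gap to $H$.} Your variational inequality leaves $\omega\big(A^*(\ds-P_0^{(n)})A\big)$ on the right-hand side. Your first plan for passing $P_0^{(n)}$ to the limit is to prove $\ker H=\cc\Omega$, and that is false for this model. If $A_x$ lowers the spin at $x$, then $\pi(A_x)\Omega\perp\Omega$, yet $H\pi(A_x)\Omega=0$, because $\ket{s,s-1}$ and $\ket{s-1,s}$ have no singlet component. The paper records this in Section~\ref{sec:goldstone}: $\ker H$ is infinite dimensional.

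The same vector refutes your sector argument, since $\ket{s-1}_x$ lies in the $k=0$ sector $U_0$ (of dimension $[L+1]_q$) and is not polarized. It also shows concretely that $P_0^{(n)}$ cannot be compatible with $\ket{\Omega}\bra{\Omega}$: for every $n$ one has $\omega(A_x^*P_0^{(n)}A_x)=\omega(A_x^*A_x)=1$, while $|\omega(A_x)|^2=0$. So any target of the form $H\geq(1-\tfrac2d)(\ds-\ket{\Omega}\bra{\Omega})$ is simply false. Your open-chain fallback is too undeveloped to close the gap. It would need Knabe's identity rerun with the projections $\pi(h_x)$ on $\pi(\cA_{\text{loc}})\Omega$, plus a core statement for $H^2$.

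\textbf{The missing idea} is to avoid $\ker H$ and $P_0^{(n)}$ altogether and transfer the gap through the functional calculus. This uses the convergence $\tau^{\Phi_n}_t\to\tau_t$ that you mention but never exploit.
\begin{itemize}
\item For $E\in(0,1-\tfrac2d)$, choose $\epsilon<E$ so that $(E-\epsilon,E+\epsilon)$ misses $\sigma(H^{\Phi_n}_{\Lambda_n})$ for all large $n$. Your $\liminf$ Knabe bound suffices for this.
\item Take $f\in\mathcal{S}(\rr)$ with $\hat f\in C_c^\infty((E-\epsilon,E+\epsilon))$ and $\hat f(E)=1$.
\item Then $\langle\pi(A)\Omega,\hat f(H)\pi(B)\Omega\rangle=\omega(A^*\tau_f(B))=\lim_n\omega(A^*\tau^{\Phi_n}_f(B))=\lim_n\omega_{\Lambda_n}\big(A^*\hat f(H^{\Phi_n}_{\Lambda_n})B\big)=0$.
\item By density $\hat f(H)=0$, so $E\notin\sigma(H)$. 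This is Proposition~\ref{prop:spectrum_transfer}.
\end{itemize}
If you want to keep your variational form, the same convergence gives weak convergence of the spectral measure of $H^{\Phi_n}_{\Lambda_n}$ at $A\ket{s\cdots s}$ to that of $H$ at $\pi(A)\Omega$. The Portmanteau theorem on the closed set $\{0\}$ then yields $\limsup_n\omega(A^*P_0^{(n)}A)\le\langle\pi(A)\Omega,P_{\ker H}\,\pi(A)\Omega\rangle$, which is exactly what your inequality needs.

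\textbf{The rest of your plan is essentially fine.}
\begin{itemize}
\item Sending $n\to\infty$ in Knabe works. Note that $\eta_m=1-\tfrac2d\,\tfrac{m}{m-1}\cos\tfrac{\pi}{m+1}$ equals $1-\tfrac2d$ at $m=2$ and is strictly smaller for $m\geq3$. So your route reaches the value only in the limit and relies on FOEL. The paper's choice $m=2$ needs only the three-site spectrum and gives the bound for every $N\geq6$.
\item Your bond-singlet wave packets are the paper's $Y_k$ eigenvectors, built directly on $\zz$. Three repairs are needed. The vectors have Gram matrix $d\,\ds+(-1)^{2s}\Delta$; this is harmless, since it commutes with the hopping matrix and is bounded below by $d-2>0$. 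For integer $s$ the band bottom is at $p=\pi$, not $p=0$. And since $\ker H\neq\cc\Omega$, you should conclude via the Weyl estimate $\|(H-\lambda)\Phi\|\to0$, not via Rayleigh quotients of vectors orthogonal to $\Omega$.
\end{itemize}
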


The upper bound is obtained in Section~\ref{sec:upper} by exhibiting an explicit Weyl sequence built from single- and two-site excitations of the fully polarized state, which shows that the essential spectrum of $H$ contains the whole band $[1-\tfrac2d, 1+\tfrac2d]$. The matching lower bound is obtained in Section~\ref{sec:lower} from Knabe's finite size criterion applied at two bonds, together with an exact diagonalization of the three-site open chain. Related to the above theorem is the result of Proposition~\ref{prop:tracial_gapped}, which shows another infinite volume ground state of $(\cA, \tau_t)$ has GNS gap bounded below by $1-2/d$, namely the weak-$*$ limit obtained from the maximally mixed ground state on finite open chains.

Our third main result is an operator identity for the ground state projectors of the open chains. Let $P_L$ denote the orthogonal projection onto $\ker H_L$ and let $D_L = \Tr P_L$ be the ground state degeneracy. The operator identity is derived in standard texts on the Jones--Wenzl relations \cite{wenzl1987sequences, lickorish1997introduction, kauffman1994temperley}. The identification of the assumptions used in these derivations with the required properties of ground state projectors is new. The formula for the degeneracies was derived in \cite{aufgebauer2010quantum} by a different argument. 

\begin{thm}[Jones--Wenzl recursion]\label{thm:jw}
Define $P_1 = \ds$, $\Omega_0 = 1, \Omega_1 = d$ and $\Omega_L = d\,\Omega_{L-1} - \Omega_{L-2}$ for all $L \geq 2$. Then for all $L \geq 2$,
\begin{equation}\label{eq:jw_recursion}
    P_L = P_{L-1} - \frac{\Omega_{L-2}}{\Omega_{L-1}}\,P_{L-1}E_{L-1}P_{L-1}\,, \qquad E_{L-1} = d\,P^{(0)}_{L-1,L}\,.
\end{equation}
Consequently $D_L = \Omega_L$ for all $L \geq 2$, and writing $q > 1$ for the root of $q + q^{-1} = d$,
\begin{equation}\label{eq:degeneracy}
    D_L = [L+1]_q = \frac{q^{L+1}-q^{-(L+1)}}{q - q^{-1}}\,.
\end{equation}
Here $P_{L-1}$ is understood as the ground state projector on the first $L-1$ sites of $[1,L]$, that is the projector onto $\ker\, (H_{L-1} \otimes \ds)$.
\end{thm}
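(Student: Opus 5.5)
The plan is to prove \eqref{eq:jw_recursion} by induction on $L$. Alongside it we prove two auxiliary facts: that $P_{L-1}E_{L-1}P_{L-1}$ has a specific square, and that $\Omega_{L-1}>0$, so that the division makes sense.

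\textbf{Setup.} By frustration freeness, $\ker H_L=\bigcap_{x=1}^{L-1}\ker E_x$. Hence $P_L$ is the largest projection satisfying $E_xP_L=0$ for all $x\le L-1$. Moreover $P_{L-1}$ commutes with $E_{L-1}$'s neighbours only partially: it commutes with $E_x$ for $x\le L-3$, and it is annihilated by $E_x$ for $x\le L-2$.

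\textbf{Key algebraic identity.} The crucial claim is
\[
P_{L-1}E_{L-1}P_{L-1}E_{L-1}P_{L-1}=\frac{\Omega_{L-1}}{\Omega_{L-2}}\,P_{L-1}E_{L-1}P_{L-1}.
\]
Granting it, set $R=\frac{\Omega_{L-2}}{\Omega_{L-1}}P_{L-1}E_{L-1}P_{L-1}$. Then $R$ is self-adjoint with $R^2=R$, so it is a projection dominated by $P_{L-1}$, and therefore $P_{L-1}-R$ is a projection. To prove the identity, I expand $P_{L-1}$ using the induction hypothesis:
\[
P_{L-1}=P_{L-2}-c\,P_{L-2}E_{L-2}P_{L-2},\qquad c=\frac{\Omega_{L-3}}{\Omega_{L-2}}.
\]
Substituting this into $P_{L-1}E_{L-1}P_{L-1}E_{L-1}P_{L-1}$, I use three facts: $P_{L-2}$ commutes with $E_{L-1}$ (disjoint support); $E_{L-1}^2=dE_{L-1}$; and $E_{L-1}E_{L-2}E_{L-1}=E_{L-1}$. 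This yields
\[
P_{L-1}E_{L-1}P_{L-1}E_{L-1}P_{L-1}=P_{L-1}\bigl(E_{L-1}P_{L-1}E_{L-1}\bigr)P_{L-1},
\qquad
E_{L-1}P_{L-1}E_{L-1}=dP_{L-2}E_{L-1}-cP_{L-2}E_{L-1}=(d-c)P_{L-2}E_{L-1}.
\]
Since $P_{L-1}P_{L-2}=P_{L-1}$, the right-hand side reduces to $(d-c)\,P_{L-1}E_{L-1}P_{L-1}$. Finally, $d-\frac{\Omega_{L-3}}{\Omega_{L-2}}=\frac{\Omega_{L-1}}{\Omega_{L-2}}$ by the recursion for $\Omega$. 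This step is the heart of the proof, and it is the standard Jones--Wenzl argument. The base case $L=2$ is direct: $P_1=\ds$ and $P_2=\ds-\frac1dE_1=\ds-P^{(0)}$. For $L=3$ we take $P_0$-type conventions with $c=\Omega_0/\Omega_1$, or simply verify it by hand.

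\textbf{Identifying the projection with $P_L$.} Let $P'=P_{L-1}-R$. For $x\le L-2$ we have $E_xP'=0$, since $E_xP_{L-1}=0$. For $x=L-1$, the same identity gives
\[
E_{L-1}P_{L-1}E_{L-1}P_{L-1}=(d-c)E_{L-1}P_{L-1},
\]
so $P_{L-1}E_{L-1}P'=P_{L-1}E_{L-1}P_{L-1}-P_{L-1}E_{L-1}P_{L-1}=0$. To upgrade this to $E_{L-1}P'=0$, I compute $\|E_{L-1}P'v\|^2=\langle P'v,E_{L-1}^2P'v\rangle=d\langle v,P'E_{L-1}P'v\rangle$, and $P'E_{L-1}P'=P'P_{L-1}E_{L-1}P_{L-1}P'=0$ because $P'R=0$. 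Hence $P'\le P_L$. Conversely, if $v\in\ker H_L$, then $v=P_{L-1}v$ and $E_{L-1}v=0$, so $Rv=0$ and $P'v=v$. Therefore $P'=P_L$.

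\textbf{Degeneracy count.} Take traces. By the cyclicity of the trace,
\[
\Tr(P_{L-1}E_{L-1}P_{L-1})=\Tr(E_{L-1}P_{L-1}).
\]
Evaluating this requires the partial trace over site $L$ of $E_{L-1}$, which equals $\ds_{L-1}$; indeed, this is the standard conditional-expectation property $\Tr_L(E_{L-1})=\ds$ for the normalized singlet projector times $d$. Since $P_{L-1}$ acts trivially on site $L$, it follows that $\Tr(E_{L-1}P_{L-1})=\Tr_{[1,L-1]}(P_{L-1})=D_{L-1}$. Meanwhile $\Tr P_{L-1}$, viewed on $L$ sites, equals $dD_{L-1}$. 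Therefore
\[
D_L=dD_{L-1}-\frac{\Omega_{L-2}}{\Omega_{L-1}}D_{L-1}.
\]
By induction with $D_{L-1}=\Omega_{L-1}$, this gives $D_L=d\Omega_{L-1}-\Omega_{L-2}=\Omega_L$. The characteristic roots of the recursion are $q^{\pm1}$, and $[L+1]_q$ satisfies the same initial data, which proves \eqref{eq:degeneracy}. Since $[L+1]_q>0$, positivity of $\Omega_{L-1}$ follows, which justifies the division throughout. The main obstacle is the bookkeeping in the key identity, in particular handling the small-$L$ base cases where $P_{L-2}$ is interpreted as $\ds$.
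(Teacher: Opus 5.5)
Your proof is correct, and its core computation, $E_{L-1}P_{L-1}E_{L-1}=(d-c)P_{L-2}E_{L-1}$ with $c=\Omega_{L-3}/\Omega_{L-2}$, is exactly the one in the paper. The difference lies in how the candidate operator is identified with $P_L$.

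The paper treats $G_L=P_{L-1}-c_{L-1}P_{L-1}E_{L-1}P_{L-1}$ as an ansatz. It determines $c_{L-1}=1/(d-c_{L-2})$ by imposing $E_{L-1}G_L=0$, which uses $E_{L-1}P_{L-1}\neq 0$. It then appeals to its algebraic uniqueness criterion, Proposition~\ref{prop:uniqueness}: an operator annihilated on both sides by every interaction term, with $\ds-G_L$ in the non-unital algebra generated by the $E_x$, must equal $P_L$. In that route, idempotency and self-adjointness come for free.

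You instead verify the prescribed coefficient. You use the same computation to get $R^2=R$, so that $P_{L-1}-R$ is an orthogonal projection, and then compare its range with $\ker H_L$ in both directions. Your route is more elementary and self-contained: it needs only Hilbert-space facts and frustration freeness, with no interpolation-polynomial characterization of $\ds-P_\Lambda$, and it does not need $E_{L-1}P_{L-1}\neq 0$. The paper's route is set up to derive the coefficient rather than check it, and its criterion applies to any frustration-free model.

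Two small remarks:
\begin{itemize}
\item Your identity $E_{L-1}P_{L-1}E_{L-1}P_{L-1}=(d-c)E_{L-1}P_{L-1}$ already gives $E_{L-1}P'=0$ directly. The detour through $P_{L-1}E_{L-1}P'=0$ and the norm computation is therefore unnecessary.
\item With the stated convention $P_1=\ds$, the case $L=3$ is literally the general inductive step, so no separate convention is needed.
\end{itemize}
Positivity of $\Omega_L$ follows from the scalar recursion alone, so deferring it to the end is not circular.
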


This theorem also shows that the ground state degeneracy grows exponentially for our case of $s \geq 1$, in contrast with the linear growth $2sL+1$ for ferromagnetic chains discussed in Section~\ref{sec:instability}, at whose phase boundary the singlet model sits (see Figure~\ref{fig:Spin1 Diagram}). Theorem~\ref{thm:jw} is proved in Section~\ref{sec:projectors}, where the recursion is also used to characterize $P_L$ uniquely among operators satisfying an algebraic condition valid for any frustration free model. Section~\ref{sec:discussion} discusses two consequences of this exponential degeneracy: that the model evades the conclusion of Goldstone's theorem, breaking a continuous symmetry while remaining gapped, and that the gap of Theorem~\ref{thm:gns_gap} is unstable.

\section{Ferromagnetic Ordering of Energy Levels}
\label{sec:foel}

In this section we prove Theorem~\ref{thm:foel}. It can be shown for all spins using the results of \cite{nachtergaele2004ferromagnetic}, and the spectral equivalence between the $XXZ$ chain and the singlet model. This is argued in Section~\ref{sec:foel_all}. It can be shown directly as well only for half-odd integer spins using similar arguments to \cite{nachtergaele2004ferromagnetic}, with the restriction on $s$ due to the parity $(-1)^{2s}$ of the singlet vector. This is done in the remaining sections.

\subsection{Decomposition of the Chain into Symmetry Sectors}
\label{sec:reptheory}

The symmetry algebra of the singlet model is the commutant $\rmEnd_{TL_L(d)}(\cH_L)\subset\rmEnd(\cH_L)$ of the action of $TL_L(d)$ under $\rho_d$. The double centralizer theorem (Theorem~\ref{thm:double_centralizer}) gives that this algebra is semisimple, and that $\cH_L$ decomposes into sectors indexed by the irreducible representations of either algebra. We now describe these sectors concretely, following \cite{read2007enlarged}; see also \cite{kulish2003spin} and \cite{kulish2008quantum} for the description derived using the $R$-matrix formalism. We introduce a convenient basis for the simple modules of $TL_L(d)$.

\begin{definition}[Link patterns]\label{def:link}
A \emph{link pattern} on $L$ sites with $k$ arcs is a partition of a subset of $\{1,\dots,L\}$ of size $2k$ into $k$ pairs, drawn as arcs above the line of sites, subject to the two conditions that the arcs be non-crossing and that no arc span an unpaired site. Unpaired sites are called \emph{dots}, or through-lines; there are $L-2k$ of them. We write $B_{L,k}$ for the set of such patterns and $W_{L,k}$ for the complex vector space they span.
\end{definition}

Equivalently, and as in \cite{read2007enlarged}, a link pattern is a configuration of nested parentheses and dots obeying the usual typographical rules: each "(" is matched by a single ")", and no dot appears between a matched pair. The generators of $TL_L(d)$ act on $W_{L,k}$ diagrammatically: $e_x$ acts on the pattern $\pi$ by adjoining a cup-cap at the sites $x,x+1$ and reading off the resulting pattern, giving $d\pi$ if $x$ and $x+1$ are already joined by an arc of $\pi$, giving $0$ if both are dots, and giving a new pattern with the same number of arcs otherwise. This makes $W_{L,k}$ a $TL_L(d)$ module, the \emph{standard module}; for $d \geq 2$ these are irreducible and exhaust the irreducible representations as $k$ ranges over $0,1,\dots,\lfloor L/2\rfloor$. Figure~\ref{fig:linkpatterns} illustrates each of these cases; note in particular that the number of arcs is never increased, which is what makes $W_{L,k}$ invariant.

\begin{figure}[htbp]
\centering
\begin{tikzpicture}[scale=0.56]

\begin{scope}[yshift=0cm]
  \node[anchor=east,font=\small] at (0.0,0.5) {(a)};
  \lpsites{6}\lplabels{6}
  \lparc{3}{4}\lpthru{1}\lpthru{2}\lpthru{5}\lpthru{6}
  \node[font=\small] at (3.5,1.6) {$\pi$};
  \node[font=\small] at (7.9,0.5) {$\xrightarrow{\;e_3\;}$};
  \begin{scope}[xshift=9.4cm]
    \lpsites{6}\lplabels{6}
    \lparc{3}{4}\lpthru{1}\lpthru{2}\lpthru{5}\lpthru{6}
    \node[font=\small] at (3.5,1.6) {$d\,\pi$};
  \end{scope}
\end{scope}

\begin{scope}[yshift=-2.8cm]
  \node[anchor=east,font=\small] at (0.0,0.5) {(b)};
  \lpsites{6}\lplabels{6}
  \lparc{3}{4}\lpthru{1}\lpthru{2}\lpthru{5}\lpthru{6}
  \node[font=\small] at (3.5,1.6) {$\pi$};
  \node[font=\small] at (7.9,0.5) {$\xrightarrow{\;e_1\;}$};
  \node[font=\small] at (10.0,0.5) {$0$};
\end{scope}

\begin{scope}[yshift=-5.6cm]
  \node[anchor=east,font=\small] at (0.0,0.5) {(c)};
  \lpsites{6}\lplabels{6}
  \lparc{3}{4}\lpthru{1}\lpthru{2}\lpthru{5}\lpthru{6}
  \node[font=\small] at (3.5,1.6) {$\pi$};
  \node[font=\small] at (7.9,0.5) {$\xrightarrow{\;e_2\;}$};
  \begin{scope}[xshift=9.4cm]
    \lpsites{6}\lplabels{6}
    \lparc{2}{3}\lpthru{1}\lpthru{4}\lpthru{5}\lpthru{6}
    \node[font=\small] at (3.5,1.6) {$\pi'$};
  \end{scope}
\end{scope}

\begin{scope}[yshift=-8.6cm]
  \node[anchor=east,font=\small] at (0.0,0.5) {(d)};
  \lpsites{6}\lplabels{6}
  \lparc{2}{3}\lparc{4}{5}\lpthru{1}\lpthru{6}
  \node[font=\small] at (3.5,1.9) {$\tilde\pi$};
  \node[font=\small] at (7.9,0.5) {$\xrightarrow{\;e_3\;}$};
  \begin{scope}[xshift=9.4cm]
    \lpsites{6}\lplabels{6}
    \lparc{3}{4}\lparc{2}{5}\lpthru{1}\lpthru{6}
    \node[font=\small] at (3.5,1.9) {$\tilde\pi'$};
  \end{scope}
\end{scope}

\end{tikzpicture}
\caption{The action of the generators on link patterns, drawn on $L=6$ sites with arcs above the line and through-lines as vertical segments.}
\label{fig:linkpatterns}
\end{figure}

The realization of $W_{L,k}$ inside $\cH_L$ is obtained by placing a spin singlet $\ket{\psi}$ on each arc and a suitable vector on the dots. Let $\pi$ be a link pattern with arcs $(i_1<i_1'),\dots,(i_k<i_k')$, and label the remaining sites (the dots) as $j_1,...,j_{L-2k}$. Define
\begin{equation}\label{eq:jwp}
    U_k = \bigcap_{x=1}^{L-2k-1}\ker P^{(0)}_{j_x,j_{x+1}} \subset V_s^{\otimes (L-2k)}\,.
\end{equation}
Under the order-preserving identification of the dot sites with $\{1,\ldots,L-2k\}$, this definition depends only on the number $L-2k$ of dots, and not on their positions in a particular link pattern. This space is nonzero because the fully polarized state $\ket{s\cdots s}$ lives inside it. By Theorem~\ref{thm:jw}, this space can also be characterized as $\Im \rho_d(p_{L-2k})$ where $p_{L-2k} \in TL_{L-2k}(d)$ is the Jones--Wenzl projector. As a consequence, the dimension of $U_k$ is $[L-2k+1]_q$. That this projector is an element of the Temperley--Lieb algebra follows from standard arguments of its construction. Then to such a pattern $\pi$, and vector $\ket{V} \in U_k$ on the dots we associate
\begin{equation}\label{eq:pattern_vector}
    \Psi_{\pi, V} = \ket{V}_{\mathrm{dots}} \otimes \bigotimes_{r=1}^{k}\ket{\psi}_{i_r i_r'} \in \cH_L\,.
\end{equation}
The condition $\ket{V} \in U_k$ says precisely that applying an interaction term to any two consecutive dots gives $0$. The generators $\{E_x\}_{x=1}^{L-1}$ of $TL_L(d)$ under the representation $\rho_d$ act on these states according to the following lemma. This is the diagrammatic action following Definition~\ref{def:link}.

\begin{lem}[Local action of the interaction]\label{lem:local_action}
Let $\ket{v} \in V_s$ denote the tensor factor of a term in $\ket{V}$ at the indicated dot. Then the interaction $P^{(0)}_{x,x+1}$ acts on $\Psi_{\pi,V}$ as follows.
\begin{enumerate}
    \item[\textup{(C0)}] If $x,x+1$ are joined by an arc of $\pi$, then $P^{(0)}_{x,x+1}\Psi_{\pi,V} = \Psi_{\pi,V}$.
    \item[\textup{(C1)}] If $x$ and $x+1$ are both dots, then $P^{(0)}_{x,x+1}\Psi_{\pi,V} = 0$.
    \item[\textup{(C2)}] If $x$ is a dot and $x+1$ belongs to the arc $(x+1,l)$ with $l > x+1$, then
    \[P^{(0)}_{x,x+1}\Big(\ket{v}_x \otimes \ket{\psi}_{x+1,l}\Big) = \frac{(-1)^{2s}}{d}\,\ket{\psi}_{x,x+1}\otimes \ket{v}_l\,.\]
    \item[\textup{(C3)}] If $x$ belongs to the arc $(l,x)$ with $l<x$ and $x+1$ is a dot, then
    \[P^{(0)}_{x,x+1}\Big(\ket{\psi}_{l,x}\otimes\ket{v}_{x+1}\Big) = \frac{(-1)^{2s}}{d}\,\ket{v}_l \otimes \ket{\psi}_{x,x+1}\,.\]
    \item[\textup{(C4)}] If $x$ and $x+1$ belong to two distinct arcs $(l,x)$ and $(x+1,m)$, then
    \[P^{(0)}_{x,x+1}\Big(\ket{\psi}_{l,x}\otimes\ket{\psi}_{x+1,m}\Big) = \frac{(-1)^{2s}}{d} \,\ket{\psi}_{l,m}\otimes\ket{\psi}_{x,x+1}\,,\]
    and similarly in the other two cases where $x$ and $x+1$ are on distinct arcs: $x<x+1<m<l$ and $m<l<x<x+1$.
\end{enumerate}
\end{lem}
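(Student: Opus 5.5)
The approach is a direct computation that reduces every case to one two-site identity for the singlet vector. Since $P^{(0)}_{x,x+1} = \frac1d \ket{\psi}_{x,x+1}\bra{\psi}_{x,x+1}$, and all tensor factors not touching $x,x+1$ or their arc partners are spectators, I will write each case as a contraction of $\bra{\psi}_{x,x+1}$ against the relevant two- or four-site piece. In every case the output must be a multiple of $\ket{\psi}_{x,x+1}$ tensored with whatever survives on the partner sites. So I only need to compute that surviving vector. Cases (C0) and (C1) are immediate. For (C0), $\braket{\psi}{\psi}=d$ gives $P^{(0)}\ket{\psi}=\ket{\psi}$. For (C1), $V\in U_k$ means, by the definition \eqref{eq:jwp}, that $P^{(0)}$ applied to any two consecutive dots annihilates $\ket V$. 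Sites $x,x+1$ are consecutive dots in the order-preserving labelling, so the claim follows. (The statement is phrased with $\ket v$ a tensor factor of a term in $V$; for (C2)–(C4) the identities are linear in $\ket v$, so they extend to $\ket V$.)

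The key step is a ``snake'' or partial-contraction identity. For any $\ket v\in V_s$, I will show
\[
\big(\bra{\psi}_{x,y}\otimes \ds_z\big)\big(\ket v_x\otimes\ket\psi_{y,z}\big) = c\,\ket v_z .
\]
Writing $\ket v=\sum_\gamma v_\gamma\ket\gamma$ and using the explicit coefficients $(-1)^{s-\alpha}$ from \eqref{eq:singlet_vector}, the left side equals
\[
\sum_{\alpha,\beta}(-1)^{s-\alpha}(-1)^{s-\beta}\,v_\alpha\,\delta_{\beta,-\alpha}\ket{-\beta}_z = \sum_\alpha (-1)^{2s}v_\alpha\ket{\alpha}_z ,
\]
since $(-1)^{s-\alpha}(-1)^{s+\alpha}=(-1)^{2s}$. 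Hence $c=(-1)^{2s}$. Dividing by $d$ from the projector normalization gives (C2) exactly.

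For (C3), I will either repeat the computation with the contraction on the other side or apply the reversal rule \eqref{eq:reversal} twice, and check that the two signs combine to a single $(-1)^{2s}$. For (C4), I apply the snake identity with $\ket v$ replaced by the left leg of $\ket\psi_{l,x}$. This means contracting $\bra\psi_{x,x+1}$ against $\ket\psi_{l,x}\otimes\ket\psi_{x+1,m}$ yields $(-1)^{2s}\ket\psi_{l,m}$, and again the prefactor $\frac1d$ comes from $P^{(0)}$. The nested configurations ($x<x+1<m<l$ and the mirror case) follow the same way, after using \eqref{eq:reversal} to put each singlet in the orientation the identity requires.

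The main obstacle is sign bookkeeping. One difficulty is keeping the orientation conventions of $\ket\psi_{a,b}$ consistent, so that each case really produces exactly one factor $(-1)^{2s}$. The other is the nested cases of (C4), where the resulting singlet's orientation ($\ket\psi_{l,m}$ versus $\ket\psi_{m,l}$) must be stated carefully; there the sign may have to be absorbed via \eqref{eq:reversal}. I will handle this by tracking the sign index by index rather than diagrammatically.
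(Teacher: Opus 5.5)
Your proposal is correct and follows essentially the same route as the paper: (C0) by idempotency, (C1) from the definition of $U_k$ (since $x,x+1$ are consecutive dots), and (C2)--(C4) by a single index contraction of $\bra{\psi}_{x,x+1}$ in which $(-1)^{s-\alpha}(-1)^{s+\alpha}=(-1)^{2s}$ produces the sign; you package that contraction as one reusable snake identity and extend it by linearity and the reversal rule, where the paper redoes it case by case. Your caution about orientation in the nested (C4) cases is well placed: doing the contraction there gives $(-1)^{2s}\ket{\psi}_{m,l}$ with the left endpoint $m<l$ in the first slot, which is the form the later sign argument for $\Theta$ needs and which the paper's ``similarly'' leaves implicit.
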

\begin{proof}
(C0) is idempotency and (C1) is the defining property \eqref{eq:jwp} of $U_k$. For the remaining cases everything follows from a single contraction. Writing $P^{(0)}_{x,x+1} = \tfrac1d\sum_{\ga,\eta}(-1)^{2s-\ga-\eta}\ket{\ga,-\ga}_{x,x+1}\bra{\eta,-\eta}_{x,x+1}$ and $\ket{v} = \sum_\bet v_\bet\ket{\bet}$, case (C2) (and similarly (C3)) reads
\begin{align*}
    P^{(0)}_{x,x+1}\Big(\ket{v}_x \otimes \ket{\psi}_{x+1,l}\Big)
    &= \frac{1}{d}\sum_{\bet,\al,\ga,\eta} v_\bet(-1)^{s-\al}(-1)^{2s-\ga-\eta}\,\delta_{\eta,\bet}\,\delta_{-\eta,\al}\,\ket{\ga}_x\ket{-\ga}_{x+1}\ket{-\al}_l\\
    &= \frac{1}{d}\sum_{\bet,\ga} v_\bet(-1)^{s+\bet}(-1)^{2s-\ga-\bet}\,\ket{\ga}_x\ket{-\ga}_{x+1}\ket{\bet}_l\\
    &= \frac{(-1)^{2s}}{d}\,\Big[\sum_\ga (-1)^{s-\ga}\ket{\ga,-\ga}_{x,x+1}\Big]\otimes \sum_\bet v_\bet \ket{\bet}_l\\
    &= \frac{(-1)^{2s}}{d}\,\ket{\psi}_{x,x+1} \otimes \ket{v}_l\,.
\end{align*}

The sub-case where $l < x < x+1 < m$ in Case (C4) (and similarly for the other sub-cases of (C4)) reads
\begin{align*}
    P^{(0)}_{x,x+1}&\Big(\ket{\psi}_{l,x}\otimes\ket{\psi}_{x+1,m}\Big)\\
    &= \frac{1}{d}\sum_{\al,\bet,\ga,\eta}(-1)^{s-\al}(-1)^{s-\bet}(-1)^{2s-\ga-\eta}\,\delta_{\eta,-\al}\,\delta_{-\eta,\bet}\;\ket{\al}_l\ket{\ga}_x\ket{-\ga}_{x+1}\ket{-\bet}_m\\
    &= \frac{1}{d}\sum_{\al,\ga}(-1)^{s-\al}(-1)^{s-\al}(-1)^{2s-\ga+\al}\;\ket{\al}_l\ket{\ga}_x\ket{-\ga}_{x+1}\ket{-\al}_m\\
    &= \frac{(-1)^{2s}}{d}\Big[\sum_{\al}(-1)^{s-\al}\ket{\al}_l\ket{-\al}_m\Big]\otimes\Big[\sum_{\ga}(-1)^{s-\ga}\ket{\ga,-\ga}_{x,x+1}\Big]\\
    &= \frac{(-1)^{2s}}{d}\,\ket{\psi}_{l,m}\otimes\ket{\psi}_{x,x+1}\,.
\end{align*}
\end{proof}

This allows us to conclude the argument there is a copy of $W_{L,k}$ in $\cH_L$. Fix $k \leq \lfloor L/2\rfloor$ and a nonzero $\ket{V} \in U_k$. Set $\Sigma(\pi) = \sum_r i_r$, the sum of the left endpoints of the arcs of $\pi$, and $\varepsilon(\pi) = (-1)^{2s\Sigma(\pi)}$. Every arc has odd length: if $(i,j)$ is an arc, the sites $i+1,...,j-1$ are not dots, since no arc spans a dot, and being non-crossing they are matched among themselves, so $j-i$ is odd.

The rules for the arcs say they span a subset of the chain of odd length. Suppose now that one of the cases (C2)--(C4) of Lemma~\ref{lem:local_action} applies, so that $\pi' \coloneq e_x\pi$ is a pattern different from $\pi$, and put $\sigma = \Sigma(\pi')-\Sigma(\pi)$. Only the arcs meeting $\{x,x+1\}$ change. In the case (C2) where $(x+1,l) \rightsquigarrow (x,x+1)$, we have $\sigma = -1$. In the case (C3) where $(l,x) \rightsquigarrow (x,x+1)$, we have $\sigma = x-l$ is odd as $(l,x)$ was an arc. In the first subcase of (C4) where $(l,x),(x+1,m) \rightsquigarrow (l,m),(x,x+1)$, we have $\sigma = -1$. Similar arguments for the other two subcases of (C4) give $\sigma$ is odd in every case and so
\[
    \varepsilon(\pi') = (-1)^{2s\Sigma(\pi)}\big[(-1)^{2s}\big]^{\sigma} = (-1)^{2s}\,\varepsilon(\pi)\,.
\]
Define the map $\Theta(\pi) = \varepsilon(\pi)\Psi_{\pi,V}$. Therefore, in cases (C2)--(C4) we have
\begin{align*}
E_x \Theta(\pi) &= (-1)^{2s}\varepsilon(\pi)\Psi_{\pi',V} \\
                &= (-1)^{2s}\varepsilon(\pi)\varepsilon(\pi')\Theta(\pi') \\
                &= (-1)^{2s}\varepsilon(\pi)\varepsilon(\pi')\Theta(e_x\pi) \\
                &= \left((-1)^{2s}\right)^2\varepsilon(\pi)^2\Theta(e_x\pi) = \Theta(e_x\pi)
\end{align*}
In case (C0) both sides acquire the factor $d$ and in case (C1) both vanish. Hence $\Theta$ intertwines the action of $TL_L(d)$ on $W_{L,k}$ with its action on $\cH_L$. $\Theta$ is nonzero since $\Psi_{\pi,V} \neq 0$ for $\ket{V} \neq 0$. By simplicity of $W_{L,k}$, $\Theta$ is injective. Its image is therefore a copy of $W_{L,k}$ inside $\cH_L$, and in particular the vectors $\{\Psi_{\pi,V}:\pi\in B_{L,k}\}$ are linearly independent for each nonzero $\ket{V}\in U_k$. By Proposition~\ref{prop:faithful_iff}, $\rho_d$ is faithful. Letting $\ket{V}$ vary gives $U_k$ is the multiplicity space for $W_{L,k}$. We now have by the double centralizer theorem, Theorem~\ref{thm:double_centralizer}, the following result.
\begin{prop}[Sector decomposition]\label{prop:decomposition}
As a module of $TL_L(d)$ under $\rho_d$, 
\begin{equation}\label{eq:decomposition}
    \cH_L \cong \bigoplus_{k=0}^{\lfloor L/2\rfloor} W_{L,k}\otimes U_k\,.
\end{equation}
For a fixed $\ket{V} \in U_k$, letting the pattern vary in \eqref{eq:pattern_vector} gives a basis of a copy of $W_{L,k}$; for a fixed pattern $\pi$, letting $\ket{V}$ vary gives a copy of $U_k$. Moreover $H_L$ acts on the $k$-th summand as $\rho_{d,k}(H_L)\otimes \ds_{U_k}$, where $\rho_{d,k}$ denotes the restriction of $\rho_d$ to the $k$-th block, that is the action of $TL_L(d)$ on the standard module $W_{L,k}$.
\end{prop}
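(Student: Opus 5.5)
The plan is to combine the semisimplicity of $TL_L(d)$ with the embeddings $\Theta$ already constructed. We then count dimensions to show that these embeddings exhaust $\cH_L$ with multiplicity exactly $U_k$.

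First, for each $k$ define the linear map
\[
\Theta_k : W_{L,k}\otimes U_k \to \cH_L,\qquad \pi\otimes\ket{V}\mapsto \varepsilon(\pi)\Psi_{\pi,V}.
\]
By the computation preceding the proposition, $\Theta_k$ intertwines $\rho_{d,k}\otimes\ds$ with $\rho_d$. The sign $\varepsilon(\pi)$ does not depend on $V$, and Lemma~\ref{lem:local_action} is linear in the dot vector, so this holds for every $\ket V$ at once. Next we show $\Theta_k$ is injective. Since $W_{L,k}\otimes U_k$ is isotypic for the simple module $W_{L,k}$, the kernel of $\Theta_k$ is a submodule of the form $W_{L,k}\otimes U'$ for some subspace $U'\subseteq U_k$. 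Fix any pattern $\pi$. Then $\Theta_k(\pi\otimes\ket V)=\pm\ket V_{\mathrm{dots}}\otimes\bigotimes_r\ket\psi_{i_ri_r'}$, which vanishes only if $\ket V=0$. Hence $U'=0$. The images for distinct $k$ lie in inequivalent isotypic components, so their sum is direct. Thus we obtain an injective intertwiner
\[
\bigoplus_k W_{L,k}\otimes U_k\hookrightarrow\cH_L .
\]

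Surjectivity is the main obstacle, and the plan is to get it from a dimension count. We use $\dim U_k=[L-2k+1]_q$, which follows from Theorem~\ref{thm:jw} as noted above. We also use the standard count $\dim W_{L,k}=\binom{L}{k}-\binom{L}{k-1}$. It then suffices to prove the identity
\[
\sum_{k=0}^{\lfloor L/2\rfloor}\left(\binom{L}{k}-\binom{L}{k-1}\right)[L-2k+1]_q=d^L .
\]
We will prove this by induction on $L$ using Chebyshev-type relations. Writing $d=q+q^{-1}$, we have $d\,[n]_q=[n+1]_q+[n-1]_q$. Expanding $d^L=(q+q^{-1})^L$ and grouping the terms $q^{L-2j}$ shows that $d^L=\sum_k c_k[L-2k+1]_q$ with $c_k=\binom Lk-\binom L{k-1}$. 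This is the $SU(2)$ Clebsch--Gordan count (characters), and it holds formally in $q$. Alternatively, surjectivity can be argued without counting. By semisimplicity $\cH_L$ is a sum of standard modules, so it suffices to show that every copy of $W_{L,k}$ lies in the image. Take a highest-weight-type vector, i.e.\ the image of the pattern $\pi_0$ with arcs $(1,2),\dots,(2k-1,2k)$. Applying $E_1E_3\cdots E_{2k-1}$ forces singlets on those bonds, and the remaining factor must lie in $U_k$ by a vanishing argument on the dots. We prefer the dimension count as the cleaner route.

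Finally, $H_L=\tfrac1d\sum_x E_x$ is the image of an algebra element. Transported through $\Theta_k$, it therefore acts as $\rho_{d,k}(h)/d\otimes\ds_{U_k}$ on each block. The two statements about bases follow immediately from the injectivity of $\Theta_k$. The decomposition agrees with Theorem~\ref{thm:double_centralizer}, which identifies $U_k$ as the simple modules of the commutant.
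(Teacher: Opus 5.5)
Your proof is correct, but it differs from the paper's at the key step. The paper fixes $\ket V\in U_k$, shows that the single-copy map $\Theta$ is an injective intertwiner, and deduces faithfulness of $\rho_d$ from Proposition~\ref{prop:faithful_iff}. It then invokes Theorem~\ref{thm:double_centralizer}, simply asserting that letting $\ket V$ vary makes $U_k$ the multiplicity space of $W_{L,k}$. You bypass the double centralizer theorem entirely. Your isotypic-kernel argument proves that the joint map $W_{L,k}\otimes U_k\to\cH_L$ is injective, which gives independence as $\ket V$ runs over a basis of $U_k$; the paper leaves this implicit. The identity $\sum_k\bigl(\binom Lk-\binom L{k-1}\bigr)[L-2k+1]_q=d^L$, which is the $\mathfrak{sl}_2$ character identity evaluated at $q+q^{-1}=d$, then shows that nothing in $\cH_L$ is missed. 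In other words, the multiplicity of $W_{L,k}$ is exactly $\dim U_k$, not merely at least $\dim U_k$.

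The cost of your route is twofold. You need the ballot count of link patterns, and you need $\dim U_k=D_{L-2k}=[L-2k+1]_q$ from Theorem~\ref{thm:jw}. That is a forward reference but not a circular one, since the proof of Theorem~\ref{thm:jw} uses only the Temperley--Lieb relations and Proposition~\ref{prop:uniqueness}. The paper's route needs neither, and it presents $U_k$ as a simple module of the symmetry algebra $\rmEnd_{TL_L(d)}(\cH_L)$; that identification also follows from your explicit isomorphism by Schur's lemma.

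Your fallback ``highest-weight'' argument is also sound, and it removes the dependence on Theorem~\ref{thm:jw}. Take your pattern $\pi_0$ with arcs $(1,2),\dots,(2k-1,2k)$ and any embedding $\phi:W_{L,k}\to\cH_L$. Then $\phi(\pi_0)=d^{-k}E_1E_3\cdots E_{2k-1}\phi(\pi_0)$, and $E_x\phi(\pi_0)=0$ for $x\geq 2k+1$. Hence $\phi(\pi_0)=\Theta_k(\pi_0\otimes\ket V)$ for some $\ket V\in U_k$. Since $\pi_0$ generates the simple module $W_{L,k}$, the image of $\phi$ lies in the image of $\Theta_k$. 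Complete reducibility of the $*$-representation $\rho_d$ then gives surjectivity, and neither $\dim U_k$ nor the dimension identity is needed.
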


The last statement is the reason the analysis below never needs to know what the dots are: a Temperley--Lieb generator transports the state on the sequence of dots without altering it. Equivalently, the vectors $\{\Psi_{\pi,V}\}_{\pi \in B_{L,k}}$ span an $H_L$ invariant subspace for each fixed $\ket{V}$, on which $H_L$ is represented by a matrix independent of the choice of $\ket{V}$.

It is worth recording what \eqref{eq:decomposition} specializes to at $s=\tfrac12$, since the general statement reduces to a classical one. There the singlet is the antisymmetric vector, so $P^{(0)} = \tfrac12(\ds-T)$ with $T$ the transposition of the two factors, and the generators \eqref{eq:Ex} are $E_x = \ds - T_{x,x+1}$. As $T_{x,x+1} = \ds - E_x$ and the adjacent transpositions generate $S_L$, the unital algebra generated by the $E_x$ is the image of $\cc[S_L]$ acting on $\cH_L$ by permutation of the tensor factors, and \eqref{eq:decomposition} becomes
\begin{equation}\label{eq:schur_weyl}
    (\cc^2)^{\otimes L} \;\cong\; \bigoplus_{k=0}^{\lfloor L/2\rfloor}
    S^{(L-k,k)}\otimes V_{(L-k,k)}\,,
\end{equation}
the $SU(2)$ case of Schur--Weyl duality, with $S^\lambda$ the Specht module and $V_\lambda$ the irreducible $SU(2)$-module attached to a partition $\lambda$ of $L$, drawn as a Ferrers-Young diagram of $L$ boxes. Only diagrams with at most two rows occur, since three vectors in $\cc^2$ cannot be antisymmetrized; this is the same restriction as $k \leq \lfloor L/2\rfloor$. The dictionary is that $k$ is the length of the second row and $L-2k$ the difference of the two row lengths, so that $W_{L,k} = S^{(L-k,k)}$ has dimension $\binom{L}{k}-\binom{L}{k-1}$, the number of link patterns in $B_{L,k}$, while $U_k = V_{(L-k,k)}$ is the multiplet of spin $\tfrac12(L-2k)$, of dimension $L-2k+1 = [L-2k+1]_q$ at $q=1$. Consistently, $q=1$ makes the Jones--Wenzl idempotent $p_{L-2k}$ the full symmetrizer, so that $U_k = \Im\rho_2(p_{L-2k}) = \mathrm{Sym}^{L-2k}(\cc^2)$: in \eqref{eq:pattern_vector} the arcs carry singlets and contribute no spin, while the through-lines are symmetrized into the maximal total spin available to them. Note then the dimensions of the standard modules $W_{L,k}$ do not depend on $s$, only the dimension of $U_k$ does, in which it grows exponentially for $s \geq 1$, i.e., for $q > 1$.

\subsection{FOEL for all Spins}
\label{sec:foel_all}

The proof of Theorem~\ref{thm:foel} given in Sections~\ref{sec:monotonicity}--\ref{sec:upgrading} is diagrammatic, and it is sensitive to the parity of $2s$ through the reversal rule \eqref{eq:reversal}: Part (i) of Proposition~\ref{prop:comparison_lemma_hypothesis} needs $(-1)^{2s} = -1$ to obtain non-positive off diagonal entries as required by Perron-Frobenius. We now give a proof for all $s$ by a different route, which uses no diagrammatics at all. The observation behind it is that the quantities $\cE(L,k)$ of Definition~\ref{def:energies} are attached to the algebra $TL_L(d)$ and not to any realization of it.

\begin{definition}\label{def:energies}
For $0 \leq k \leq \lfloor L/2\rfloor$ let $A_{L,k}$ denote the matrix of $H_L$ in the basis $\{\Psi_{\pi,V}\}_{\pi \in B_{L,k}}$ of one copy of $W_{L,k}$, defined by
\begin{equation}\label{eq:ALk}
    H_L\,\Psi_{\pi,V} = \sum_{\pi' \in B_{L,k}} A_{L,k}^{\pi'\pi}\,\Psi_{\pi',V}\,,
\end{equation}
and set
\begin{equation}\label{eq:Elk}
    \cE(L,k) \coloneq \min \sigma\big(H_L\big|_{W_{L,k}\otimes U_k}\big) = \min\sigma(A_{L,k})\,,
\end{equation}
with the convention $\cE(L,k) \coloneq +\infty$ whenever $W_{L,k}\otimes U_k = 0$, that is whenever $k < 0$ or $k > \lfloor L/2\rfloor$.
\end{definition}

As mentioned previously, the matrix $A_{L,k}$ does not depend on the choice of $\ket{V}$, as $U_k$ only serves as a multiplicity space. Note also $W_{L,0}$ is one dimensional, spanned by the pattern with no arcs, and every generator annihilates it; hence $\cE(L,0) = 0$ and the ground state space of $H_L$ is exactly the $k=0$ sector $U_0 = \Im \rho_d(p_L)$. This is the statement in Theorem~\ref{thm:jw}.

By Proposition~\ref{prop:decomposition} the Hamiltonian acts on the $k$-th summand as $\rho_{d,k}(H_L)\otimes\ds_{U_k}$, so
\begin{equation}\label{eq:E_intrinsic}
    \cE(L,k) \;=\; \frac1d\,\min\sigma\Big(\rho_{d,k}\big(\textstyle\sum_{x=1}^{L-1}e_x\big)\Big)\,.
\end{equation}
The standard module $W_{L,k}$ is determined up to isomorphism by $L$, $k$ and the loop parameter $d$ alone. By Lemma~\ref{lem:iso_spectrum} the right hand side of \eqref{eq:E_intrinsic} is unchanged if $\rho_{d,k}$ is replaced by the action of $TL_L(d)$ on any other copy of $W_{L,k}$. Thus $\cE(L,k)$ can be computed in whichever realization of $TL_L(d)$ is most convenient, and the computation for the $XXZ$ chain from the realization $\mu_d$ of Section~\ref{sec:model} is available at every loop parameter $d$ of interest.

This relies on the fact $\mu_d$ is faithful, in other words, by Proposition~\ref{prop:faithful_iff}, there is at least one occurrence of $W_{L,k}$ for each possible $k$ in the Hilbert space $\big(\cc^2\big)^{\otimes L}$. To see this, we can construct the modules $W_{L,k}$ in the same way as we did for the singlet model, where arcs are no longer $SU(2)$ spin singlets, but rather the $U_q(\mathfrak{sl}_2)$ singlet $\ket{\psi_q}$. The dots then correspond to a $U_q(\mathfrak{sl}_2)$ multiplet of spin $\tfrac{L}{2}-k$, which is of dimension $L-2k+1$. Then the $XXZ$ chain acts on these states just as we had for the singlet model using Lemma~\ref{lem:local_action}. The double centralizer decomposition then reads
\begin{equation}\label{eq:xxz_decomposition}
    \big(\cc^2\big)^{\otimes L} \;\cong\; \bigoplus_{k=0}^{\lfloor L/2\rfloor}
    W_{L,k}\otimes U_k^{XXZ}\,, \qquad \dim U_k^{XXZ} = L-2k+1\,,
\end{equation}
with $W_{L,k}$ the same standard modules as in Proposition~\ref{prop:decomposition} and $U_k^{XXZ}$ the irreducible $U_q(\mathfrak{sl}_2)$ module of spin $J = L/2-k$: decreasing total spin corresponds to increasing the number of arcs.

Write $H^{XXZ}_L$ for the $XXZ$ chain as written in \cite{nachtergaele2004ferromagnetic}. Recall $\mu_d\left(\sum_{x=1}^{L-1} e_x\right) = dH^{XXZ}_L$. Let $E^{XXZ}(L,k)$ denote the lowest energy of $H^{XXZ}_L$ in the spin $J = L/2-k$ sector. Evaluating \eqref{eq:E_intrinsic} in the realization $\mu_d$ then gives $\cE(L,k) = E^{XXZ}(L,k)$. The ordering of the $E^{XXZ}(L,k)$ established in \cite{nachtergaele2004ferromagnetic} reads as
\[
E^{XXZ}(L,0) < E^{XXZ}(L,1) < \cdots < E^{XXZ}(L,\lfloor L/2\rfloor)\,,
\]
and therefore passes to the $\cE(L,k)$ as in Theorem~\ref{thm:foel} under the exact relationship between the two.

We remark that the sector label in Theorem~\ref{thm:foel} cannot be replaced by the $SU(2)$ total spin once $s \geq 1$. Writing $\widetilde{\cE}(L,J)$ for the lowest energy of $H_L$ in the total spin $J$ subspace of $\cH_L$, each $\widetilde{\cE}(L,J)$ is the minimum of $\cE(L,k)$ over those $k$ for which $J$ occurs in $U_k$, and for $s \geq 1$ the spaces $U_k$ contain many $SU(2)$ representations. Already at $s=1$ and $L=2$ the ground state space is $U_0 \cong V_2\oplus V_1$, so that $\widetilde{\cE}(2,2) = \widetilde{\cE}(2,1) = 0$. The ordering is therefore genuinely a statement about the sectors labeled by the number of arcs, and has no counterpart in terms of $SU(2)$ alone.

\subsection{Monotonicity in the Volume}
\label{sec:monotonicity}

We now begin to review how to generalize the strategies of \cite{nachtergaele2004ferromagnetic} to directly show FOEL for the half-odd integer spin singlet model. The strategy of \cite{nachtergaele2004ferromagnetic} is to deduce the ordering from strict monotonicity of $\cE(\cdot,k)$ in the volume. That monotonicity follows in turn from a comparison lemma for matrices with non-positive off-diagonal entries, which we take from \cite[Lemma 7.3]{nachtergaele2004ferromagnetic} and the remarks after it.

\begin{lem}\label{lem:ns73}
Let $A$ and $B$ be square matrices with non-positive off-diagonal entries, indexed by finite sets $I_A \subseteq I_B$, and suppose $A$ is the restriction of $B$ to $I_A$. If $B$ is irreducible in the sense of Perron-Frobenius and some off-diagonal entry $B_{ij} \neq 0$ has $i$ or $j$ outside $I_A$, then $\min\sigma(B) < \min\sigma(A)$.
\end{lem}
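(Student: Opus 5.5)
The plan is to reduce the statement to strict monotonicity of the Perron root of nonnegative matrices by a diagonal shift. Here $\min\sigma(\cdot)$ is understood as the smallest real eigenvalue. For a matrix with non-positive off-diagonal entries, this is the eigenvalue of smallest real part; in the application the matrices $A_{L,k}$ represent the self-adjoint $H_L$, so they have real spectrum anyway.

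\emph{Step 1: shift to nonnegative matrices.} Choose $c>\max_i B_{ii}$. Set $M\coloneq c\ds-B$ and $M_A\coloneq c\ds-A$. Both are entrywise nonnegative, because the off-diagonal entries of $B$ are non-positive and $c$ dominates the diagonal. Since $A$ is the restriction of $B$ to $I_A$, the matrix $M_A$ is the principal submatrix of $M$ on $I_A$. Irreducibility of $B$ is a statement about its off-diagonal pattern, so $M$ is an irreducible nonnegative matrix. By Perron--Frobenius, $r(M)$ is an eigenvalue of $M$ with a strictly positive eigenvector, and every eigenvalue $\lambda$ of $M$ satisfies $\operatorname{Re}\lambda\le\abs{\lambda}\le r(M)$. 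Hence $\min\sigma(B)=c-r(M)$. Likewise, by the Perron--Frobenius theorem for (possibly reducible) nonnegative matrices, $\min\sigma(A)=c-r(M_A)$. It therefore suffices to prove $r(M_A)<r(M)$.

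\emph{Step 2: pad and compare.} Let $N$ be the $I_B\times I_B$ matrix that equals $M_A$ on $I_A\times I_A$ and is $0$ elsewhere. Then $r(N)=r(M_A)$ and $0\le N\le M$ entrywise. Moreover $N\neq M$, because the hypothesised entry $B_{ij}\neq0$ with $i$ or $j\notin I_A$ gives $M_{ij}>0$ while $N_{ij}=0$. (This hypothesis also forces $I_A\subsetneq I_B$.)

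\emph{Step 3: strict inequality.} Let $x>0$ be a Perron eigenvector of $M^T$, so $x^TM=r(M)x^T$. Let $z\ge0$, $z\neq0$, satisfy $Nz=r(N)z$; such a $z$ exists by Perron--Frobenius for nonnegative matrices. Since $x>0$ and $z\ge 0$ is nonzero, $x^Tz>0$. Then
\[
r(N)\,x^Tz=x^TNz\le x^TMz=r(M)\,x^Tz,
\]
so $r(N)\le r(M)$. Now suppose equality holds. Then $x^T(M-N)z=0$. Since $x>0$ and $(M-N)z\ge0$, this forces $(M-N)z=0$, and hence $Mz=Nz=r(M)z$. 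So $z$ is a nonnegative eigenvector of the irreducible matrix $M$ for the eigenvalue $r(M)$, which means $z>0$ entrywise. But $(M-N)z=0$ with $M-N\ge0$ and $z>0$ forces $M=N$, contradicting Step 2. Therefore $r(M_A)=r(N)<r(M)$, that is, $\min\sigma(B)<\min\sigma(A)$.

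The main obstacle is the strictness in Step 3. It rests on two facts: a nonnegative eigenvector of an irreducible matrix at the Perron root is strictly positive, and the hypothesised nonzero entry outside $I_A$ is exactly what rules out $N=M$. The rest is standard bookkeeping with Perron--Frobenius, together with the identification of $\min\sigma$ with $c$ minus the Perron root.
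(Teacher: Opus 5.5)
Your proof is correct. The paper gives no argument of its own for this lemma. It imports it from \cite[Lemma 7.3]{nachtergaele2004ferromagnetic} and the remarks after it, so your write-up is a self-contained substitute for that citation rather than a reconstruction of something in the text.

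Your route never uses symmetry. You shift by $c\ds$ to get an irreducible nonnegative $M$. You then compare $r(N)\le r(M)$ by pairing a nonnegative eigenvector of the zero-padded principal submatrix with the left Perron vector of $M$. Strictness comes from the fact that a nonnegative eigenvector of an irreducible matrix at its Perron root is strictly positive.

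Avoiding symmetry is exactly what the application needs. $A_{L,k}$ is the matrix of $H_L$ in the non-orthogonal basis $\{\Psi_{\pi,V}\}$ and is generally not symmetric. For example, take $L=4$, $k=2$, $\pi_a=\{(1,2),(3,4)\}$ and $\pi_b=\{(1,4),(2,3)\}$. Then Proposition~\ref{prop:comparison_lemma_hypothesis}(i) gives $A_{4,2}^{\pi_b\pi_a}=(-1)^{2s}/d$ but $A_{4,2}^{\pi_a\pi_b}=2(-1)^{2s}/d$. So the familiar variational proof, which extends the ground state of $A$ by zero and compares Rayleigh quotients, would not apply to these matrices as they stand.

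A minor simplification is available. Since $c>\max_i B_{ii}$, every diagonal entry of $M$ is strictly positive, so $N\neq M$ already follows from any index in $I_B\setminus I_A$. The hypothesized off-diagonal entry is only needed to guarantee $I_A\subsetneq I_B$.
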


To show monotonicity of the lowest energy in each sector with respect to the volume, we show the matrices $A = A_{L,k}$ and $B = A_{L+1, k}$ satisfy the hypothesis of the comparison lemma in the next proposition.

\begin{prop}\label{prop:comparison_lemma_hypothesis}
Let $L \geq 2$ and $1\leq k \leq \lfloor L/2 \rfloor$. (i) For $\pi'\neq \pi$,
\begin{equation}\label{eq:offdiag}
    A_{L,k}^{\pi'\pi} = \frac{(-1)^{2s}}{d}\,\# \{x : e_x\pi = \pi'\}\,.
\end{equation}
In particular the off-diagonal entry of $A_{L,k}$ is non-positive for half-odd integer $s$. (ii) Define $\iota: B_{L,k}\to B_{L+1,k}$ by letting $\iota(\pi)$ be $\pi$ with an additional dot at the site $L+1$. $\iota$ is injective and
\begin{equation}\label{eq:embedding}
    A_{L+1,k}^{\iota(\pi')\iota(\pi)} = A_{L,k}^{\pi'\pi} \qquad \text{ for all } \pi,\pi' \in B_{L,k}\,.
\end{equation}
That is, $A_{L,k}$ is the restriction of $A_{L+1,k}$ to the index subset $\iota(B_{L,k})$. (iii)  $A_{L,k}$ is irreducible in the sense of Perron-Frobenius: there is no proper nonempty subset $S \subsetneq B_{L,k}$ with $A_{L,k}^{\pi'\pi} = 0$ for all $\pi \in S$ and all $\pi' \notin S$. (iv) There exist $\pi_0 \in B_{L,k}$ and $\pi_1 \in B_{L+1,k}\setminus\iota(B_{L,k})$ with $A_{L+1,k}^{\pi_1\,\iota(\pi_0)} \neq 0$. 
\end{prop}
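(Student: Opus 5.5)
Each of the four parts follows from the intertwining established before Proposition~\ref{prop:decomposition}, so I will work in the abstract standard module rather than with explicit spin vectors. The basis $\Psi_{\pi,V}$ differs from $\Theta(\pi)=\varepsilon(\pi)\Psi_{\pi,V}$ only by signs, and $\Theta$ intertwines the diagrammatic action. Combining Lemma~\ref{lem:local_action} with $H_L=\sum_x P^{(0)}_{x,x+1}$ gives, for each generator,
\[
P^{(0)}_{x,x+1}\Psi_{\pi,V}=\tfrac{(-1)^{2s}}{d}\Psi_{e_x\pi,V}
\]
in cases (C2)--(C4), $\Psi_{\pi,V}$ in case (C0), and $0$ in case (C1).

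\textbf{Part (i).} Summing over $x$, the coefficient of $\Psi_{\pi',V}$ with $\pi'\neq\pi$ receives a contribution $(-1)^{2s}/d$ from each $x$ with $e_x\pi=\pi'$. Cases (C0) and (C1) contribute only to the diagonal. Linear independence of $\{\Psi_{\pi,V}\}$ then gives \eqref{eq:offdiag}, and non-positivity for half-odd integer $s$ is immediate.

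\textbf{Part (ii).} Injectivity of $\iota$ is obvious. For the matrix identity, take $\pi,\pi'\in B_{L,k}$ and compare $H_{L+1}\Psi_{\iota(\pi),V'}$ with $H_L\Psi_{\pi,V}$.
\begin{itemize}
\item The bonds $x\le L-1$ act on $\iota(\pi)$ exactly as on $\pi$, with the dot at $L+1$ as a spectator. The diagonal count of (C0) arcs is unchanged.
\item The extra bond $x=L$ either joins two dots, contributing $0$, or joins an arc endpoint at $L$ to the dot at $L+1$ (case (C3)). In the latter case the output has an arc $(L,L+1)$, which lies outside $\iota(B_{L,k})$.
\end{itemize}
So the entries indexed by $\iota(B_{L,k})\times\iota(B_{L,k})$ coincide. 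The coefficient comparison is legitimate because the matrix is independent of $V$.

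\textbf{Part (iii).} This is the main combinatorial obstacle. By (i), irreducibility amounts to strong connectivity of the directed graph on $B_{L,k}$ with an edge $\pi\to e_x\pi$. I would show this in three steps.
\begin{itemize}
\item Every pattern reaches a fixed canonical pattern $\pi_*$, namely arcs $(1,2),(3,4),\dots,(2k-1,2k)$ followed by dots.
\item The reverse reachability holds. In cases (C2) and (C3), $e_x$ moves a dot across an arc end, and this is reversible by the opposite move, as the case (c) of Figure~\ref{fig:linkpatterns} illustrates. For (C4), nesting and un-nesting are mutually reversible using $e$ at the appropriate bond, possibly composed with a few steps.
\item It therefore suffices to connect each $\pi$ to $\pi_*$. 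I would induct on the sum $\Sigma(\pi)$ plus a nesting depth: repeatedly un-nest via (C4) to get a sequence of adjacent arcs $(i,i+1)$ separated by dots, then slide dots to the right using (C2)/(C3).
\end{itemize}
Since each of these moves has an inverse move, the graph is strongly connected, hence irreducible. The care needed is in verifying that every single-step inverse exists, since one step of $e_x$ inverts to a single $e_y$ step. I expect this to hold from the local pictures.

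\textbf{Part (iv).} Choose $\pi_0\in B_{L,k}$ whose rightmost arc ends at $L$; this exists since $k\ge1$, for example arcs $(L-2k+1,L-2k+2),\dots,(L-1,L)$ with dots on the left. Set $\pi_1=e_L\iota(\pi_0)$, obtained by case (C3) at $x=L$, so $\pi_1$ has arc $(L,L+1)$ and is not in $\iota(B_{L,k})$. By (i), $A_{L+1,k}^{\pi_1\iota(\pi_0)}=(-1)^{2s}/d\neq0$.

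Together with Lemma~\ref{lem:ns73}, applied using (iii) for $B=A_{L+1,k}$ (with $L+1$ in place of $L$), these parts give the desired comparison.
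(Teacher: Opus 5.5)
The gap is in the reverse-reachability step of your part (iii). Parts (i), (ii) and (iv) are correct and follow the paper's argument.

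You claim that a (C2)/(C3) move is undone by the opposite single move. This is false whenever the arc involved does not join adjacent sites. In (C2) the dot jumps from $x$ to the far endpoint $l$, and the arcs nested inside $(x+1,l)$ lose their enclosing arc.

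Take $L=5$, $k=2$, and let $\pi$ have a dot at $1$ and arcs $(2,5),(3,4)$. Then $e_1\pi$ has arcs $(1,2),(3,4)$ and a dot at $5$. Any $e_y$ acting in cases (C2)--(C4) creates the new arc $(y,y+1)$. The only arc of $\pi$ missing from $e_1\pi$ is $(2,5)$, so no single $e_y$ maps $e_1\pi$ back to $\pi$; you need $e_4$, then $e_2$, then $e_3$.

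Hence $A_{5,2}^{e_1\pi,\pi}\neq 0 = A_{5,2}^{\pi,e_1\pi}$. The matrix is not symmetric, since the basis $\Psi_{\pi,V}$ is not orthonormal. Reverse reachability therefore genuinely has to be proved, and ``I expect this to hold from the local pictures'' does not do it. Your (C4) bullet (``possibly composed with a few steps'') is likewise only asserted.

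The paper avoids the combinatorics. Suppose $S$ were closed. By (i) and the absence of cancellations, each $e_x\pi$ with $\pi\in S$ is $0$, $d\pi$, or a pattern in $S$. Then $\mathrm{span}\,S$ would be a nonzero proper $TL_L(d)$-submodule of the simple module $W_{L,k}$, a contradiction.

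If you want to keep your route, use only forward moves that are undone in one step:
\begin{itemize}
\item Un-nest at an arc $(x,l)$ of minimal length among the non-adjacent arcs. The arc starting at $x+1$ is then $(x+1,x+2)$, so $e_x$ yields $(x,x+1),(x+2,l)$, and $e_{x+1}$ undoes this via (C4).
\item Slide a dot at $x$ only across an adjacent arc $(x+1,x+2)$ via $e_x$. The move $e_{x+1}$ undoes this via (C3).
\end{itemize}
Total arc length decreases under the first move, and the sum of dot positions increases under the second, so the process terminates at $\pi_*$. One-step reversibility then shows that $\pi_*$ reaches every pattern. This repaired version is purely combinatorial and independent of $d$. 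The paper's argument is a one-liner given simplicity of $W_{L,k}$.
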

\begin{proof}
We first prove (i). Summing Lemma~\ref{lem:local_action} over $x = 1,\dots,L-1$, each bond falls into exactly one of the five cases. Case (C0) contributes $1$ to the diagonal entry; case (C1) contributes nothing; cases (C2)-(C4) each contribute $\frac{(-1)^{2s}}{d}$ to the entry indexed by the resulting pattern.

We now prove (ii). That $\iota(\pi)$ is a valid pattern is immediate: its arcs are those of $\pi$, hence non-crossing, and every arc has both endpoints in $\{1,\dots,L\}$, so none can span the new dot at $L+1$. Injectivity follows by deleting the last site. The image of $\iota$ consists precisely of those patterns in $B_{L+1,k}$ for which the site $L+1$ is a dot. Write the abstract Hamiltonian $\sum_{x=1}^L e_x = \sum_{x=1}^{L-1} e_x  + e_L$. For $x \leq L-1$, which of the cases of Lemma~\ref{lem:local_action} applies to the bond $\{x,x+1\}$ depends only on the status of $x$ and $x+1$ and on the location of their partners if they're connected by an arc, all of which lie in $\{1,\dots,L\}$ by assumption and are unchanged by $\iota$; so the entries are the same. Moreover the resulting pattern still has a dot at $L+1$, since case (C4) relocates arcs only among $\{x,x+1,l,m\}\subseteq\{1,\dots,L\}$ and cases (C2), (C3) move a dot to a site $l \leq L$. Hence $e_x\iota(\pi) = \iota\big(e_x\pi\big)$ for $x \leq L-1$. For the remaining term, $L+1$ is a dot of $\iota(\pi)$. If $L$ is also a dot then case (C1) applies and $e_L\iota(\pi) = 0$. If instead $L$ belongs to an arc $(l,L)$ with $l<L$, then case (C3) applies and produces a pattern containing the arc $(L,L+1)$, in which $L+1$ is therefore not a dot; such a pattern lies outside $\iota(B_{L,k})$. In either case $e_L\iota(\pi)$ has no component along $\iota(B_{L,k})$, and comparing coefficients gives \eqref{eq:embedding}.

We now prove (iii). By the first part of the proposition, all contributions to the off-diagonal entries of $A_{L,k}$ carry the common factor $(-1)^{2s}/d$, so no cancellation can occur when summing up contributions from bonds forming by applications of the generators. Hence for $\pi' \neq \pi$
\begin{equation}\label{eq:entry_nonzero}
    A_{L,k}^{\pi'\pi} \neq 0
    \quad\Longleftrightarrow\quad
    e_x\pi = \pi' \ \text{ for some } x \in \{1,\dots,L-1\}\,.
\end{equation}
Suppose for the sake of contradiction $S \subsetneq B_{L,k}$ is nonempty with $A_{L,k}^{\pi'\pi} = 0$ whenever $\pi \in S$ and $\pi' \notin S$, and let $M = \mathrm{span}\{\pi : \pi \in S\} \subseteq W_{L,k}$. For $\pi \in S$ and any $x$, we have $e_x\pi$ either equals $\pi, 0$ or $\pi'$ for some other configuration $\pi'$. \eqref{eq:entry_nonzero} and our assumption on $S$ implies we must have $\pi' \in S$ in the third case. Hence $e_x\pi \in M$ in every case, for all $x$. Thus, $M$ is a nonzero proper $TL_L(d)$-submodule of $W_{L,k}$, contradicting the simplicity of the standard module.

We now prove (iv). Since $k \geq 1$ and $2k \leq L$, the pattern
\[
    \pi_0 = \{(L-2k+1,\,L-2k+2),\,(L-2k+3,\,L-2k+4)\,\cdots(L-1,\,L)\}
\]
lies in $B_{L,k}$: its arcs are disjoint and non-crossing, and all $L-2k$ dots precede all arcs, so no arc spans a dot. Its site $L$ belongs to the arc $(L-1,L)$, while $L+1$ is a dot of $\iota(\pi_0)$. Case (C3) of Lemma~\ref{lem:local_action} therefore applies to the bond $\{L,L+1\}$ and produces the pattern
\[
    \pi_1 = \big(\iota(\pi_0)\setminus\{(L-1,L)\}\big)\cup\{(L,L+1)\}
\]
together with a dot at $L-1$, with coefficient $(-1)^{2s}/d \neq 0$. Since $L+1$ is not a dot of $\pi_1$, we have $\pi_1 \notin \iota(B_{L,k})$, and $A_{L+1,k}^{\pi_1\,\iota(\pi_0)}\neq 0$ is an off-diagonal entry of $A_{L+1,k}$ one of whose indices lies outside $\iota(B_{L,k})$.
\end{proof}

\begin{cor}[Monotonicity in the volume]\label{cor:monotonicity}
Let $s$ be a half-odd integer. Then for all $k \geq 1$ and $L \geq 2$ with $W_{L,k}\neq 0$, we have $\cE(L+1,k) < \cE(L,k)$.
\end{cor}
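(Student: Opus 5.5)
The plan is to apply Lemma~\ref{lem:ns73} with $A = A_{L,k}$ and $B = A_{L+1,k}$, with Proposition~\ref{prop:comparison_lemma_hypothesis} supplying the hypotheses. First I will fix $k\geq 1$ and $L\geq 2$ with $W_{L,k}\neq 0$, so that $1\leq k\leq\lfloor L/2\rfloor$. Then $k\leq\lfloor (L+1)/2\rfloor$ as well, so $W_{L+1,k}\neq 0$ and both $\cE(L,k)$ and $\cE(L+1,k)$ are finite minima of spectra of the finite matrices of Definition~\ref{def:energies}.

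Next I will check the sign hypothesis. Since $s$ is a half-odd integer, $(-1)^{2s}=-1$. Part (i) of Proposition~\ref{prop:comparison_lemma_hypothesis}, applied at both $L$ and $L+1$ (both are $\geq 2$ and $k$ is in range), then gives that all off-diagonal entries of $A_{L,k}$ and $A_{L+1,k}$ are non-positive. Part (ii) identifies $A_{L,k}$ with the restriction of $A_{L+1,k}$ to the index subset $\iota(B_{L,k})\subseteq B_{L+1,k}$; the index sets are matched through the injection $\iota$.

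For irreducibility of $B=A_{L+1,k}$ I will invoke part (iii) at volume $L+1$, which is legitimate because $1\leq k\leq\lfloor(L+1)/2\rfloor$ and $L+1\geq 2$. This is the one place where some care is needed: the proposition is stated for $L$, and I have to make sure its hypotheses hold at $L+1$. They do, since its proof uses only simplicity of the standard module $W_{L+1,k}$. Part (iv) then provides an off-diagonal entry $A_{L+1,k}^{\pi_1\,\iota(\pi_0)}\neq0$ with $\pi_1\notin\iota(B_{L,k})$, which is exactly the remaining hypothesis of Lemma~\ref{lem:ns73}.

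Lemma~\ref{lem:ns73} then gives $\min\sigma(A_{L+1,k})<\min\sigma(A_{L,k})$, which by \eqref{eq:Elk} reads $\cE(L+1,k)<\cE(L,k)$. I expect no real obstacle here, since the corollary is essentially a bookkeeping assembly of the preceding proposition. The only subtle points are the index shift needed to apply (iii) at $L+1$, and noting that the half-odd-integer assumption is used solely to make the sign in (i) negative.
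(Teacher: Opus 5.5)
Your proposal is correct and follows the paper's own argument, whose proof is the single line ``combine Lemma~\ref{lem:ns73} with Proposition~\ref{prop:comparison_lemma_hypothesis}''. You have spelled out the bookkeeping behind that line, including a point the paper leaves implicit: irreducibility is needed for $B=A_{L+1,k}$, so part (iii) must be applied at volume $L+1$. This is legitimate because $1\leq k\leq\lfloor (L+1)/2\rfloor$.
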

\begin{proof}
Combine Lemma~\ref{lem:ns73} with Proposition~\ref{prop:comparison_lemma_hypothesis}.
\end{proof}

Note that the above statement fails at $k=0$, where both sides vanish; this case is handled separately below and causes no difficulty.

\subsection{From Monotonicity in Volume to Symmetry Sectors}
\label{sec:upgrading}

The passage from Corollary~\ref{cor:monotonicity} to Theorem~\ref{thm:foel} uses only positivity of the interaction, and a branching rule, that is insensitive to the sign discussion above. The role played by the Clebsch-Gordan decomposition in \cite{nachtergaele2004ferromagnetic} is played here by the restriction of the standard modules to a subchain \cite{aufgebauer2010quantum, goodman1989coxeter, westbury1995representation},
\begin{equation}\label{eq:branching}
    W_{L+1,k}\big|_{TL_L(d)} \cong
    \begin{cases}
        W_{L,0} & k = 0,\\[2pt]
        W_{L,k-1} & k = \tfrac{L+1}{2},\\[2pt]
        W_{L,k}\oplus W_{L,k-1} & \text{otherwise.}
    \end{cases}
\end{equation}
This is again a two-term decomposition when $k$ is not an edge case: the last site of a pattern in $B_{L+1,k}$ is either a dot, in which case deleting it gives an element of $B_{L,k}$, or the right endpoint of an arc, in which case deleting that arc frees its partner into a dot and gives an element of $B_{L,k-1}$. The arguments in the following proposition and its corollary follow those in \cite{nachtergaele2004ferromagnetic}.

\begin{prop}\label{prop:two_term}
For all $L \geq 2$ and $k \geq 0$,
\begin{equation}\label{eq:two_term}
    \cE(L+1,k) \geq \min\{\cE(L,k),\, \cE(L,k-1)\}\,.
\end{equation}
\end{prop}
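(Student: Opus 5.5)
We follow the operator-inequality argument of \cite{nachtergaele2004ferromagnetic}. The starting point is positivity of the new interaction term: since $P^{(0)}_{L,L+1}\geq 0$,
\[
    H_{L+1} = H_L\otimes\ds + P^{(0)}_{L,L+1} \;\geq\; H_L\otimes\ds
\]
as operators on $\cH_{L+1}$. The plan is to restrict this inequality to the $k$-th sector of $\cH_{L+1}$ and then use the branching rule \eqref{eq:branching} to control the lowest eigenvalue of $H_L\otimes\ds$ there.

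First we work at the level of the abstract algebra, which avoids any need to identify $U_k$ for $L+1$ with multiplicity spaces for $L$. By \eqref{eq:E_intrinsic}, $\cE(L+1,k)$ is $\tfrac1d$ times the minimum of the spectrum of $\sum_{x=1}^{L}e_x$ acting on $W_{L+1,k}$, and this quantity does not depend on the chosen copy (Lemma~\ref{lem:iso_spectrum}). We choose a copy inside $\cH_{L+1}$, namely the image of $\Theta$ for a fixed nonzero $\ket{V}$, on which the generators act by the self-adjoint operators $E_x$. The subspace $M\coloneq\Theta(W_{L+1,k})$ is invariant under every $E_x$, hence under $H_{L+1}$ and under $H_L\otimes\ds=\tfrac1d\sum_{x\le L-1}E_x$. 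Restricting the operator inequality to $M$ gives $\min\sigma(H_{L+1}|_M)\ge\min\sigma\big((H_L\otimes\ds)|_M\big)$, and the left side equals $\cE(L+1,k)$.

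Next we apply the branching rule. As a $TL_L(d)$-module, $M\cong W_{L+1,k}|_{TL_L(d)}$, which by \eqref{eq:branching} is $W_{L,k}\oplus W_{L,k-1}$, or a single one of these summands in the edge cases. Semisimplicity of $TL_L(d)$ together with self-adjointness means this decomposition can be realized as an orthogonal decomposition into invariant subspaces; alternatively, the spectrum of a self-adjoint operator preserving $M$ is simply the union of its spectra on the isotypic summands. On a summand isomorphic to $W_{L,j}$, the operator $\tfrac1d\sum_{x\le L-1}E_x$ has lowest eigenvalue $\cE(L,j)$, again by Lemma~\ref{lem:iso_spectrum}. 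Therefore
\[
    \min\sigma\big((H_L\otimes\ds)|_M\big)=\min\{\cE(L,k),\cE(L,k-1)\},
\]
using the convention $\cE=+\infty$ for absent summands. The edge cases $k=0$ and $k=(L+1)/2$ are automatically consistent with \eqref{eq:two_term} under this convention. If $W_{L+1,k}=0$, the left side of \eqref{eq:two_term} is $+\infty$ and there is nothing to prove.

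The main obstacle we anticipate is justifying that the spectrum of the restricted operator on each summand depends only on the isomorphism class of the $TL_L(d)$-module. The point to check is that an abstract module isomorphism intertwines $\rho(h)$, so that the spectra agree; this is exactly the content of Lemma~\ref{lem:iso_spectrum}, which is available. A lesser point is that the branching rule \eqref{eq:branching} holds for $d\ge 2$ in the semisimple regime; we take this from the cited references.
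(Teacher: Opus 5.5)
Your proof is correct and follows essentially the same route as the paper: the operator inequality $H_{L+1}\geq H_L\otimes\ds$ combined with the branching rule \eqref{eq:branching}. The only difference is cosmetic. You restrict to a single copy $\Theta(W_{L+1,k})$ and use that the spectrum on an invariant direct sum is the union of the spectra on the summands, whereas the paper works with the full isotypic components $\cK_k,\cK_{k-1}$, proves they are orthogonal via Schur's lemma, and bounds the Rayleigh quotient of a normalized $\psi=\psi_k+\psi_{k-1}$.
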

\begin{proof}
Identify $\cH_{L+1} = \cH_L \otimes V_s$ and let $T L_L(d)$ act on the first factor. For each $j$ write $\cK_j \subseteq \cH_{L+1}$ for the $j$-isotypic component of this action, and write $\cH_{L+1}^{(k)}$ for the $k$-isotypic component of the action of $TL_{L+1}(d)$ on $\cH_{L+1}$. By \eqref{eq:branching} the restriction of $W_{L+1,k}$ to $TL_L(d)$ involves only the types $k$ and $k-1$, whence
\[
    \cH_{L+1}^{(k)} \;\subseteq\; \cK_k \oplus \cK_{k-1}\,.
\]
The representation $\rho_d$ is a $*$-representation, by self-adjointness of the Temperley-Lieb representatives $d\,P^{(0)}_{x,x+1}$. Consequently the orthogonal complement of an invariant subspace is again invariant, so the orthogonal projection onto an isotypic component commutes with the action; its restriction to any other isotypic component is then a module map between isotypic components of non-isomorphic simples, hence zero by Schur's lemma. In other words $\cK_k$ and $\cK_{k-1}$ are orthogonal. Moreover $H_L \otimes \ds$ preserves each $\cK_j$ and acts there with $\min\sigma\big(H_L\otimes\ds\big|_{\cK_j}\big) = \cE(L,j)$.

Let $\psi$ be a normalized vector in $\cH_{L+1}^{(k)}$ and decompose $\psi = \psi_k + \psi_{k-1}$ accordingly. Since the two summands are orthogonal and $H_L \otimes \ds$ preserves each $\cK_j$, the cross terms in $\langle\psi, H_L\psi\rangle$ vanish and $\|\psi_k\|^2 + \|\psi_{k-1}\|^2 = 1$. As $H_{L+1} = H_L + P^{(0)}_{L,L+1} \geq H_L$ by positivity of the interaction,
\[
    \langle \psi, H_{L+1}\psi\rangle \ \geq\ \langle \psi_k, H_L\psi_k\rangle
    + \langle \psi_{k-1}, H_L\psi_{k-1}\rangle
    \ \geq\ \cE(L,k)\|\psi_k\|^2 + \cE(L,k-1)\|\psi_{k-1}\|^2\,,
\]
and the right hand side being at least $\min\{\cE(L,k),\,\cE(L,k-1)\}$.

In the two degenerate cases of \eqref{eq:branching} one of the summands is zero, so the corresponding component of $\psi$ vanishes and only one term survives: at $k=0$ one has $\psi_{-1}=0$ and the bound reads $\cE(L+1,0)\geq\cE(L,0)$, while at $k=\tfrac{L+1}{2}$ one has $\psi_k=0$ and it reads $\cE(L+1,k)\geq\cE(L,k-1)$. With the convention $\cE(L,k)=+\infty$ when $W_{L,k}=0$, both are again instances of \eqref{eq:two_term}.
\end{proof}

\begin{cor}\label{cor:propagation}
Fix $k$ and suppose $\cE(L,k) < \cE(L,r)$ for all $r>k$. If moreover $\cE(L+1,k)<\cE(L,k)$, then $\cE(L+1,k) < \cE(L+1,r)$ for all $r > k$.
\end{cor}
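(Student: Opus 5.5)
The argument is a direct application of Proposition~\ref{prop:two_term} to each $r>k$, combined with the two hypotheses. Fix $r>k$. Proposition~\ref{prop:two_term} applied at level $r$ gives
\[
    \cE(L+1,r) \;\geq\; \min\{\cE(L,r),\,\cE(L,r-1)\}\,.
\]
We bound each of the two quantities in the minimum from below by $\cE(L,k)$. First, $r>k$, so the hypothesis gives $\cE(L,r) > \cE(L,k)$. Second, $r-1 \geq k$, so there are two cases:
\begin{itemize}
    \item if $r-1>k$, the hypothesis gives $\cE(L,r-1) > \cE(L,k)$;
    \item if $r-1=k$, then $\cE(L,r-1)=\cE(L,k)$.
\end{itemize}
Either way $\min\{\cE(L,r),\cE(L,r-1)\} \geq \cE(L,k)$, and therefore $\cE(L+1,r) \geq \cE(L,k)$.

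The second hypothesis, $\cE(L,k) > \cE(L+1,k)$, now yields the strict inequality
\[
    \cE(L+1,r) \;\geq\; \cE(L,k) \;>\; \cE(L+1,k)\,,
\]
which is the claim.

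The only care needed is with the convention $\cE=+\infty$ for empty sectors. When $r > \lfloor (L+1)/2\rfloor$ the claim is trivial since $\cE(L+1,r)=+\infty$. When $r=\tfrac{L+1}{2}$ we have $\cE(L,r)=+\infty$, and Proposition~\ref{prop:two_term} is still stated with that convention, so the same chain of inequalities goes through unchanged. In all cases the inequalities are between extended reals, where they remain valid.

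There is no real obstacle. The only subtle point is the boundary case $r=k+1$, where the hypothesis yields only the non-strict bound $\cE(L,r-1)\geq\cE(L,k)$. Strictness there comes entirely from the volume monotonicity assumption $\cE(L+1,k)<\cE(L,k)$.
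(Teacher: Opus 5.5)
Your proof is correct and follows the paper's argument: for each $r>k$, Proposition~\ref{prop:two_term} gives $\cE(L+1,r)\geq\min\{\cE(L,r),\cE(L,r-1)\}\geq\cE(L,k)$, with equality possible only at $r=k+1$, and the volume-monotonicity hypothesis $\cE(L+1,k)<\cE(L,k)$ then supplies strictness. Your remarks on the $+\infty$ convention for empty sectors are a harmless addition to that chain.
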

\begin{proof}
Let $r>k$ so $r-1 \geq k$. Then $\cE(L,r) > \cE(L,k)$ by hypothesis and $\cE(L,r-1)\geq \cE(L,k)$, with equality possible only when $r-1 = k$. Proposition~\ref{prop:two_term} then gives
\[
    \cE(L+1,r) \ \geq\ \min\{\cE(L,r),\cE(L,r-1)\} \ \geq\ \cE(L,k) \ >\ \cE(L+1,k)\,. \qedhere
\]
\end{proof}

We can now prove FOEL for half-odd integer spin-$s$.
\begin{proof}[Proof of Theorem~\ref{thm:foel} for half-odd integer $s$]
We induct on $L$. For $L=2$ we have $\cH_2 = \big(W_{2,0}\otimes U_0\big)\oplus\big(W_{2,1}\otimes U_1\big)$, the two standard modules being spanned by the pattern with two dots and the pattern with a single arc respectively. The Hamiltonian $H_2 = P^{(0)}_{1,2}$ vanishes on the first and acts as the identity on the second, so $\cE(2,0)=0 < 1 = \cE(2,1)$.

Assume \eqref{eq:foel} holds at $L$ and fix $k$ with $W_{L+1,k}\neq 0$. If $W_{L,k}\neq 0$ as well, then the inductive hypothesis gives $\cE(L,k)<\cE(L,r)$ for all $r>k$, Corollary~\ref{cor:monotonicity} gives $\cE(L+1,k)<\cE(L,k)$ when $k \geq 1$, and Corollary~\ref{cor:propagation} yields $\cE(L+1,k)<\cE(L+1,r)$ for all $r>k$. For $k=0$ one argues directly: $\cE(L+1,0)=0$ while $\cE(L+1,r)>0$ for $r \geq 1$ by frustration freeness, since the kernel of $H_{L+1}$ is exactly the sector $k=0$.

It remains to treat the case $W_{L,k}=0$, which occurs only for $k = \tfrac{L+1}{2}$, the sector newly available at length $L+1$. Since this is the largest admissible index at length $L+1$, there is no $r>k$ to consider, and with the convention $\cE(L,k)=+\infty$ Proposition~\ref{prop:two_term} gives $\cE(L+1,k)\geq \cE(L,k-1)$. For $1 \leq r < k$ we then have, using the inductive hypothesis and Corollary~\ref{cor:monotonicity},
\[
    \cE(L+1,r) \ <\ \cE(L,r) \ \leq\ \cE(L,k-1) \ \leq\ \cE(L+1,k)\,,
\]
while $\cE(L+1,0) = 0 < \cE(L+1,k)$ by frustration freeness.
\end{proof}

\section{The Spectral Gap}
\label{sec:gap}

As a consequence of the ordering of energy levels, we can derive a uniform lower bound on the finite volume gaps. Theorem~\ref{thm:foel} reduces the computation of the spectral gap of $H_L$ to the diagonalization of the $k=1$ sector, and that sector turns out to be a nearest-neighbor hopping problem on a path. The proof follows closely that of \cite{nachtergaele2004ferromagnetic}.

\begin{proof}[Proof of Corollary~\ref{cor:fv_gap}]
By frustration freeness the ground state energy is $0$ and the ground state space is the sector $k=0$, so Theorem~\ref{thm:foel} gives $\gamma_L = \cE(L,1)-\cE(L,0) = \cE(L,1)$.

A pattern with a single arc has $L-2$ dots, and since no arc may span a dot the arc must join two adjacent sites. Hence
\[
    B_{L,1} = \{\pi_x : x = 1,\dots,L-1\}\,, \qquad \pi_x = \text{ the pattern whose only arc is } (x,x+1)\,,
\]
so that $\dim W_{L,1}=L-1$. Fix $\ket{V}\in U_1$ and abbreviate $\Psi_x = \Psi_{\pi_x,V}$. By Lemma~\ref{lem:local_action}, the bond $\{x,x+1\}$ falls into case (C0), the bonds $\{x-1,x\}$ and $\{x+1,x+2\}$ fall into cases (C3) and (C2) respectively, and every other bond joins two dots and falls into case (C1). Summing over all bonds gives
\begin{equation}\label{eq:one_arc_action}
    H_L \Psi_x = \Psi_x + \frac{(-1)^{2s}}{d}\big(\Psi_{x-1}+\Psi_{x+1}\big)\,,
\end{equation}
for each $1 \leq x \leq L-1$, where we adopt the convention $\Psi_0 = \Psi_L = 0$. Therefore
\begin{equation}\label{eq:AL1}
    A_{L,1} = \ds + \frac{(-1)^{2s}}{d}\,\Delta_{L-1}\,,
\end{equation}
where $\Delta_{L-1}$ is the adjacency matrix of the path on $L-1$ vertices. Its spectrum is $\{2\cos(\pi m/L) : m=1,\dots,L-1\}$, so
\begin{equation}\label{eq:spec_AL1}
    \sigma(A_{L,1}) = \Big\{1+(-1)^{2s}\cdot\tfrac{2}{d}\cos\big(\tfrac{\pi m}{L}\big) : m = 1,\dots,L-1\Big\}\,,
\end{equation}
which is minimized at $m=1$ when $s$ is a half-odd integer, and minimized at $m=L-1$ when $s$ is an integer. In both cases, $\cE(L,1) = 1-\tfrac2d\cos(\pi/L)$ and $\inf_L\gamma_L = 1 - \tfrac2d > 0$.
\end{proof}

\begin{remark}\label{rem:parity}
The sign in front of $\cos k$ in Equation~\eqref{eq:spec_AL1} is the same one appearing in Proposition~\ref{thm:spinwave_energies} and Equation~\eqref{eq:band_energy}, all for the same reason. For integer $s$ the singlet subspace $V_0 \subset V_s\otimes V_s \cong V_{2s} \oplus V_{2s-1} \oplus \cdots \oplus V_0$ is symmetric under exchange, like $V_{2s}$; for half-odd integer $s$ it is antisymmetric, like $V_{2s-1}$. This is the content of the reversal rule \eqref{eq:reversal}, and it explains both the $1+\cos k$ accompanying $K_{2s}$ and the $1-\cos k$ accompanying $K_{2s-1}$ in Proposition~\ref{thm:spinwave_energies}. Since the band $\{1+(-1)^{2s}2\cos k/d\}_{k}$ is the same set for both parities, the value of the gap is unaffected.
\end{remark}

The rest of this section is devoted to the proof of Theorem~\ref{thm:gns_gap}, deriving the spectral gap of $\omega$'s GNS Hamiltonian with respect to $\tau_t$. In contrast with the above, the argument we present does not use the ordering of energy levels. Instead, the upper and lower bounds on the gap are obtained independently, in Sections~\ref{sec:upper} and~\ref{sec:lower}, and are found to coincide. Both bounds pass through the periodic Hamiltonians $H_L^{\text{PBC}}$. The results in Section~\ref{sec:spin_system_facts} of the appendix allow information about the Hamiltonians with periodic boundaries to be transferred to the GNS Hamiltonian $H$, which was defined using the interaction $\Phi$ implementing open boundary conditions. 

\subsection{The Upper Bound}
\label{sec:upper}

We exhibit an explicit band of excited states of the periodic chains and show that it survives the infinite volume limit as essential spectrum of $H$. In the spin-$1$ case these states, together with the higher deviate states, were used by \cite{parkinson1988} and \cite{koberle1994deformed} to solve the spin-$1$ biquadratic model and its $q$-deformation by Bethe ansatz. The vectors below are their higher spin generalizations, which also appeared in \cite{koberle1995AD} and \cite{limasantos1998exact} to solve more general Temperley--Lieb chains with Bethe ansatz as well. 

Fix $L \geq 3$ and let $\ket{-s}_x \in \cH_L$ denote the vector that is $\ket{s}$ at every site except $x$, where it is $\ket{-s}$; for $\ga \neq \pm s$ let $\ket{\ga,-\ga}_{x,x+1}$ denote the vector that is $\ket{s}$ at every site except $x$ and $x+1$, which are $\ket{\ga}$ and $\ket{-\ga}$ respectively. At $\ga = \pm s$ the two-site excitation becomes $\ket{-s}_{x+1}$ or $\ket{-s}_x$, hence the restriction $\gamma \neq \pm s$. Let $m : \{-s,\dots,s-1\}\to\{1,\dots,2s\}$ be the bijection $i \mapsto i+s+1$. For $k = \tfrac{2\pi}{L}j$ with $j=0,1,\dots,L-1$, define vectors
\begin{equation}\label{eq:basis_higher_spin}
    v_1^k = v_{m_{-s}}^k = \sum_{x=1}^{L}e^{ikx}\ket{-s}_x \qqtext{ and }
    v_{m_{\ga}}^k = \sum_{x=1}^{L}e^{ikx}\ket{\ga,-\ga}_{x,x+1}\,,
\end{equation}
where $\ga \in \{-s+1,\dots,s-1\}$. For $s=1$ these are simply $v_1^k = \sum_x e^{ikx}\ket{-1}_x$ and $v_2^k = \sum_x e^{ikx}\ket{00}_{x,x+1}$. The quantization of the momenta $k$ is forced by translation invariance of the periodic chain. The vectors \eqref{eq:basis_higher_spin} are mutually orthogonal for each fixed $k$ and span a $2s$-dimensional subspace $Y_k \subset \cH_L$ that is invariant under $H_L^{\text{PBC}}$; we diagonalize $H_L^{\text{PBC}}$ in each $k$-sector separately, by diagonalizing the matrix representation $A_k$ of $H_L^{\text{PBC}}$ in the subspace $Y_k$ using the described basis.

\begin{prop}\label{prop:band}
Let $L \geq 3$, and $k = \tfrac{2\pi}{L}j$ with $j \in \{0,\dots,L-1\}$. Then $Y_k$ is invariant under $H^{\text{PBC}}_L$, the restriction $H^{\text{PBC}}_L|_{Y_k}$ has rank one, and its unique nonzero eigenvalue is
\begin{equation}\label{eq:band_energy}
    \lambda(k) = 1 + (-1)^{2s}\,\frac{2\cos k}{d}\,.
\end{equation}
In particular $\lambda(k) \in \sigma(H_L^{\text{PBC}})$ for every such $k$.
\end{prop}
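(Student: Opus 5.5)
The plan is to compute $H_L^{\text{PBC}}$ directly on the one- and two-site deviates, using one elementary fact: $P^{(0)}\ket{a,b}=0$ unless $a+b=0$. When $b=-a$, \eqref{eq:projector_P0} gives
\[
P^{(0)}\ket{a,-a}=\tfrac{1}{d}(-1)^{s-a}\ket{\psi}.
\]
Write $\Phi_x$ for the vector equal to $\ket{\psi}_{x,x+1}$ on sites $x,x+1$ and $\ket{s}$ everywhere else, with indices taken mod $L$.

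First I expand $\Phi_x$ in the deviate basis using \eqref{eq:singlet_vector}. The term $\alpha=s$ gives $\ket{-s}_{x+1}$. The term $\alpha=-s$ gives $(-1)^{2s}\ket{-s}_x$. The remaining terms give $\sum_{\ga\neq\pm s}(-1)^{s-\ga}\ket{\ga,-\ga}_{x,x+1}$.

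Next I apply each bond term to the basis vectors.
\begin{itemize}
\item For $\ket{-s}_x$, only the bonds $\{x-1,x\}$ and $\{x,x+1\}$ see a pair with zero total $S^3$. These are the pairs $\ket{s,-s}$ and $\ket{-s,s}$, so
\[
H_L^{\text{PBC}}\ket{-s}_x=\tfrac1d\big(\Phi_{x-1}+(-1)^{2s}\Phi_x\big).
\]
\item For $\ket{\ga,-\ga}_{x,x+1}$ with $\ga\neq\pm s$, the neighbouring bonds see $\ket{s,\ga}$ and $\ket{-\ga,s}$. These have nonzero total $S^3$ precisely because $\ga\neq\pm s$. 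Hence only the bond $\{x,x+1\}$ contributes, and
\[
H_L^{\text{PBC}}\ket{\ga,-\ga}_{x,x+1}=\tfrac1d(-1)^{s-\ga}\Phi_x.
\]
\end{itemize}
Here $L\geq3$ guarantees that the sites $x-1,x,x+1$ are distinct on the ring, so the bond bookkeeping is correct.

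Then I pass to momentum space. Set $w^k=\sum_x e^{ikx}\Phi_x$. By the expansion of $\Phi_x$ and a shift of the summation index, which is legitimate because $e^{ikL}=1$,
\[
w^k=(e^{-ik}+(-1)^{2s})\,v_1^k+\sum_{\ga}(-1)^{s-\ga}\,v_{m_\ga}^k\in Y_k .
\]
The bond computations give
\[
H v_1^k=\tfrac1d\,(e^{ik}+(-1)^{2s})\,w^k,\qquad H v_{m_\ga}^k=\tfrac1d\,(-1)^{s-\ga}\,w^k .
\]
Consequently $H|_{Y_k}=w^k\otimes f$ for the linear functional $f$ defined by these coefficients. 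This shows at once that $Y_k$ is invariant and that the rank is at most one.

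The only nonzero eigenvalue is then $f(w^k)$. Substituting the expansion of $w^k$ gives
\[
f(w^k)=\tfrac1d\Big[\,|e^{ik}+(-1)^{2s}|^2+(d-2)\Big]=\tfrac1d\big[\,2+2(-1)^{2s}\cos k+d-2\,\big]=\lambda(k).
\]
The term $d-2$ counts the values $\ga\in\{-s+1,\dots,s-1\}$, and each contributes $(-1)^{2(s-\ga)}=1$ because $s-\ga$ is an integer.

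For $d\geq3$ we have $\lambda(k)\geq1-2/d>0$. Also $w^k\neq0$, since its $v_{m_\ga}^k$ coefficients are nonzero, or its $v_1^k$ coefficient is nonzero when $s=\tfrac12$ with $k\neq\pi$. So the rank is exactly one and $\lambda(k)$ is an eigenvalue of $H_L^{\text{PBC}}$.

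The main obstacle is sign bookkeeping: the factor $(-1)^{s-\alpha}$, the reversal rule \eqref{eq:reversal}, and the direction of the index shift $e^{\pm ik}$. These must be tracked consistently so that the cross term comes out as $2(-1)^{2s}\cos k$ rather than cancelling.
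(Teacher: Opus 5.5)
Your argument is correct and is essentially the paper's proof. The paper does the same bond-by-bond computation to obtain $d\,A_k=p\,q^{T}$, where $p=(\bar a,u)$ is exactly the coordinate vector of your $w^k$ and $q^{T}/d$ is your functional $f$, so its eigenvalue $q^{T}p/d$ is your $f(w^k)$.

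There is one slip, in the aside on $s=\tfrac12$, which the statement excludes anyway. The coefficient $e^{-ik}-1$ vanishes at $k=0$, not at $k=\pi$. In any case $w^k\neq0$ already follows from $f(w^k)=\lambda(k)\neq0$.
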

\begin{proof}
Recall from \eqref{eq:projector_P0} that $d\,P^{(0)} = \ket{\psi}\bra{\psi}$ with $\braket{\psi}{\al,-\al} = (-1)^{s-\al}$, and that $P^{(0)}$ annihilates $\ket{s}\otimes\ket{s}$. Hence only the bonds $\{x-1,x\}$ and $\{x,x+1\}$ act nontrivially on $\ket{-s}_x$, with $\braket{\psi}{s,-s}=1$ and $\braket{\psi}{-s,s} = (-1)^{2s}$, and only the bond $\{x,x+1\}$ acts nontrivially on $\ket{\ga,-\ga}_{x,x+1}$. Therefore
\begin{align}
    d\,H_L^{\text{PBC}}\,v_1^k
    &= \sum_{x=1}^{L}e^{ikx}\Big[dP^{(0)}_{x-1,x}+dP^{(0)}_{x,x+1}\Big]\ket{-s}_x \nonumber\\
    &= \sum_{x=1}^{L}e^{ikx}\left[\sum_{\al=-s}^{s}(-1)^{s-\al}\ket{\al,-\al}_{x-1,x}
       + \sum_{\al=-s}^{s}(-1)^{3s-\al}\ket{\al,-\al}_{x,x+1}\right] \nonumber\\
    &= \sum_{x=1}^{L}e^{ikx}\Big[2\ket{-s}_x + (-1)^{2s}\ket{-s}_{x-1} + (-1)^{2s}\ket{-s}_{x+1}\Big]
       \nonumber\\
    &\qquad + \sum_{\al=-s+1}^{s-1}(-1)^{s-\al}\sum_{x=1}^{L}e^{ikx}
       \Big[\ket{\al,-\al}_{x-1,x} + (-1)^{2s}\ket{\al,-\al}_{x,x+1}\Big] \nonumber\\
    &= \Big[2+(-1)^{2s}\big(e^{ik}+e^{-ik}\big)\Big]v_1^k
       + \sum_{\al=-s+1}^{s-1}(-1)^{s-\al}\Big[e^{ik}+(-1)^{2s}\Big]v^k_{m_\al} \nonumber\\
    &= 2\big(1+(-1)^{2s}\cos k\big)\,v_1^k
       + \big((-1)^{2s}+e^{ik}\big)\sum_{\al=-s+1}^{s-1}(-1)^{s-\al}\,v^k_{m_\al}\,,
    \label{eq:action_v1}
\end{align}
where in the third equality the terms $\al = \pm s$ of the two sums were separated off, these being the one-site excitations $\ket{s,-s}_{x-1,x} = \ket{-s}_x$, $\ket{-s,s}_{x-1,x} = \ket{-s}_{x-1}$, $\ket{s,-s}_{x,x+1} = \ket{-s}_{x+1}$ and $\ket{-s,s}_{x,x+1} = \ket{-s}_x$, and in the fourth equality we reindexed using
\begin{equation}\label{eq:shifts}
    \sum_{x=1}^{L}e^{ikx}\ket{-s}_{x\mp1} = e^{\pm ik}\,v_1^k \qqtext{ and }
    \sum_{x=1}^{L}e^{ikx}\ket{\al,-\al}_{x-1,x} = e^{ik}\,v^k_{m_\al}\,.
\end{equation}
In the same way,
\begin{align}
    d\,H_L^{\text{PBC}}\,v^k_{m_\ga}
    &= \sum_{x=1}^{L}e^{ikx}\,dP^{(0)}_{x,x+1}\ket{\ga,-\ga}_{x,x+1}
     = (-1)^{s-\ga}\sum_{x=1}^{L}e^{ikx}\sum_{\al=-s}^{s}(-1)^{s-\al}\ket{\al,-\al}_{x,x+1} \nonumber\\
    &= (-1)^{s-\ga}\sum_{x=1}^{L}e^{ikx}\left[(-1)^{2s}\ket{-s}_x + \ket{-s}_{x+1}
       + \sum_{\al=-s+1}^{s-1}(-1)^{s-\al}\ket{\al,-\al}_{x,x+1}\right] \nonumber\\
    &= (-1)^{s-\ga}\big((-1)^{2s}+e^{-ik}\big)\,v_1^k
       + (-1)^{s-\ga}\sum_{\al=-s+1}^{s-1}(-1)^{s-\al}\,v^k_{m_\al}\,.
    \label{eq:action_vga}
\end{align}
In particular $Y_k$ is invariant under $H_L^{\text{PBC}}$.

Now set
\[
    a = (-1)^{2s} + e^{ik}\,, \qquad \bar a = (-1)^{2s} + e^{-ik}\,, \qquad
    u_\al = (-1)^{s-\al} \quad (\al = -s+1,\dots,s-1)\,,
\]
so that $u = (u_{-s+1},...,u_{s-1})^T \in \rr^{2s-1}$ and
\begin{equation}\label{eq:a_abar}
    a\bar a = 1 + (-1)^{2s}\big(e^{ik}+e^{-ik}\big) + 1 = 2\big(1+(-1)^{2s}\cos k\big)\,.
\end{equation}
Let $A_k$ denote the matrix of $H_L^{\text{PBC}}|_{Y_k}$ in the ordered basis $\big(v_1^k, v^k_{m_{-s+1}},\dots,v^k_{m_{s-1}}\big)$ of \eqref{eq:basis_higher_spin}. Reading the coefficients off \eqref{eq:action_v1} and \eqref{eq:action_vga}, and rewriting the $(1,1)$ entry by means of \eqref{eq:a_abar}, we have $dA_k$ is the outer product
\begin{equation}\label{eq:rank_one}
    dA_k
    = \begin{pmatrix} a\bar a & \bar a\,u^{T}\\[2pt] a\,u & u\,u^{T}\end{pmatrix}
    = p\,q^{T}\,,
    \qquad
    p = \begin{pmatrix} \bar a\\ u\end{pmatrix}\,,
    \qquad
    q = \begin{pmatrix} a\\ u\end{pmatrix}\,.
\end{equation}
In particular $d\,A_k$ has rank one and its kernel has dimension $2s-1$. Since $(pq^{T})p = (q^{T}p)\,p$, the unique nonzero eigenvalue is $q^{T}p$, and as $u_\al^2 = 1$ for each of the $2s-1$ admissible values of $\al$,
\[
    q^{T}p = a\bar a + \sum_{\al=-s+1}^{s-1}u_\al^{2}
    = 2\big(1+(-1)^{2s}\cos k\big) + (2s-1)
    = (2s+1) + 2(-1)^{2s}\cos k
    = d + 2(-1)^{2s}\cos k\,.
\]
Dividing by $d$ yields \eqref{eq:band_energy}.
\end{proof}

The above band of energies can be constructed for $H_{\Lambda_n}^{\Phi_n}$ on the chain $\Lambda_n = [-L_n, L_n]$ as well, with the same eigenvalues. The only difference is the momenta are now discretized as $\{\frac{2\pi}{N_n}j: j=0,...,N_n-1\}$ (recall $N_n = 2L_n+1$), due to the number of sites, not the position on the chain. We consider these chains in the following propositions because we want to discuss the thermodynamic limit on $\zz$ and not $\nn$.

\begin{prop}\label{prop:essential_spectrum}
The essential spectrum of the GNS Hamiltonian $H$ contains the interval $I = \big[1-\tfrac2d,\, 1+\tfrac2d\big]$.
\end{prop}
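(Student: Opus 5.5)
We construct a Weyl sequence for $H$ at every $\lambda\in I$, using the band of Proposition~\ref{prop:band}. Since $\cos k$ ranges over $[-1,1]$ as $k$ ranges over $[0,2\pi)$, every $\lambda\in I$ can be written as $\lambda = \lambda(k_0) = 1+(-1)^{2s}\tfrac{2\cos k_0}{d}$ for some $k_0$. Because the essential spectrum is closed, it also suffices to treat a dense set of such $\lambda$.

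The first step is to produce an approximate eigenvector of the infinite chain by localizing a plane wave. Let $w_{k_0}$ be the unique eigenvector of the rank one matrix $A_{k_0}$, namely the coefficient vector $p=(\bar a,u)^T$ of \eqref{eq:rank_one}. On $\zz$ define the vector
\[
\xi_{R} = \sum_{x\in\zz} f_R(x)\,e^{ik_0x}\Big(\bar a\,\ket{-s}_x + \sum_{\ga=-s+1}^{s-1}u_\ga\,\ket{\ga,-\ga}_{x,x+1}\Big)\,.
\]
Here $f_R(x)=f(x/R)$ for a smooth bump $f$ supported in $[-1,1]$, possibly shifted to center $c_R$. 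Each local excitation $\ket{\cdot}_x$ is a finite perturbation of the polarized state, so it corresponds to a vector $\pi(A)\Omega\in\cH_\omega$ with $A$ a local operator, such as a spin lowering at $x$ or a product at $x,x+1$. Thus $\xi_R$ is realized as $\pi(A_R)\Omega$ with $A_R$ supported on a finite interval.

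The second step uses the appendix results of Section~\ref{sec:spin_system_facts}. Since $H$ acts on $\pi(A)\Omega$ as $\pi(\delta(A))\Omega$, with $\delta(A)=\sum_b[P^{(0)}_b,A]$ and $P^{(0)}_b\Omega=0$, the vector $H\xi_R$ is obtained by applying the local Hamiltonian to the finite excitation. This is the same computation as \eqref{eq:action_v1}--\eqref{eq:action_vga}, except that the phase $e^{ik_0x}$ now carries a slowly varying envelope. Consequently
\[
(H-\lambda(k_0))\xi_R = \sum_x \big(f_R(x\pm1)-f_R(x)\big)(\text{bounded local terms})\,.
\]
The norm of this is $O(\|f_R'\|_{\ell^2})=O(R^{-1/2})$, while $\|\xi_R\|\sim R^{1/2}$ because the excitations at distinct positions are orthogonal. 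Hence $\|(H-\lambda)\xi_R\|/\|\xi_R\| = O(R^{-1})\to0$.

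Alternatively, one can invoke the periodic chains $\Lambda_n$ through the appendix lemma transferring approximate eigenvectors of $H^{\Phi_n}_{\Lambda_n}$ to $H$. In that case one chooses $k_n=2\pi j_n/N_n\to k_0$ and cuts off the plane wave away from the wrap-around bond.

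The third step gives non-compactness. Taking disjointly supported bumps, e.g. centered at $c_R=4R$ along a sequence $R\to\infty$, makes the normalized $\xi_R$ mutually orthogonal, so they converge weakly to $0$. Weyl's criterion then gives $\lambda(k_0)\in\sigma_{\mathrm{ess}}(H)$, and closedness gives all of $I$.

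The main obstacle is the bookkeeping that identifies $H$ acting on these GNS vectors with the finite-chain computation. Two points require care. First, $\lambda$ must be an eigenvalue of $A_k$ with eigenvector $p$ exactly, and not merely an eigenvalue of the transpose; here $A_kp=\tfrac1d(q^Tp)p$ settles it. Second, the boundary and commutator terms must be uniformly bounded, which holds because only bonds adjacent to the excitation act nontrivially.
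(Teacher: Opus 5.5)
Your proof is correct, but it takes a different route from the paper's.

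The paper never leaves the rings $\Lambda_n$. It takes the exact normalized eigenvector $\chi_n$ of $H^{\Phi_n}_{\Lambda_n}$ from Proposition~\ref{prop:band}, at quantized momenta $k_n\to k$, and realizes it as $\pi(A_n)\Omega$, including the two-site excitation on the wrap-around pair $\{L_n,-L_n\}$. It then bounds $\|(H-\lambda_n)\pi(A_n)\Omega\|$ using only the three bonds that distinguish the ring from the open interval $\widehat\Lambda_n$:
\begin{itemize}
\item the removed wrap-around bond, controlled by translation invariance as $\|h_{L_n,-L_n}\chi_n\|^2=\lambda_n/N_n$;
\item the two new edge bonds, which see only the $\ket{-s}$ component and each contribute $|c_1|^2/(dN_n)$.
\end{itemize}
Weak convergence is checked by the direct estimate $|\omega(B^*A_n)|=O(N_n^{-1/2})$, using $\omega(\cO_x)=0$ and the product structure of $\omega$.

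You instead build smooth wave packets directly on $\zz$. This means redoing the local action behind \eqref{eq:action_v1}--\eqref{eq:action_vga} before Fourier summation, now with a position-dependent coefficient. Your bookkeeping of the $f_R(y\pm1)-f_R(y)$ terms against the orthonormal family $\ket{-s}_y$, $\ket{\ga,-\ga}_{y,y+1}$ is correct. The normalization $\|\xi_R\|^2\sim R\,(|\bar a|^2+2s-1)$ is nondegenerate because $2s-1\geq 1$.

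In exchange, your route gives:
\begin{itemize}
\item every $k_0\in\rr$ directly, with no momentum quantization, so even the closedness remark is unnecessary;
\item a better relative rate, $O(R^{-1})$;
\item no wrap-around term;
\item weak convergence for free, from orthogonality of disjointly supported packets with $\omega(A_R)=0$.
\end{itemize}
The paper's route buys something else: it uses Proposition~\ref{prop:band} as a black box, so the only new computation is three bond estimates.

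Two small corrections. First, your second step uses nothing from the appendix; it needs only $H\pi(A)\Omega=\pi(\delta(A))\Omega$ together with $\pi(P^{(0)}_b)\Omega=0$. Second, the paper has no appendix lemma transferring approximate eigenvectors of $H^{\Phi_n}_{\Lambda_n}$ to $H$. Proposition~\ref{prop:spectrum_transfer} goes the other way, transferring gaps. So your ``alternative'' route would have to be done by hand, which is exactly what the paper does. It also needs no cutoff away from the wrap-around bond, since that bond is handled by the translation-invariance bound.
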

\begin{proof}
We construct a Weyl sequence for each $E \in I$. That is, we produce unit vectors $\psi_n \in \cH_\omega$ with $\psi_n \rightharpoonup 0$ and $\|(H-E)\psi_n\|\to0$. Given $E \in I$ choose $k \in [0,2\pi]$ with $E = \lambda(k)$, where $\lambda(k)$ is given in \eqref{eq:band_energy}. Choose a sequence $k_n \to k$ with each $k_n = \tfrac{2\pi}{N_n}j_n$ for some $j_n \in \{0,\dots,N_n-1\}$, so that $\lambda_n \coloneq \lambda(k_n) = 1 + (-1)^{2s} \,\tfrac{2\cos k_n}{d}$ is an eigenvalue of $H_{\Lambda_n}^{\Phi_n}$ as in Proposition~\ref{prop:band}. Let $\widehat\Lambda_n = [-L_n-1,L_n+1]$ be $\Lambda_n$ together with its two neighboring sites. Throughout, $h_{x,y} = P^{(0)}_{x,y}$ and $\ket{s\cdots s}_\Lambda$ denotes the fully polarized vector of $\cH_\Lambda$.

For $A \in \cA_{\text{loc}}$ with $\mathrm{supp}\,A \subseteq \Lambda_n$ one has $H\pi(A)\Omega = \pi(\delta(A))\Omega$, where
\[
    \delta(A) = \sum_{X :\, X \cap \mathrm{supp}(A) \neq \emptyset}[\Phi(X),A] = [H_{\widehat\Lambda_n},\, A]\,,
    \qquad
    H_{\widehat\Lambda_n} = \sum_{x=-L_n-1}^{L_n} h_{x,x+1}\,.
\]
Since $H_{\widehat\Lambda_n}\ket{s\cdots s}_{\widehat\Lambda_n} = 0$ we have $\pi(AH_{\widehat\Lambda_n})\Omega = 0$. Then for every $E \in \rr$,
\begin{equation}\label{eq:reduction}
    \big\|(H-E)\pi(A)\Omega\big\|
    = \big\|\pi\big((H_{\widehat\Lambda_n}-E)A\big)\Omega\big\|
    = \big\|(H_{\widehat\Lambda_n}-E)\,A\ket{s\cdots s}_{\widehat\Lambda_n}\big\|\,,
\end{equation}
the last equality because $(H_{\widehat\Lambda_n}-E)A \in \cA_{\widehat\Lambda_n}$ and
$\omega$ restricts on $\cA_{\widehat\Lambda_n}$ to the vector state $\ket{s\cdots s}_{\widehat\Lambda_n}$. Likewise $\|\pi(A)\Omega\|
= \|A\ket{s\cdots s}_{\widehat\Lambda_n}\|$. By Proposition~\ref{prop:band} the eigenvector of $H^{\Phi_n}_{\Lambda_n}$ for the eigenvalue $\lambda_n$ lies in $Y_{k_n}$; since the $v^{k}_i$ of \eqref{eq:basis_higher_spin} are mutually orthogonal with $\|v^{k}_i\|^2 = N_n$ (recall these vectors are now being taken on the chain $\Lambda_n$), we may write the eigenvector in normalized form as
\begin{equation}\label{eq:chi_n}
    \chi_n \;\coloneq\; \frac{1}{\sqrt{N_n}}\Big(c_1 v_1^{k_n} + \sum_{\ga} c_{m_\ga}v^{k_n}_{m_\ga}\Big)
    \;\in\; \cH_{\Lambda_n}\,,
    \qquad
    |c_1|^2 + \sum_\ga |c_{m_\ga}|^2 = 1\,.
\end{equation}
For $x = -L_n,\dots,L_n$, set $\cO^{-s}_x = \big(\ket{-s}\bra{s}\big)_x$. For $\ga \in \{-s+1,\dots,s-1\}$ set $\cO^{\ga}_x = \big(\ket{\ga}\bra{s}\big)_x \otimes \big(\ket{-\ga}\bra{s}\big)_{x+1}$ for $-L_n \le x \le L_n-1$ and $\cO^{\ga}_{L_n} = \big(\ket{\ga}\bra{s}\big)_{L_n} \otimes \big(\ket{-\ga}\bra{s}\big)_{-L_n}$ for the wrap-around bond $\{L_n,-L_n\}$ of the ring. Writing $\cO_x = c_1\cO^{-s}_x + \sum_\ga c_{m_\ga}\cO^{\ga}_x$ and
\begin{equation}\label{eq:An}
    A_n = \frac{1}{\sqrt{N_n}}\sum_{x=-L_n}^{L_n} e^{ik_n x}\,\cO_x \ \in\ \cA_{\Lambda_n}\,,
\end{equation}
one has $A_n\ket{s\cdots s} = \chi_n$ exactly. In particular $\|\pi(A_n)\Omega\| = 1$, and $\|\cO_x\| \le |c_1| + \sum_\ga|c_{m_\ga}| \le \sqrt{2s}$ by Cauchy-Schwarz, for every $x$.

Comparing the ring Hamiltonian with $H_{\widehat\Lambda_n}$,
\[
    H_{\widehat\Lambda_n} = H^{\Phi_n}_{\Lambda_n} - h_{L_n,-L_n} + h_{-L_n-1,-L_n} + h_{L_n,L_n+1}\,,
\]
so by $H^{\Phi_n}_{\Lambda_n}\chi_n = \lambda_n \chi_n$ and \eqref{eq:reduction},
\begin{equation}\label{eq:three_terms}
    \big\|(H-\lambda_n)\pi(A_n)\Omega\big\|
    \;\le\; \|h_{L_n,-L_n}\chi_n\| + \|h_{-L_n-1,-L_n}\chi_n\| + \|h_{L_n,L_n+1}\chi_n\|\,,
\end{equation}
where on the right $\chi_n$ is understood as $\ket{s}_{-L_n-1} \otimes \chi_n \otimes \ket{s}_{L_n+1} \in \cH_{\widehat\Lambda_n}$. We bound the three terms.

Let $T$ be the cyclic shift on $\cH_{\Lambda_n}$, so that $T\ket{-s}_x = \ket{-s}_{x+1}$ and $T\ket{\ga,-\ga}_{x,x+1} = \ket{\ga,-\ga}_{x+1,x+2}$ modulo $N_n$. Then $Tv^{k_n}_i = e^{-ik_n}v^{k_n}_i$ for every basis vector of \eqref{eq:basis_higher_spin}, hence $T\chi_n = e^{-ik_n}\chi_n$, while $T h_{x,x+1}T^* = h_{x+1,x+2}$ modulo $N_n$. Therefore $\langle \chi_n, h_{x,x+1}\chi_n\rangle$ is independent of $x$, and summing over the $N_n$ bonds of the ring gives $\langle \chi_n, H^{\Phi_n}_{\Lambda_n}\chi_n\rangle = \lambda_n$. As $h_{L_n,-L_n}$ is a projection,
\begin{equation}\label{eq:wrap}
    \|h_{L_n,-L_n}\chi_n\|^2 = \langle\chi_n, h_{L_n,-L_n}\chi_n\rangle = \frac{\lambda_n}{N_n}\,.
\end{equation}
For the bond $\{-L_n-1,-L_n\}$, recall $dh_{-L_n-1,-L_n} = \ket{\psi}\bra{\psi}_{-L_n-1,-L_n}$ with $\braket{\psi}{\al,-\al} = (-1)^{s-\al}$, so that $\bra{\psi}_{-L_n-1,-L_n}$ annihilates $\ket{s}_{-L_n-1}\otimes\ket{\beta}_{-L_n}$ unless $\beta = -s$, in which case it gives $1$. Among the summands of \eqref{eq:chi_n} only the $x=-L_n$ term of $v^{k_n}_1$ carries $\ket{-s}$ at site $-L_n$. Hence $\bra{\psi}_{-L_n-1,-L_n}\big(\ket{s}_{-L_n-1}\otimes\chi_n\big) = N_n^{-1/2}c_1e^{-ik_nL_n}\ket{s\cdots s}_{[-L_n+1,L_n]}$ and, using $\|\ket{\psi}\|^2 = d$,
\begin{equation}\label{eq:left_edge}
    \|h_{-L_n-1,-L_n}\chi_n\|^2 = \frac{1}{d^2}\,\big\|\ket{\psi}\big\|^2\,\frac{|c_1|^2}{N_n}
    = \frac{|c_1|^2}{dN_n} \;\le\; \frac{1}{dN_n}\,.
\end{equation}
The bond $\{L_n,L_n+1\}$ is similar: $\bra{\psi}_{L_n,L_n+1}$ annihilates $\ket{\beta}_{L_n}\otimes\ket{s}_{L_n+1}$ unless $\beta = -s$, in which case it gives $(-1)^{2s}$, and again only the $x=L_n$ term of $v^{k_n}_1$ carries $\ket{-s}$ at site $L_n$. Thus
\begin{equation}\label{eq:right_edge}
    \|h_{L_n,L_n+1}\chi_n\|^2 = \frac{|c_1|^2}{dN_n} \le \frac{1}{dN_n}\,.
\end{equation}
Substituting \eqref{eq:wrap}, \eqref{eq:left_edge} and \eqref{eq:right_edge} into \eqref{eq:three_terms} and using
$\lambda_n \le 1 + \tfrac2d \le 2$,
\[
    \big\|(H-\lambda_n)\pi(A_n)\Omega\big\|
    \;\le\; \frac{1}{\sqrt{N_n}}\Big(\sqrt{2} + \frac{2}{\sqrt{d}}\Big)\,.
\]

Define the Weyl sequence by $\psi_n \coloneq \pi(A_n)\Omega$. First, note each element is a unit vector. Second,
\[
    \|(H-E)\psi_n\| \;\le\; \big\|(H-\lambda_n)\psi_n\big\| + |\lambda_n - E|
    \;\longrightarrow\; 0\,,
\]
since $\lambda_n \to E$. It finally remains to check $\psi_n \rightharpoonup 0$. Let $\psi' = \pi(B)\Omega$ with $B \in \cA_{\text{loc}}$ and $n$ large enough so $\mathrm{supp}(B) \subset \Lambda_n$. Consider
\[
    \bra{\psi'}\ket{\psi_n} = \omega(B^*A_n)
    = \frac{1}{\sqrt{N_n}}\sum_{x=-L_n}^{L_n}e^{ik_nx}\,\omega(B^*\cO_x)\,.
\]
If $\mathrm{supp}\,\cO_x \cap \mathrm{supp}\,B = \emptyset$ the summand factorizes as
$\omega(B^*)\,\omega(\cO_x)$ and vanishes, because $\omega(\cO_x) = 0$. Since each $\cO_x$ is supported on two sites, at most $2|\mathrm{supp}\,B| + 2$ summands survive, whence
\[
    \big|\bra{\psi'}\ket{\psi_n}\big|
    \;\le\; \frac{\big(2|\mathrm{supp}\,B|+2\big)\,\|B\|\,\sqrt{2s}}{\sqrt{N_n}}
    \;\longrightarrow\; 0\,.
\]
As $\pi(\cA_{\text{loc}})\Omega$ is dense in $\cH_\omega$, $\psi_n \rightharpoonup 0$.
\end{proof}

\subsection{The Lower Bound}
\label{sec:lower}

The matching lower bound comes from Knabe's finite-size criterion \cite{knabe1988energy}, which converts the spectral gap of a small open chain into a bound on the gap of arbitrarily long periodic chains.

\begin{thm}[{\cite{knabe1988energy}}]\label{thm:knabe}
Let $H_K = \sum_{x=1}^{L}h_{x,x+1}$ be a translation invariant frustration free Hamiltonian on a periodic chain of length $L$, whose interaction term is an orthogonal projection, and let $\epsilon_m$ denote the spectral gap of the same model on an open chain of $m+1$ sites. Then for $1 < m < \lfloor L/2\rfloor$,
\begin{equation}\label{eq:knabe}
    H_K^2 \geq \eta_m H_K\,, \qquad \eta_m = \frac{m}{m-1}\Big(\epsilon_m - \frac1m\Big)\,.
\end{equation}
\end{thm}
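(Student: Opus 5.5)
The plan is the standard Knabe argument: expand $H_K^2$ into diagonal, adjacent-bond and disjoint-bond pieces, and compare them with squares of local open-chain Hamiltonians. Write $h_x = h_{x,x+1}$, indices mod $L$. Since each $h_x$ is a projection, $h_x^2 = h_x$, so
\[
H_K^2 = H_K + \sum_{x}\big(h_xh_{x+1}+h_{x+1}h_x\big) + \sum_{|x-y|\geq 2} h_xh_y .
\]
For $|x-y|\ge 2$ the terms commute and are products of commuting projections, hence $h_xh_y\ge 0$. Write $Q = \sum_x \{h_x,h_{x+1}\}$ for the anticommutator sum and $R=\sum_{|x-y|\geq 2}h_xh_y \geq 0$.

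Next introduce the local open-chain Hamiltonians $B_{m,y} = \sum_{x=y}^{y+m-1} h_x$ on the $m+1$ sites $[y,y+m]$. Frustration freeness says each $B_{m,y}$ has ground energy $0$, and $\epsilon_m$ is its gap, so $B_{m,y}^2 \ge \epsilon_m B_{m,y}$. The key point is that $\epsilon_m$ is the gap of the same model on an open chain of $m+1$ sites, and translation invariance makes all the $B_{m,y}$ unitarily equivalent to it; the requirement $m<\lfloor L/2\rfloor$ ensures $[y,y+m]$ does not wrap around the ring.

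Now expand $\sum_y B_{m,y}^2$ and count multiplicities. Each $h_x$ lies in $m$ windows, so $\sum_y B_{m,y} = mH_K$. The diagonal part of $\sum_y B_{m,y}^2$ contributes $mH_K$. A nearest-neighbour pair $\{x,x+1\}$ lies in $m-1$ windows, contributing $(m-1)Q$. A pair at distance $j\geq 2$ (only $j\le m-1$ occurs) lies in $m-j$ windows, so these terms form a sum $\sum_{|x-y|\ge2} c_{xy}h_xh_y$ with $0\le c_{xy}\le m-2$. Since each $h_xh_y$ with $|x-y|\geq 2$ is positive, this is at most $(m-2)R\le (m-1)R$. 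Therefore
\[
m\epsilon_m H_K \le \sum_y B_{m,y}^2 \le mH_K + (m-1)(Q+R).
\]
Using $Q+R = H_K^2 - H_K$ this gives $(m-1)H_K^2 \ge \big(m\epsilon_m - m + (m-1)\big)H_K = (m\epsilon_m - 1)H_K$. Dividing by $m-1>0$ yields $H_K^2\ge \frac{m}{m-1}(\epsilon_m-\frac1m)H_K$.

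The main obstacle is getting the multiplicity bookkeeping right, and in particular the step that discards the distance-$\geq 2$ terms. That step needs both $h_xh_y\ge0$ and the bound $c_{xy}\le m-1$, and the sign of $Q$ is not controlled, so $Q$ must be kept with the exact coefficient $m-1$.
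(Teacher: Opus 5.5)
Your proof is correct, and it is exactly Knabe's original argument; the paper gives no proof of its own and simply cites \cite{knabe1988energy}. One small correction: the hypothesis $m<\lfloor L/2\rfloor$ is not what stops a window $[y,y+m]$ from wrapping around the ring, since $m+1\le L$ already does that. Its actual role is that a window of $m$ bonds can then contain two given bonds only through the shorter arc between them, and this is what makes your multiplicities $m-1$ and $m-j$ exact.
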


Since $\eta_m$ does not depend on $L$, any $m$ with $\epsilon_m > 1/m$ produces an $L$-independent lower bound on the finite volume gap of the singlet model with periodic boundaries, when the theorem is applied to $H_K = H_L^{\text{PBC}}$. It turns out, the optimal such bound is obtained by taking $m=2$ for every spin $s$, as seen in the next proposition. Larger values of $m$ are admissible but give weaker bounds, as Table~\ref{tab:knabe_bulk_gap} shows; the criterion is sharp exactly at $m=2$.

\begin{table}[h!]
    \centering
    \renewcommand{\arraystretch}{1.3}
    \begin{tabular}{c | c c c}
        \hline\hline
        \textbf{Spin} $s$ & \boldmath$m=2$ & \boldmath$m=3$ & \boldmath$m=4$ \\
        \hline
        $1$   & $0.333333$ & $0.292893$ & $0.280874$ \\
        $3/2$ & $0.500000$ & $0.469670$ & $0.460655$ \\
        $2$   & $0.600000$ & $0.575736$ & $0.568524$ \\
        $5/2$ & $0.666667$ & $0.646447$ & --         \\
        \hline\hline
    \end{tabular}
    \caption{Lower bounds $\eta_m$ on the bulk gap obtained from Knabe's criterion evaluated at $m$ bonds. The column $m=2$ reproduces $1-\tfrac2d$ exactly.}
    \label{tab:knabe_bulk_gap}
\end{table}

\begin{prop}\label{prop:three_site}
The spectrum of the singlet model on three sites with open boundaries is
\begin{equation}\label{eq:three_site_spectrum}
    \sigma(H_3) = \Big\{0,\ 1-\tfrac1d,\ 1+\tfrac1d\Big\}\,.
\end{equation}
\end{prop}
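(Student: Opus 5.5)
We will obtain the spectrum of $H_3$ from the sector decomposition of Proposition~\ref{prop:decomposition} rather than by a direct diagonalization on $V_s^{\otimes 3}$. For $L=3$ only $k=0$ and $k=1$ occur, so
\[
    \cH_3 \cong \big(W_{3,0}\otimes U_0\big)\oplus\big(W_{3,1}\otimes U_1\big),
\]
and $H_3$ acts on the $k$-th summand as $\rho_{d,k}(H_3)\otimes\ds_{U_k}$. It therefore suffices to find $\sigma(A_{3,0})$ and $\sigma(A_{3,1})$ and to check that both blocks actually occur in $\cH_3$.

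First, the $k=0$ block. $W_{3,0}$ is spanned by the all-dots pattern, which every generator annihilates (case (C1)). Hence $A_{3,0}=0$, and its multiplicity space $U_0$ is nonzero because it contains $\ket{sss}$. This contributes the eigenvalue $0$.

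Second, the $k=1$ block. By the proof of Corollary~\ref{cor:fv_gap}, $B_{3,1}=\{\pi_1,\pi_2\}$, and by \eqref{eq:AL1}
\[
    A_{3,1}=\ds+\frac{(-1)^{2s}}{d}\begin{pmatrix}0&1\\1&0\end{pmatrix}.
\]
Its eigenvalues are $1\pm\tfrac1d$ for either parity of $2s$. This agrees with \eqref{eq:spec_AL1} at $L=3$, since $\cos(\pi/3)=\tfrac12$ and $\cos(2\pi/3)=-\tfrac12$. The multiplicity space $U_1$ is $V_s$ itself, because there is a single dot and \eqref{eq:jwp} imposes no condition. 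So it is nonzero, and both eigenvalues occur with multiplicity at least $d$.

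The only step needing care is confirming that these blocks exhaust $\cH_3$, so that no further eigenvalue arises. This follows from the decomposition \eqref{eq:decomposition}. As a sanity check we will compare dimensions: $D_3=[4]_q=d^3-2d$ from Theorem~\ref{thm:jw}, and $\dim(W_{3,1}\otimes U_1)=2d$, which sum to $d^3$.

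As an independent cross-check, one can reason directly from the algebra. $H_3=\tfrac1d(E_1+E_2)$, and $E_1E_2E_1=E_1$ gives $(E_1+E_2)^3$ in terms of lower powers, so on $(\ker H_3)^\perp$ the operator $X=E_1+E_2$ satisfies $X^2-2dX+(d^2-1)=0$. The roots $d\pm1$ again give $1\pm\tfrac1d$. Verifying this identity is the main computation, if one prefers not to rely on the sector decomposition.
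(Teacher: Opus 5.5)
Your argument is correct, and its core computation is the paper's. Your $\Psi_{\pi_1,V}=\ket{\psi}_{12}\otimes\ket{v}_3$ and $\Psi_{\pi_2,V}=\ket{v}_1\otimes\ket{\psi}_{23}$ are the $SU(2)$-frame versions of the paper's $v_1^i=\ket{\phi}_{12}\otimes\ket{i}_3$ and $v_2^i=\ket{i}_1\otimes\ket{\phi}_{23}$. Both arguments reduce $E_1+E_2$ to the block $\begin{pmatrix} d & \pm1\\ \pm1 & d\end{pmatrix}$; the sign $(-1)^{2s}$ you carry is removed in the paper by passing to the $O(d)$ frame.

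Where you differ is in excluding further eigenvalues. The paper does this by hand: $\Im E_1$ and $\Im E_2$ lie in $Z=\mathrm{span}\{v_1^i,v_2^i\}$, and $E_1+E_2$ is invertible on $Z$. Hence $\Im(E_1+E_2)=Z$ and $\ker H_3=Z^\perp$, with no representation theory at all. You instead invoke Proposition~\ref{prop:decomposition}, i.e.\ semisimplicity of $TL_3(d)$, the classification of its simple modules, and the double centralizer theorem.

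Your route presents the result as the $L=3$ case of the general sector machinery and gives the multiplicities along the way. For the eigenvalue set alone you only need that every constituent is $W_{3,0}$ or $W_{3,1}$ and that both occur, not the exact identification $U_1=V_s$. The paper's route keeps this small input to Knabe's bound self-contained. In fact your ``only step needing care'' can be discharged exactly as the paper does, since $\Im E_1=\{\ket{\psi}_{12}\otimes\ket{v}\}$ and $\Im E_2=\{\ket{v}\otimes\ket{\psi}_{23}\}$ already lie in the span of your $k=1$ vectors.

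Finally, the identity $X^3=2dX^2-(d^2-1)X$ in your cross-check is right, but it yields only $\sigma(H_3)\subseteq\{0,1\pm\tfrac1d\}$. Showing that both nonzero values are attained still needs something extra, e.g.\ the eigenvectors $\Psi_{\pi_1,V}\pm\Psi_{\pi_2,V}$.
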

\begin{proof}
By the equivalent formulation of the singlet model in terms of the $O(d)$ singlet \eqref{eq:Od_singlet}, we may compute with the unitarily transformed Hamiltonian $dH_3 = E_1+E_2$ where $E_x = \ket{\phi}\bra{\phi}$ acts on the bond $\{x,x+1\}$. Here $\ket{\phi} = \sum_{j=1}^d\ket{j,j}$, for a basis of the onsite Hilbert space $\{\ket{j} : j=1,...d\}$. For each $i=1,\dots,d$ define
\[
    v_1^i = \sum_j \ket{j,j,i}\,, \qquad v_2^i = \sum_j\ket{i,j,j}\,.
\]
A direct computation gives $E_1v_1^i = d\,v_1^i$, $E_1v_2^i = v_1^i$, $E_2v_1^i = v_2^i$ and $E_2v_2^i = d\,v_2^i$, so that $Z = \mathrm{span}\{v_1^i,v_2^i : i=1,...,d\}$ is invariant when acted on by the Hamiltonian. Moreover $\langle v_1^i,v_1^l\rangle = \langle v_2^i,v_2^l\rangle = d\,\delta_{il}$ and $\langle v_1^i,v_2^l\rangle = \delta_{il}$, so $Z$ splits into $d$ different mutually orthogonal two-dimensional sectors, and $\dim Z = 2d$.

Since $E_1$ has image $\ket{\phi} \otimes \cc^d \cong \mathrm{span}\{v_1^i : i=1,...,d\}$ and $E_2$ has image $\cc^d \otimes \ket{\phi} \cong \mathrm{span}\{v_2^i : i=1,...,d\}$, we have $\Im(E_1+E_2)\subseteq Z$. Within the $i$-th sector, for each $i=1,\dots,d$, the matrix of $E_1+E_2$ in the basis $(v_1^i,v_2^i)$ is
\[
    \begin{pmatrix} d & 1\\ 1 & d\end{pmatrix}
\]
with eigenvalues $d\pm1$, both nonzero as $d \geq 3$. Hence $E_1+E_2$ restricts to an invertible operator on $Z$, so $(E_1+E_2)(Z) = Z$. In particular, $Z \subseteq \Im(E_1+E_2)$ so $\Im(E_1+E_2) = Z$. Being self adjoint, $\ker(E_1+E_2) = Z^{\perp}$, of dimension $d^3-2d$. Dividing the eigenvalues by $d$ gives \eqref{eq:three_site_spectrum}.
\end{proof}

\begin{cor}\label{cor:knabe_bound}
For every $L \geq 6$, the spectral gap of $H_L^{\text{PBC}}$ is bounded below by $1-\tfrac2d$.
\end{cor}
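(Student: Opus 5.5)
We apply Theorem~\ref{thm:knabe} with $m=2$ to $H_K = H_L^{\text{PBC}}$, with interaction $h_{x,x+1} = P^{(0)}_{x,x+1}$. The hypotheses hold: $H_L^{\text{PBC}}$ is translation invariant, each term is an orthogonal projection, and the model is frustration free because $\ket{s\cdots s}$ lies in the kernel of every term. The range condition $1 < m < \lfloor L/2\rfloor$ with $m=2$ requires $\lfloor L/2\rfloor \geq 3$, which is exactly $L \geq 6$. This explains the hypothesis of the corollary.

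Next we identify $\epsilon_2$. This is the spectral gap of the open chain on $m+1 = 3$ sites. By Proposition~\ref{prop:three_site}, $\sigma(H_3) = \{0, 1-\tfrac1d, 1+\tfrac1d\}$, with $0$ the ground state energy. Hence $\epsilon_2 = 1 - \tfrac1d$. As a cross-check, Corollary~\ref{cor:fv_gap} at $L=3$ gives $1 - \tfrac2d\cos(\pi/3) = 1-\tfrac1d$. Substituting into \eqref{eq:knabe},
\[
    \eta_2 = \frac{2}{1}\Big(1-\frac1d-\frac12\Big) = 1-\frac2d\,,
\]
which is positive since $d \geq 3$.

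Finally, we pass from the operator inequality $H_K^2 \geq \eta_2 H_K$ to the gap. Since $H_K$ is self-adjoint and finite dimensional, we diagonalize it. Every eigenvalue $\lambda$ satisfies $\lambda^2 \geq \eta_2\lambda$, so either $\lambda = 0$ or $\lambda \geq \eta_2$. The ground state energy is $0$ by frustration freeness, so every nonzero eigenvalue is at least $1-\tfrac2d$. This is the claimed lower bound on the gap.

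There is no real obstacle. The only points needing care are the index bookkeeping (that Knabe's $\epsilon_m$ refers to $m+1$ sites, so $m=2$ uses $H_3$) and the range condition that forces $L \geq 6$.
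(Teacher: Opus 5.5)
Your proposal is correct and follows the same route as the paper: apply Theorem~\ref{thm:knabe} at $m=2$ (admissible exactly when $L \geq 6$), read off $\epsilon_2 = 1-\tfrac1d$ from Proposition~\ref{prop:three_site}, and compute $\eta_2 = 1-\tfrac2d$. Your explicit passage from $H_K^2 \geq \eta_2 H_K$ to the eigenvalue bound is the standard step that the paper leaves implicit.
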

\begin{proof}
We apply Theorem~\ref{thm:knabe} with $m=2$, which is admissible for $\lfloor L/2\rfloor > 2$, that is for $L \geq 6$. By Proposition~\ref{prop:three_site}, $\epsilon_2 = 1-\tfrac1d > \tfrac12$ given $d \geq 3$, so that
\[
    \eta_2 = \frac{2}{1}\Big(\epsilon_2 - \frac12\Big) = 2\Big(\tfrac12-\tfrac1d\Big) = 1-\frac2d\,. \qedhere
\]
\end{proof}

We now have all the necessary results to prove Theorem~\ref{thm:gns_gap}.

\begin{proof}[Proof of Theorem~\ref{thm:gns_gap}]
Recall the strategy was to derive an upper and lower bound for the GNS gap, and see they agree. For the upper bound, Proposition~\ref{prop:essential_spectrum} showed $1-\tfrac2d \in \sigma(H)$. Since $H\Omega = 0$ and $H \geq 0$, the gap is at most $1-\tfrac2d$.

For the lower bound, Corollary~\ref{cor:knabe_bound} gives $\sigma\big(H^{\Phi_n}_{\Lambda_n}\big)\cap\big(0,1-\tfrac2d\big) = \emptyset$ for every $n$ with $N_n \geq 6$, since $H^{\Phi_n}_{\Lambda_n} = H^{\text{PBC}}_{N_n}$ is the periodic Hamiltonian on $N_n$ sites. Hence the hypothesis of Proposition~\ref{prop:spectrum_transfer} is satisfied for every $E \in (0,1-\tfrac2d)$, and therefore $\sigma(H)\cap(0,1-\tfrac2d) = \emptyset$. That is, the spectral gap of $H$ is at least $1-\tfrac2d$.
\end{proof}

\subsection{The Maximally Mixed Ground State}
\label{sec:other_ground_states}

Theorem~\ref{thm:gns_gap} concerns the fully polarized state $\omega$, which on finite chains is one ground state among the exponentially many counted by Theorem~\ref{thm:jw}. We now construct a second infinite volume ground state, obtained by averaging over the whole ground state space rather than selecting a distinguished vector in it, and show that it too is gapped. The construction rests on the fact that the partial trace of the Jones--Wenzl projection is again a Jones--Wenzl projection. Recall we let $P_L$ denote the ground state projector on $[1,L]$, and it satisfies the Jones--Wenzl recursion in Theorem~\ref{thm:jw}.

\begin{lem}\label{lem:partial_trace_JW}
For every $L \geq 2$,
\begin{equation}\label{eq:partial_trace_JW}
    \Tr_L\big(P_L\big) = \frac{D_L}{D_{L-1}}\,P_{L-1}\,,
\end{equation}
where $\Tr_L$ denotes the partial trace over the site $L$.
\end{lem}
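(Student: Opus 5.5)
Apply the partial trace over site $L$ to both sides of the Jones--Wenzl recursion of Theorem~\ref{thm:jw}:
\[
P_L = P_{L-1} - \frac{\Omega_{L-2}}{\Omega_{L-1}}\,P_{L-1}E_{L-1}P_{L-1}.
\]
Here $P_{L-1}$ acts as $P_{L-1}\otimes\ds$ on $\cH_{L-1}\otimes V_s$. Two terms need computing.

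First, $\Tr_L(P_{L-1}\otimes\ds)=d\,P_{L-1}$.

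Second, since $P_{L-1}$ does not touch site $L$, it can be pulled out of the partial trace:
\[
\Tr_L\big(P_{L-1}E_{L-1}P_{L-1}\big)=P_{L-1}\,\Tr_L(E_{L-1})\,P_{L-1}.
\]
The key local identity is $\Tr_L(E_{L-1})=\ds_{L-1}$, the partial trace of the Temperley--Lieb generator in its second slot, or ``closing a strand'' in diagram language. To verify it, write $E_{L-1}=\ket{\psi}\bra{\psi}$ on sites $L-1,L$ with $\ket{\psi}=\sum_\al(-1)^{s-\al}\ket{\al,-\al}$. Tracing out the second factor gives
\[
\sum_{\al,\bet}(-1)^{2s-\al-\bet}\ket{\al}\bra{\bet}\,\delta_{-\al,-\bet}=\sum_\al\ket{\al}\bra{\al}=\ds .
\]
This holds for every $s$; the signs square away. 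Hence the second term's trace is $P_{L-1}^2=P_{L-1}$.

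Combining the two computations,
\[
\Tr_L(P_L)=\Big(d-\frac{\Omega_{L-2}}{\Omega_{L-1}}\Big)P_{L-1}=\frac{d\,\Omega_{L-1}-\Omega_{L-2}}{\Omega_{L-1}}\,P_{L-1}=\frac{\Omega_L}{\Omega_{L-1}}\,P_{L-1}.
\]
By Theorem~\ref{thm:jw}, $D_L=\Omega_L$ for $L\ge 2$, and $D_1=\Tr\ds=d=\Omega_1$, which covers the case $L=2$ where $P_1=\ds$. This gives \eqref{eq:partial_trace_JW}.

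The only point needing care is that the $P_{L-1}$ in the recursion really means $P_{L-1}\otimes\ds$, which is exactly the convention stated in Theorem~\ref{thm:jw}. With that, pulling it through $\Tr_L$ is legitimate. The only real computation is $\Tr_L(E_{L-1})=\ds$; everything else is bookkeeping with the recursion for $\Omega_L$. As a consistency check, taking the full trace of \eqref{eq:partial_trace_JW} gives $D_L=D_L$.
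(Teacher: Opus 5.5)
Your proposal is correct and follows the paper's proof essentially verbatim: you take the partial trace of the Jones--Wenzl recursion, use $\Tr_L(P_{L-1}) = d\,P_{L-1}$ and $\Tr_L(E_{L-1}) = \ds$, and simplify with $d\,\Omega_{L-1} - \Omega_{L-2} = \Omega_L$. You keep the $\Omega$'s until the end and check $D_1 = d = \Omega_1$ separately. That is slightly cleaner than the paper, which writes the coefficient as $D_{L-2}/D_{L-1}$ and so tacitly needs $D_0 = 1$ at $L=2$.
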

\begin{proof}
$P_{L-1}$ acts as identity on site $L$ so that $\Tr_L(P_{L-1}) = d\,P_{L-1}$. Additionally, $\Tr_L(E_{L-1}) = \ds_{[1,L-1]}$ so that
\[
    \Tr_L\big(P_{L-1}E_{L-1}P_{L-1}\big) = P_{L-1}\,\Tr_L\big(E_{L-1}\big)\,P_{L-1} = P_{L-1}\,.
\]
Then by Theorem~\ref{thm:jw}
\[\Tr_L(P_L) = \Tr_L(P_{L-1}) - \frac{D_{L-2}}{D_{L-1}} \Tr_L(P_{L-1}E_{L-1}P_{L-1}) = \left(d-\frac{D_{L-2}}{D_{L-1}}\right)P_{L-1} = \frac{D_L}{D_{L-1}}P_{L-1}\,.\]
\end{proof}

\begin{definition}\label{def:tracial}
Let $\Lambda \subset \zz$ be an interval of size $m$, and let $\bar\omega_\Lambda$ be the state on $\cA_{\Lambda}$ with density matrix $P_m/D_m$, that is
\begin{equation}\label{eq:tracial_state}
    \bar\omega_\Lambda(A) = \frac{1}{D_m}\Tr\big(P_m\, A\big)\,, \qquad A \in \cA_{\Lambda}\,,
\end{equation}
where $P_m$ is understood to act on $\cA_\Lambda$ through the identification of $\Lambda$ with $[1,m]$.
\end{definition}

Thus $\bar\omega_\Lambda$ is the state of maximal entropy on the ground state space of $H_\Lambda$; equivalently it is the $T \to 0$ limit of the Gibbs states of $H_\Lambda$.

\begin{prop}\label{prop:tracial_consistent}
The family $\{\bar\omega_\Lambda\}$, indexed by finite intervals $\Lambda \subset \zz$, is consistent. It therefore extends uniquely to a state $\bar\omega$ on $\cA$, which is translation invariant and satisfies
\begin{equation}\label{eq:tracial_restriction}
    \bar\omega\big|_{\cA_\Lambda} = \frac{1}{D_m}\,\Tr\big(P_m\,\cdot\,\big)
\end{equation}
for every interval $\Lambda$ of length $m$.
\end{prop}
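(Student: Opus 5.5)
The plan is to check consistency for nested intervals, reduce it to removing one site at either end, and then use Lemma~\ref{lem:partial_trace_JW} together with a reflection argument. Consistency means that if $\Lambda \subseteq \Lambda'$ are intervals of lengths $m \le m'$, then $\bar\omega_{\Lambda'}|_{\cA_\Lambda} = \bar\omega_\Lambda$. Because $\Lambda'$ is obtained from $\Lambda$ by adding sites one at a time at the right or left end, by induction it suffices to treat $m' = m+1$ with the extra site at either end.

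For a site added at the right end, take $A \in \cA_\Lambda$ and identify $\Lambda' = [1,m+1]$ with $\Lambda = [1,m]$. Then $\Tr(P_{m+1}(A\otimes\ds)) = \Tr(\Tr_{m+1}(P_{m+1})A)$. By Lemma~\ref{lem:partial_trace_JW} this equals $\frac{D_{m+1}}{D_m}\Tr(P_m A)$, and dividing by $D_{m+1}$ gives $\bar\omega_\Lambda(A)$.

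For a site added at the left end, I need the analogue $\Tr_1(P_{m+1}) = \frac{D_{m+1}}{D_m}P_{[2,m+1]}$, where $P_{[2,m+1]}$ is the ground state projector on sites $2,\dots,m+1$. I would get this from reflection symmetry rather than redoing the recursion. Let $R$ be the unitary reversing the order of tensor factors, $R\ket{a_1\cdots a_{m+1}} = \ket{a_{m+1}\cdots a_1}$. By the reversal rule \eqref{eq:reversal}, $R\,P^{(0)}_{x,x+1}R^* = P^{(0)}_{m+1-x,m+2-x}$, since the singlet changes at most by a sign and the projector is unchanged. Hence $R H_{m+1} R^* = H_{m+1}$, so $RP_{m+1}R^* = P_{m+1}$. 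Similarly $R$ maps $\ker(H_m\otimes\ds)$ onto the kernel of the Hamiltonian on $[2,m+1]$. Conjugating the right-end identity by $R$ then gives the left-end identity, and the same trace computation finishes this case.

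Translation invariance is immediate from the definition, since $\bar\omega_\Lambda$ depends only on the length of $\Lambda$ through the identification with $[1,m]$, so $\bar\omega_{\Lambda+a}\circ\tau_a = \bar\omega_\Lambda$. Existence and uniqueness of the extension then follow as for $\omega$ in \eqref{eq:omega}. Every local observable lies in some $\cA_\Lambda$ with $\Lambda$ an interval, since intervals are cofinal among finite subsets. So a consistent family defines a linear functional on $\cA_{\text{loc}}$ that is positive, normalized, and bounded with norm 1. It therefore extends uniquely by continuity to a state on $\cA$, which satisfies \eqref{eq:tracial_restriction} by construction. Translation invariance passes to the extension by density. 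The only step needing care is the left-end reduction, where the reflection argument must respect the sign $(-1)^{2s}$; since it enters only through a rank-one projector, it drops out.
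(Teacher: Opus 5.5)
Your proposal is correct and follows essentially the same route as the paper. Both reduce consistency to removing a single endpoint, handle the right end with Lemma~\ref{lem:partial_trace_JW}, get the left end by conjugating with the site-reversal unitary (which leaves each singlet projector, $H_m$ and $P_m$ invariant), and deduce the extension and translation invariance from consistency and the fact that $\bar\omega_\Lambda$ depends only on the length of $\Lambda$. The only cosmetic point is that $\tau$ already denotes the dynamics in the paper, which writes the shift automorphism as $\alpha_y$.
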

\begin{proof}
Let $\Lambda' \subseteq \Lambda$ be intervals, of lengths $m' \leq m$. Since $\Lambda\setminus\Lambda'$ consists of a segment to the left of $\Lambda'$ and a segment to its right, it suffices to treat the two cases where a single site is removed from one end of $\Lambda$; the general case follows by iterating. Throughout we identify $\Lambda$ with $[1,m]$. Removing the rightmost site is Lemma~\ref{lem:partial_trace_JW}: tracing out the site $m$ in \eqref{eq:tracial_state} gives the density matrix $\Tr_m(P_m)/D_m = P_{m-1}/D_{m-1}$, which is that of $\bar\omega_{\Lambda\setminus\{m\}}$.

Removing the leftmost site follows from this together with the reversal symmetry of the model. Let $R$ be the unitary on $\cH_{[1,m]}$ determined by $R\big(v_1\otimes\cdots\otimes v_m\big) = v_m\otimes\cdots\otimes v_1$. Then $RP^{(0)}_{x,x+1}R^{*} = P^{(0)}_{m-x,m-x+1}$, so $R$ permutes the interaction terms of $H_m$ among themselves, whence $RH_mR^* = H_m$, $R$ preserves $\ker H_m$, and $RP_mR^* = P_m$. Write $R':\cH_{[2,m]} \to \cH_{[1,m-1]}$ for the analogous reversal on $[2,...,m]$, followed by the translation identifying $[2,m]$ with $[1,m-1]$. That is, $R\big(\ket{j}_1\otimes v\big) = R'v\otimes\ket{j}_m$ for every $v \in \cH_{[2,m]}$. Then for $v, w \in \cH_{[2,m]}$,
\begin{align*}
    \big\langle v,\ \Tr_1(P_m)\,w\big\rangle
    &= \sum_{j}\big\langle \ket{j}_1\otimes v,\ P_m\,\ket{j}_1\otimes w\big\rangle\\
    &= \sum_{j}\big\langle R\big(\ket{j}_1\otimes v\big),\ RP_mR^*\,R\big(\ket{j}_1\otimes w\big)\big\rangle\\
    &= \sum_{j}\big\langle R'v\otimes\ket{j}_m,\ P_m\,\left(R'w\otimes\ket{j}_m\right)\big\rangle
     = \big\langle R'v,\ \Tr_m(P_m)\,R'w\big\rangle\\
    &= \frac{D_m}{D_{m-1}}\big\langle R'v,\ P_{m-1}R'w\big\rangle
     = \frac{D_m}{D_{m-1}}\big\langle v,\ R'^*P_{m-1}R'\,w\big\rangle\,.
\end{align*}
By definition of $R'$ and reversal invariance of $P_{m-1}$ on $[1,m-1]$, we have $R'^*P_{m-1}R'$ is the ground state projection $P_{m-1}$ now understood on $[2,m]$. Tracing out the first site in \eqref{eq:tracial_state} thus gives the density matrix of the restriction of $\bar\omega_\Lambda$ on the remaining sites, which shows consistency. 

A consistent family of states on the local algebras of the intervals determines a unique state on $\cA_{\text{loc}}$: for finite $X \subset \zz$ choose an interval $\Lambda \supseteq X$ and restrict $\bar\omega_\Lambda$ to $\cA_X$, which is independent of the choice by consistency. This extends by density to a state $\bar\omega$ on $\cA$, whose restriction to $\cA_\Lambda$ is $\bar\omega_\Lambda$ for every interval $\Lambda$, which is \eqref{eq:tracial_restriction}.

For translation invariance, let $\alpha_y$ denote the shift automorphism of $\cA$, so that $\alpha_y$ carries $\cA_\Lambda$ onto $\cA_{\Lambda+y}$. The identification of any length $m$ interval with $[1,m]$ as in Definition~\ref{def:tracial} gives that the state acts with the same density matrix on any such interval. Thus $\bar\omega_{\Lambda+y}\circ\alpha_y = \bar\omega_\Lambda$ for every interval $\Lambda$. Hence $\bar\omega\circ\alpha_y$ and $\bar\omega$ agree on every $\cA_\Lambda$, and therefore on $\cA$.
\end{proof}

The following proposition bounds the GNS gap of $\bar\omega$ from below by $1-\tfrac2d$. \cite{bachmann2016lieb} defines infinite volume ground states as weak-$*$ limits of pure finite volume ground states, but nothing in the proof of their Proposition 5.4 requires this. This is why we then apply their result to the case of $\bar\omega$, which is given as a weak-$*$ limit point of mixed ground states. We do not argue an upper bound of $1-\tfrac2d$ to obtain the exact gap as we did for the GNS Hamiltonian of the fully polarized state, as we have not yet found the right trial states.

\begin{prop}\label{prop:tracial_gapped}
$\bar\omega$ is a ground state of $(\cA, \tau_t)$, and its GNS Hamiltonian $\bar H$ has spectral gap at least $1 - \tfrac{2}{d}$.
\end{prop}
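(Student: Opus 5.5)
We will prove the two claims separately: first that $\bar\omega$ is a ground state, then that its GNS gap is at least $1-\tfrac2d$. The gap will come from the uniform finite volume bound via the criterion of \cite[Proposition 5.4]{bachmann2016lieb}.

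\emph{Ground state property.} Recall that $\omega'$ is a ground state of $(\cA,\tau_t)$ if $-i\,\omega'(A^*\delta(A))\ge 0$ for all local $A$. Fix $A$ supported in an interval $\Lambda_0$ and let $\Lambda\supset\Lambda_0$ be an interval extending it by one site on each side. Then $\delta(A)=[H_\Lambda,A]$, and by \eqref{eq:tracial_restriction} we have
\[
\bar\omega(A^*[H_\Lambda,A]) = \tfrac1{D_m}\Tr\big(P_m A^*H_\Lambda A\big)-\tfrac1{D_m}\Tr\big(P_m A^*AH_\Lambda\big).
\]
The second trace vanishes by cyclicity, because $H_\Lambda P_m=0$. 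The first is nonnegative, since $H_\Lambda\ge0$ and $P_m\ge0$. This is just the statement that a zero-temperature limit of finite volume ground states is an infinite volume ground state, and it requires no pure-state assumption.

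\emph{Gap.} We invoke the argument of \cite[Proposition 5.4]{bachmann2016lieb}. Its input is a sequence of finite volume Hamiltonians with uniformly bounded interactions that agree with $\Phi$ on growing regions. These Hamiltonians must have finite volume ground states whose weak-$*$ limit is the state in question, and a gap above their ground state energy that is uniform in the volume. The conclusion is that the GNS Hamiltonian has no spectrum in $(0,\gamma)$. We take the open chains $H_{[-n,n]}$ with the mixed ground states $\bar\omega_{[-n,n]}$; by Proposition~\ref{prop:tracial_consistent} these converge to $\bar\omega$, trivially since they are restrictions. Corollary~\ref{cor:fv_gap} gives the uniform gap $\gamma_L\ge 1-\tfrac2d$, because $\gamma_L = 1-\tfrac2d\cos(\pi/L)$ decreases to $1-\tfrac2d$. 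The pure versus mixed issue is harmless here. Each $\bar\omega_{[-n,n]}$ is supported on $\ker H_{[-n,n]}$, and the proof only uses that the finite volume spectral projection onto $(0,\gamma)$ annihilates the GNS-type vectors $A\,\sqrt{\rho}$, where the density $\rho$ lives in the kernel. The spectral transfer therefore works with the Hilbert–Schmidt representation in place of a vector.

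\emph{Main obstacle.} The hardest step is adapting the transfer argument, the analogue of Proposition~\ref{prop:spectrum_transfer}, from vector states to a density matrix. We would handle it as follows. Fix a bump function $f$ with $\hat f$ supported in $(0,1-\tfrac2d)$ and a local observable $A$. We need to show $\bar\omega(B^*\tau_t\text{-smeared}(A))$ vanishes for the operator $A_f=\int f(t)\tau_t(A)\,dt$. First, by Lieb–Robinson bounds, $A_f$ is approximated by $\int f(t)\tau^{[-n,n]}_t(A)\,dt$ up to an error uniform over states and vanishing as $n\to\infty$. Second, for the finite volume dynamics,
\[
\Tr\big(P_m B^* A_{f,n}\big)=\Tr\big(P_m B^*\hat f(H)\,A\,P_m\big)\text{-type expressions},
\]
where $\hat f(H)$ is evaluated on the interval $[-n,n]$. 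These vanish because $\hat f(H_{[-n,n]})$ annihilates both the kernel and the spectrum above the gap, and the energy difference between the two sides of $A$ lies in $\sigma(H_{[-n,n]})-0$. Together these show that the spectral measure of $\bar H$ gives zero mass to $(0,1-\tfrac2d)$, which is the claimed gap.
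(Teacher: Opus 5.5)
Your proposal is correct and follows the paper's proof. The ground state property comes from $H_\Lambda P_m=0$ together with cyclicity of the trace. The gap comes from feeding the uniform finite volume bound of Corollary~\ref{cor:fv_gap} into the spectral transfer argument of \cite[Proposition 5.4]{bachmann2016lieb}, whose pure-state assumption is indeed unnecessary: $H_{[-n,n]}P_m=0$ collapses the smeared open-chain dynamics to $\Tr\big(P_mB^*\hat f(H_{[-n,n]})AP_m\big)$, which vanishes.

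The differences are minor. The paper checks that $\sigma(H_\Gamma)$ misses $(0,1-\tfrac2d)$ for every finite $\Gamma\subset\zz$, splitting $\Gamma$ into maximal intervals, so as to match that proposition's hypothesis verbatim; your explicit trace version of Proposition~\ref{prop:spectrum_transfer} needs only the intervals $[-n,n]$. Also, with $\delta(A)=[H_\Lambda,A]$ the ground state condition is $\bar\omega(A^*\delta(A))\geq 0$ without the factor $-i$, and that is exactly what you verify.
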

\begin{proof}
Take $A \in \cA_{\text{loc}}$ and $\Lambda$ an interval of length $m$ containing $\mathrm{supp}(A)$ along with its left and right neighbors. Then
\begin{align*}
\bar\omega\left(A^*\delta(A)\right) &= \sum_{X:\,X \cap \mathrm{supp}(A) \neq \emptyset} \bar\omega\left(A^*[\Phi(X),A]\right)\\
&= \sum_{X:\,X \cap \mathrm{supp}(A) \neq \emptyset} \frac{1}{D_m} \Tr\left(P_m A^*\Phi(X)A - P_m A^*A\Phi(X)\right)\\
&= \sum_{X:\,X \cap \mathrm{supp}(A) \neq \emptyset} \frac{1}{D_m} \Tr\left(P_m A^*\Phi(X)A\right) \\
&= \sum_{X:\,X \cap \mathrm{supp}(A) \neq \emptyset} \frac{1}{D_m} \Tr\left(P_m (\Phi(X)A)^*(\Phi(X)A)\right) \geq 0
\end{align*}
where the third equality is by cyclicity of trace and using $\Phi(X)P_m = 0 = P_m\Phi(X)$ for each possible $X$, the fourth equality was since $\Phi(X)$ is either $0$ or a projection, and the result is nonnegative given $P_m (\Phi(X)A)^*(\Phi(X)A)P_m \geq 0$. Hence $\bar\omega$ is a ground state.

Now we can let $\bar H$ be the GNS Hamiltonian of $\bar\omega$ with respect to $\tau_t$, and lower bound the gap. Let $\Gamma$ be an arbitrary finite subset of $\zz$ and decompose $\Gamma = I_1\sqcup\cdots\sqcup I_r$ as a disjoint union of maximal intervals. If each interval in the decomposition has length one, $H_\Gamma=0$. A bond $\{x,x+1\}$ is contained in $\Gamma$ only if both endpoints lie in the same $I_j$, so $H_\Gamma = \sum_j H_{I_j}$ is a sum of commuting operators acting on disjoint sets of sites, and
\[
    \sigma(H_\Gamma) = \Big\{\sum_{j} \xi_j \ :\ \xi_j \in \sigma\big(H_{I_j}\big)\Big\}\,.
\]
A nonzero element of this set has $\xi_j > 0$ for some $j$, as zero is the minimum energy on each $I_j$ by frustration freeness, and so is at least $\min\{\gamma_{|I_j|} : j \text{ with } |I_j|\geq 2\}$. By Corollary~\ref{cor:fv_gap}, $\gamma_{|I_j|} = 1-\tfrac2d\cos\left(\pi/|I_j|\right)$ is the gap of $H_{I_j}$ for $|I_j| \geq 2$. Since $\gamma_{|I_j|} > 1-\tfrac2d$ for every $|I_j|$, the spectrum of $H_\Gamma$ misses $\big(0,\,1-\tfrac2d\big)$ for every finite $\Gamma$.

Now let $E \in \big(0,1-\tfrac2d\big)$ and choose $\epsilon>0$ with $(E-\epsilon,E+\epsilon)\subset\big(0,1-\tfrac2d\big)$. The hypothesis of Proposition 5.4 of \cite{bachmann2016lieb} is then satisfied so that $E \notin \sigma(\bar H)$. As $E$ was arbitrary, $\sigma(\bar H)\cap\big(0,1-\tfrac2d\big) = \emptyset$.
\end{proof}

By \eqref{eq:tracial_restriction} the restriction of $\bar\omega$ to an interval of length $L$ is maximally mixed on a space of dimension $D_L$, so its entropy is $\ln D_L$ exactly, so the mean entropy is given by
\begin{equation}\label{eq:tracial_entropy_density}
    \lim_{L \to \infty}\frac{1}{L}S\big(\bar\omega|_{\cA_{[1,L]}}\big) = \lim_{L \to \infty}\frac{1}{L} \ln \frac{q^{L+1} - q^{-(L+1)}}{q-q^{-1}} = \ln q > 0
\end{equation}
given $q > 1$, whereas $\omega$ is a product state and has vanishing entropy in every volume. 

We also consider the mutual information between two adjacent blocks $A = [1,k]$ and $C = [k+1,L]$, so that $A \cup C = [1,L]$. By Proposition~\ref{prop:tracial_consistent} the restriction of $\bar\omega$ to each of the three regions is of the form $P_m/D_m$ with $m$ the length of the corresponding interval, so the entropy of each restricted state is maximal and thus a logarithm of $D_m$:
\begin{equation}\label{eq:tracial_mutual_info}
\begin{aligned}
    I(A\!:\!C) &= \ln D_k + \ln D_{L-k} - \ln D_L = \ln\frac{D_kD_{L-k}}{D_L} \\ 
    &= \ln\frac{q}{q-q^{-1}} + \ln\frac{\big(1-q^{-2(k+1)}\big)\big(1-q^{-2(L-k+1)}\big)}{1-q^{-2(L+1)}}\,.
\end{aligned}
\end{equation}
Thus
\begin{equation}\label{eq:tracial_mutual_info_limit}
    I(A\!:\!C) \;\longrightarrow\; \ln\frac{q}{q-q^{-1}}
    \qquad\text{as } \min(k,L-k)\to\infty\,.
\end{equation}
Therefore $\bar\omega$ obeys an area law for mutual information while carrying a volume law for entropy.

\section{The Finite Volume Ground State Projections}
\label{sec:projectors}

In this section we prove Theorem~\ref{thm:jw}: the ground state projectors on open chains satisfy the Jones--Wenzl recursion, and consequently so do the degeneracies.


\subsection{An Algebraic Characterization of Ground State Projections}
\label{sec:algebraic}

We first isolate the general facts about frustration free models that characterize ground state projectors algebraically. Throughout this subsection $\Xi$ is an arbitrary frustration free interaction with $\Xi(X)\geq 0$. For $\Lambda$ a finite subset of the lattice we let $H_\Lambda^\Xi = \sum_{X\subseteq\Lambda}\Xi(X)$ be the corresponding finite volume Hamiltonian, $P_\Lambda$ the orthogonal projection onto $\ker H_\Lambda^\Xi$, and
\begin{equation}\label{eq:interaction_algebra}
     I_\Lambda \coloneq \text{ the non-unital $\cc$-algebra generated by } \{\Xi(X)\}_{X\subseteq\Lambda}\,,
\end{equation}
Frustration freeness says $P_\Lambda \Xi(X) = \Xi(X)P_\Lambda = 0$ for every $X \subseteq \Lambda$, and it implies the nesting relations $P_\Lambda P_{\Lambda'} = P_{\Lambda'}P_\Lambda = P_{\Lambda'}$ for $\Lambda \subseteq \Lambda'$.

\begin{prop}\label{prop:algebra_condition}
For any frustration free model, $\ds - P_\Lambda \in I_\Lambda$.
\end{prop}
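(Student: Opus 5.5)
We show that $\ds - P_\Lambda$ lies in the non-unital algebra generated by the interaction terms by writing it as a polynomial without constant term in $H^\Xi_\Lambda$; since $H^\Xi_\Lambda\in I_\Lambda$, every such polynomial also lies in $I_\Lambda$. Everything is finite dimensional, so $H \coloneq H^\Xi_\Lambda = \sum_{X\subseteq\Lambda}\Xi(X)$ is a positive semidefinite matrix. It has finitely many eigenvalues $0=\lambda_0<\lambda_1<\dots<\lambda_r$. The eigenvalue $0$ may be absent if $\ker H=\{0\}$; in that case $P_\Lambda=0$ and the argument below still works, with the factor for $\lambda_0$ simply omitted.

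First, the spectral theorem gives $H=\sum_{j\ge1}\lambda_j Q_j$, where the $Q_j$ are the spectral projections. The projection onto $\ker H$ is $P_\Lambda$, and $\ds - P_\Lambda = \sum_{j\geq 1}Q_j$ is the projection onto the range of $H$.

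Next, we construct a polynomial $f$ with $f(0)=0$ and $f(\lambda_j)=1$ for every $j\geq1$. An explicit choice is
\[
f(t) = 1-\prod_{j=1}^{r}\Big(1-\frac{t}{\lambda_j}\Big).
\]
This is a polynomial in $t$ with zero constant term. By functional calculus,
\[
f(H) = \sum_{j\ge1} f(\lambda_j)Q_j + f(0)P_\Lambda = \ds-P_\Lambda .
\]
Because $f$ has no constant term, $f(H)$ is a linear combination of the powers $H^n$ with $n\ge1$. Each $H^n$ is a sum of products of the generators $\Xi(X)$, so it lies in $I_\Lambda$. Hence $\ds - P_\Lambda \in I_\Lambda$.

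The only point needing care is to avoid the unit: $I_\Lambda$ is non-unital, so we cannot invoke "$P_\Lambda$ is a polynomial in $H$" directly, because that polynomial has a constant term. Writing $\ds - P_\Lambda$ with $f(0)=0$ circumvents this. Frustration freeness is used only to identify $\ker H$ with the common kernel of the $\Xi(X)$, which is consistent with the definition of $P_\Lambda$. Positivity guarantees that $0$ is the bottom of the spectrum, although the interpolation argument itself does not need it.
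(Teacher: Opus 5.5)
Your proof is correct and is essentially the paper's own argument: both write $\ds - P_\Lambda = f(H^\Xi_\Lambda)$ for an interpolating polynomial with $f(0)=0$ and $f=1$ on the nonzero eigenvalues, so $f$ has no constant term and hence $f(H^\Xi_\Lambda)\in I_\Lambda$; your explicit choice $f(t)=1-\prod_{j}(1-t/\lambda_j)$ just makes the paper's interpolating polynomial concrete. One small aside: the identification of $\ker H^\Xi_\Lambda$ with the common kernel of the $\Xi(X)$ follows from positivity $\Xi(X)\geq 0$ rather than from frustration freeness, but your argument never actually uses that identification.
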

\begin{proof}
The operator $\ds - P_\Lambda$ is the orthogonal projection onto $\Im H_\Lambda$. Let $\lambda_1,\dots,\lambda_r$ be the nonzero eigenvalues of $H_\Lambda^\Xi$ and let $T$ be an interpolating polynomial with $T(0)=0$ and $T(\lambda_i)=1$; it has no constant term. By the functional calculus $\ds - P_\Lambda = T(H_\Lambda^\Xi)$, and since $H_\Lambda^\Xi$ is a linear combination of the generators of $I_\Lambda$, any polynomial in $H_\Lambda$ without constant term lies in $I_\Lambda$.
\end{proof}

\begin{prop}\label{prop:uniqueness}
Let $Q$ be an operator on $\cH_\Lambda$ with $Q\Xi(X) = \Xi(X)Q = 0$ for all $X \subseteq \Lambda$ and $\ds - Q \in I_\Lambda$. Then $Q = P_\Lambda$. In particular $Q$ is the orthogonal projection onto the ground state space of $H_\Lambda$.
\end{prop}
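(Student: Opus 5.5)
We compare $Q$ with $P_\Lambda$ through the two products $QP_\Lambda$ and $P_\Lambda Q$. The key observation is that both $Q$ and $P_\Lambda$ annihilate the whole non-unital algebra $I_\Lambda$ from either side. Indeed, every element of $I_\Lambda$ is a linear combination of finite words $\Xi(X_1)\cdots\Xi(X_n)$ with $n\geq 1$. By hypothesis, $Q\Xi(X_1)=0$ and $\Xi(X_n)Q=0$, so $QA = AQ = 0$ for all $A\in I_\Lambda$. Frustration freeness gives the same for $P_\Lambda$: $P_\Lambda\Xi(X)=\Xi(X)P_\Lambda=0$.

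Next, use $\ds - Q\in I_\Lambda$ and, by Proposition~\ref{prop:algebra_condition}, $\ds-P_\Lambda\in I_\Lambda$. Applying the annihilation property to $A=\ds-Q$ with $P_\Lambda$ gives
\[
P_\Lambda(\ds-Q)=0 \qquad\text{and}\qquad (\ds-Q)P_\Lambda=0,
\]
that is, $P_\Lambda = P_\Lambda Q = QP_\Lambda$. Applying it to $A=\ds-P_\Lambda$ with $Q$ gives
\[
Q(\ds-P_\Lambda)=0 \qquad\text{and}\qquad (\ds-P_\Lambda)Q = 0,
\]
that is, $Q = QP_\Lambda = P_\Lambda Q$. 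Comparing the two displays yields $Q = QP_\Lambda = P_\Lambda$.

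The concluding clause then holds automatically, since $P_\Lambda$ is by definition the orthogonal projection onto $\ker H^\Xi_\Lambda$, the ground state space (zero energy by frustration freeness). Note that no self-adjointness or idempotency of $Q$ is needed as a hypothesis; it is an output.

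The only step requiring care is the annihilation claim for arbitrary elements of $I_\Lambda$. This is where non-unitality matters: $I_\Lambda$ contains no nonzero scalar multiple of $\ds$ added to words, so every element is a sum of products with at least one generator on each end. I would state that explicitly before the computation. Everything else is two lines of algebra.
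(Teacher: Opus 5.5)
Your proof is correct and is essentially the paper's argument. Both use that $Q$ and $P_\Lambda$ annihilate every element of the non-unital algebra $I_\Lambda$ from either side, together with $\ds - P_\Lambda \in I_\Lambda$ from Proposition~\ref{prop:algebra_condition}, to conclude $Q = P_\Lambda Q = P_\Lambda$; the paper simply writes this as the single chain $Q = (\ds - F_P)Q = P_\Lambda Q = P_\Lambda(\ds - F_Q) = P_\Lambda$.
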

\begin{proof}
By hypothesis, $Q = \ds - F_Q$ for some $F_Q \in I_\Lambda$. Since $F_Q$ is a polynomial in the generators of $I_\Lambda$ with no constant term, $QF_Q = F_QQ = 0$. Similarly for $F_P \coloneq \ds - P_\Lambda \in I_\Lambda$ we have $P_\Lambda F_Q = F_QP_\Lambda = QF_P = F_PQ = 0$. Therefore
\[
    Q = (\ds - F_P)Q = P_\Lambda Q = P_\Lambda(\ds - F_Q) = P_\Lambda\,. \qedhere
\]
\end{proof}

The first condition shows $Q$ is supported in the ground state space, and the second condition shows it acts as identity there. Proposition~\ref{prop:uniqueness} is what allows the recursion below to be established by an ansatz. It suffices to exhibit \emph{some} operator built from the interaction terms satisfying the conditions, and it is then automatically the ground state projection, with idempotency and orthogonality coming for free.


\subsection{The Ground State Projectors and Degeneracies}
\label{sec:jw}

We now return to the singlet model on the open chain, where the algebra $I_{[1,L]}$ from the previous section is the non-unital algebra generated by $\{E_x\}_{x=1}^{L-1}$. This is not the image of $TL_L(d)$ under $\rho_d$, which is unital. 

\begin{proof}[Proof of Theorem~\ref{thm:jw}]
We consider the operator
\begin{equation}\label{eq:ansatz}
    G_L = P_{L-1} - c_{L-1}\,P_{L-1}E_{L-1}P_{L-1}
\end{equation}
and determine $c_{L-1}$ as needed from the requirement that $G_L$ satisfies the algebraic conditions of Proposition~\ref{prop:uniqueness} to obtain $G_L = P_L$. For the base case $L=2$ we see 
\[P_2 = \ds - \tfrac1d E_1 = P_1 - \frac{\Omega_0}{\Omega_1} P_1 E_1 P_1\,.\]
We also do the $L=3$ case as it is instructive. The Temperley--Lieb relations give
\begin{equation}\label{eq:E2_P2_E2}
    E_2P_2E_2 = E_2^2 - \tfrac1d E_2E_1E_2 = dE_2 - \tfrac1d E_2 = \frac{d^2-1}{d}\,E_2 = \frac{\Omega_2}{\Omega_1}E_2\,.
\end{equation}
Imposing $E_2G_3 = 0$, or equivalently $G_3E_2=0$, on \eqref{eq:ansatz} forces $c_2 = \Omega_1/\Omega_2$ since $E_2P_2 \neq 0$ (as seen by applying it to the vector $\ket{s,s,-s}$). We also see by definition of $G_3$ that $E_1G_3=0=G_3E_1$, and $\ds-G_3$ is in the non-unital algebra generated by $\{E_1,E_2\}$, again using $P_2 = \ds - \tfrac1d E_1$. Thus $G_3 = P_3$  by Proposition~\ref{prop:uniqueness}. 

For the inductive step assume the ground state projector on the first $L-1$ sites of the chain $[1,L]$ is given by $P_{L-1} = P_{L-2}-c_{L-2}P_{L-2}E_{L-2}P_{L-2}$ with $c_{L-2} = \Omega_{L-3}/\Omega_{L-2}$. By the Temperley-Lieb relations and that $P_{L-2}$ commutes with $E_{L-1}$,
\begin{align}\label{eq:EL_PL_EL}
    E_{L-1}P_{L-1}E_{L-1} &= E_{L-1}P_{L-2}E_{L-1} - c_{L-2}P_{L-2}E_{L-1}E_{L-2}E_{L-1}P_{L-2} \nonumber\\
    &= P_{L-2}E_{L-1}^2 - c_{L-2}P_{L-2}E_{L-1}P_{L-2} = \big(d - c_{L-2}\big)P_{L-2}E_{L-1}\,.
\end{align}
Imposing $E_{L-1}G_L=0$, or equivalently $G_LE_{L-1}=0$, on \eqref{eq:ansatz} and using the nesting relation $P_{L-2}P_{L-1}=P_{L-1}$ gives
\begin{equation}\label{eq:EL_PL_constraint}
    E_{L-1}P_{L-1} = c_{L-1}\big(d-c_{L-2}\big)E_{L-1}P_{L-1}\,, \qquad \text{ hence } \qquad c_{L-1} = \frac{1}{d-c_{L-2}}\,.
\end{equation}
Here we similarly used as in the $L=3$ case that $E_{L-1}P_{L-1} \neq 0$, as seen by applying it to the vector $\ket{s,\dots,s, -s}$. Solving with $c_k = \Omega_{k-1}/\Omega_k$ converts this continued fraction into the desired recursion $\Omega_{L-1} = d\,\Omega_{L-2}-\Omega_{L-3}$. It remains to identify the operator $G_L$ we constructed with $P_L$. By the inductive hypothesis $P_{L-1}$ annihilates $E_x$ for $x \leq L-2$, and this is inherited by \eqref{eq:ansatz}; together with $E_{L-1}G_L = G_LE_{L-1}=0$, the operator $G_L$ annihilates every interaction term. Moreover
\[
    \ds - G_L = \big(\ds - P_{L-1}\big) + c_{L-1}P_{L-1}E_{L-1}P_{L-1}\,,
\]
with each summand lying in $I_{[1,L]}$ by the inductive hypothesis. Thus $G_L = P_L$ by Proposition~\ref{prop:uniqueness}. 

Now we move on to the second part of the theorem, where by taking traces in \eqref{eq:jw_recursion} we convert the operator identity into a scalar recursion for the ground state degeneracies. We have $D_2 = \Tr P_2 = d^2-1 = \Omega_2$. Assume $D_k = \Omega_k$ for $k<L$ and take the trace of \eqref{eq:jw_recursion} over $\cH_L$. Using cyclicity, idempotency of $P_{L-1}$, and the partial trace identity $\Tr_L(E_{L-1}) = \ds_{[1,L-1]}$,
\begin{align}\label{eq:DL_trace}
    D_L &= \Tr(P_{L-1}) - c_{L-1}\Tr_{[1,L-1]}\big(P_{L-1}\Tr_L(E_{L-1})\big) \nonumber\\
    &= d\,D_{L-1} - c_{L-1}D_{L-1}\,.
\end{align}
Substituting $c_{L-1} = \Omega_{L-2}/\Omega_{L-1} = D_{L-2}/D_{L-1}$ gives $D_L = d\,D_{L-1}-D_{L-2}$. Since $D_L$ and $\Omega_L$ obey the same recursion with the same initial data, they agree.

It remains to solve the recursion. The characteristic equation is $\phi^2 - d\phi+1=0$, with roots
\begin{equation}\label{eq:char_roots}
    \phi_\pm = \frac{d\pm\sqrt{d^2-4}}{2}\,, \qquad \phi_+\phi_- = 1\,,
\end{equation}
so we may write $\phi_- = \phi_+^{-1}$ and $D_L = A\phi_+^{L}+B\phi_+^{-L}$. The initial conditions $\Omega_0=1$ and $\Omega_1 = d$ give $A+B=1$ and $A\phi_+ + B\phi_+^{-1}=d$; using $d = \phi_++\phi_+^{-1}$, the second reads $A(\phi_+-\phi_+^{-1}) + \phi_+^{-1} = \phi_++\phi_+^{-1}$, so $A = \phi_+/(\phi_+-\phi_+^{-1})$ and $B = -\phi_+^{-1}/(\phi_+-\phi_+^{-1})$. Hence
\begin{equation}\label{eq:DL_explicit}
    D_L = \frac{\phi_+^{L+1}-\phi_+^{-(L+1)}}{\phi_+-\phi_+^{-1}}\,.
\end{equation}
Rewriting $q \coloneq \phi_+$, the defining relation $\phi_++\phi_+^{-1}=d$ becomes $q+q^{-1}=d$, which is the standard parametrization of the Temperley--Lieb loop weight, and \eqref{eq:DL_explicit} is precisely
the quantum integer
\begin{equation}\label{eq:quantum_integer}
    D_L = [L+1]_q = \frac{q^{L+1}-q^{-(L+1)}}{q-q^{-1}}\,.
\end{equation}
\end{proof}

Under the identification of the singlet model with the spin-$\tfrac12$ $XXZ$ chain at anisotropy $\Delta = d/2$, this is the familiar $q$-deformation of the dimension $L+1$ maximal spin multiplet: as $q \to 1$, that is as $d \to 2$ and $s \to \tfrac12$, the right hand side of \eqref{eq:quantum_integer} tends to $L+1$. For $s \geq 1$ one has $q > 1$, and \eqref{eq:quantum_integer} grows exponentially. In the spin-$1$ case $d=3$, so that $q = \phi_g^{2}$ with $\phi_g = \tfrac{1+\sqrt5}{2}$ being the golden ratio and $q-q^{-1} = \sqrt5$. Then \eqref{eq:quantum_integer} reads
\begin{equation}\label{eq:spin1_degeneracy}
    D_L = \frac{\phi_g^{\,2L+2}-\phi_g^{-(2L+2)}}{\sqrt5} = F_{2L+2}\,,
\end{equation}
using Binet's formula for the $n$-th Fibonacci number $F_n$. That is, the spin-$1$ ground state degeneracies are the even-indexed Fibonacci numbers. 

\section{Discussion}
\label{sec:discussion}

The two features of the singlet model established above, the strictly positive gap and the exponentially degenerate ground state space, are not independent. This section discusses two ways in which their interplay distinguishes the model from other chains with ferromagnetic ground states: it breaks a continuous symmetry without producing gapless excitations, and its gap is unstable.


\subsection{Symmetry Breaking Without Gapless Goldstone Modes}
\label{sec:goldstone}

The fully polarized state $\omega$ of \eqref{eq:omega} breaks the global continuous $SU(2)$ symmetry of the model, so one expects such breaking to be accompanied by gapless excitations in the spectrum of the GNS Hamiltonian $H$. Theorem~\ref{thm:gns_gap} shows that it is not. Instead, infinite degeneracy in the ground state space takes its place. Let $A_x \in \cA_{\text{loc}}$ be the local operator lowering the spin at the site $x$, so that $A_x\ket{s\cdots s} = \ket{s-1}_x$. Then for every $x \in \zz$ the vector $\pi(A_x)\Omega$ is orthogonal to $\Omega$:
\begin{equation}\label{eq:gns_ortho}
    \langle \Omega, \pi(A_x)\Omega\rangle = \bra{s\cdots s}A_x\ket{s\cdots s} = \braket{s\cdots s}{s-1}_x = 0\,.
\end{equation}
These vectors are also all mutually orthogonal given $\langle \pi(A_x)\Omega, \pi(A_y)\Omega\rangle = 0$ when $x \neq y$. We also have $H\pi(A_x)\Omega = 0$. Indeed, recall $\omega(A^*\delta(A)) = \langle \pi(A)\Omega, H\pi(A)\Omega\rangle$. Also, the states $\ket{s}\otimes\ket{s-1}$ and $\ket{s-1}\otimes\ket{s}$ have no singlet component. Evaluating in a finite volume gives
\begin{align}\label{eq:gns_energy}
    \langle \pi(A_x)\Omega, H\pi(A_x)\Omega\rangle &= \omega\big(A_x^*\delta(A_x)\big) = \bra{s\cdots s}A_x^*[H_L,A_x]\ket{s\cdots s} \nonumber\\
    &= \bra{s-1}_x H_L\ket{s-1}_x - \bra{s-1}_x H_L\ket{s\cdots s} = 0\,.
\end{align}
In particular $\ker H$ is infinite dimensional.

We compare this situation with the following specialized formulation of Goldstone's theorem stated in contrapositive, which holds for more general sufficiently decaying interactions as in \cite{landau1981energy}.

\begin{thm}\label{thm:goldstone}
Let $\tau_t$ be the infinite volume dynamics as described in Section~\ref{sec:cstar}. Let $\{\alpha_u : \cA \to \cA\}_{u\in\rr}$ be a continuous symmetry of the dynamics. That is, $\alpha_u\circ\tau_t = \tau_t\circ\alpha_u$ where
\[
    \alpha_u(A) = e^{iu\sum_{x\in\Lambda}J_x}\,A\,e^{-iu\sum_{x\in\Lambda}J_x} \quad \text{ for each } A \in \cA_\Lambda\,,
\]
with $J_x = J_x^* \in \cB(V_s)$ independent of $x$. Suppose $\omega_g$ is a translation invariant ground state of $(\cA,\tau_t)$ for which there exists $\gamma>0$ with
\begin{equation}\label{eq:goldstone_hypothesis}
    \gamma\,\omega_g(A^*A) \leq \omega_g\big(A^*\delta(A)\big) \qquad \text{ for all } A \in \cA_{\text{loc}} \text{ with } \omega_g(A)=0\,.
\end{equation}
Then $\omega_g$ is invariant under the symmetry: $\omega_g\circ\alpha_u = \omega_g$ for all $u \in \rr$.
\end{thm}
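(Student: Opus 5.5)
The plan is the classical argument of Landau, Perez and Wreszinski: a spectral gap above a translation-invariant ground state forces the symmetry generator to annihilate the GNS vector. Write $J_\Lambda = \sum_{x\in\Lambda}J_x$, and let $\tilde J_\Lambda = J_\Lambda - \omega_g(J_\Lambda)$, so that $\omega_g(\tilde J_\Lambda)=0$. Set $c=\omega_g(J_x)$, which does not depend on $x$ by translation invariance.

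The first step is to turn the symmetry into a statement about the dynamics. Differentiating $\alpha_u\circ\tau_t=\tau_t\circ\alpha_u$ at $u=t=0$ shows that the derivation $[J_\Lambda,\cdot\,]$ commutes with $\delta$ on local observables whose support lies deep inside $\Lambda$. For nearest-neighbor interactions this means
\[
[J_\Lambda,\Phi(X)]=0 \quad\text{for every } X\subseteq\Lambda
\]
(or at least that it holds after summing over $X$). Consequently $\delta(\tilde J_\Lambda)=[H^{\rm loc},J_\Lambda]$ involves only the two boundary bonds of $\Lambda$. Its norm is therefore bounded by a constant $C$ independent of $|\Lambda|$.

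The second step uses the gap hypothesis \eqref{eq:goldstone_hypothesis} with $A=\tilde J_\Lambda$. Because $\omega_g$ is a ground state, $\omega_g(A^*\delta(A))=\langle\pi(A)\Omega,H\pi(A)\Omega\rangle\ge 0$. Only boundary bonds contribute to this quantity, so we aim to show it is $O(1)$ uniformly in $\Lambda$. The natural tool is Cauchy--Schwarz in the form
\[
\omega_g(A^*\delta(A))\le \|\delta(A)\|\,\omega_g(A^*A)^{1/2}.
\]
A sharper bound comes from the spectral representation, $\langle\pi(A)\Omega,H\pi(A)\Omega\rangle=\langle\pi(A)\Omega,\pi(\delta(A))\Omega\rangle$. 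Combining these with the gap gives
\[
\gamma\,\omega_g(A^*A)\le C\,\omega_g(A^*A)^{1/2},
\]
and hence $\omega_g(\tilde J_\Lambda^*\tilde J_\Lambda)\le C^2/\gamma^2$ uniformly in $\Lambda$.

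The third step extracts invariance from the uniform fluctuation bound. For any local $B$ with support well inside $\Lambda$, the quantity
\[
\tfrac{d}{du}\,\omega_g(\alpha_u(B))\big|_{u=0}=i\,\omega_g([J_\Lambda,B])
\]
does not depend on $\Lambda$. We write it as $i\big(\omega_g(\tilde J_\Lambda B)-\omega_g(B\tilde J_\Lambda)\big)$ and try to show it vanishes. The uniform bound says the vectors $\pi(\tilde J_\Lambda)\Omega$ stay bounded as $\Lambda\uparrow\zz$. Translation invariance lets us average $\Lambda$ over translates, or compare $\Lambda$ with $2\Lambda$: the difference $\tilde J_{\Lambda'}-\tilde J_\Lambda$ is supported away from $\mathrm{supp}\,B$ and is bounded in $\ell^2$. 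By a von Neumann mean ergodic argument it converges weakly to a translation-invariant vector, which we can arrange to have zero overlap with $\pi(B^*)\Omega$ through the clustering that the gap provides. This yields $\omega_g([J_\Lambda,B])=0$ for large $\Lambda$. The derivative of $u\mapsto\omega_g\circ\alpha_u$ therefore vanishes at every $u$, since $\omega_g\circ\alpha_u$ is again a translation-invariant gapped ground state, which gives the claim.

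I expect the third step to be the main obstacle: passing from a uniformly bounded family $\pi(\tilde J_\Lambda)\Omega$ to the vanishing of the commutator expectation. My first attempt would use the gap to derive exponential clustering of $\omega_g$, as in Hastings and Koma, and combine it with the uniform bound to show $\pi(\tilde J_\Lambda)\Omega\to 0$ weakly. Any nonzero weak limit would be a translation-invariant vector orthogonal to $\Omega$. The remaining point is to show that such a vector must be a multiple of $\Omega$ when the state is translation-invariant and gapped, and this is where the assumption of translation invariance of $\omega_g$ enters essentially.
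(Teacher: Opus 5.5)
The paper does not prove this theorem; it quotes it from \cite{landau1981energy}. Your Steps 1 and 2 are the standard first half of that argument and they are sound. $\delta(J_\Lambda)$ is a boundary term, and the hypothesis applied to $\tilde J_\Lambda$, together with Cauchy--Schwarz, gives $\|\pi(\tilde J_\Lambda)\Omega\|\le C/\gamma$ uniformly in $\Lambda$. One correction to Step 1: commuting derivations do not give $[J_\Lambda,\Phi(X)]=0$ term by term in general. They only give $[J_x+J_{x+1},\Phi(\{x,x+1\})]=a_x-a_{x+1}$ for some on-site $a$. The telescoped sum is still a boundary term of bounded norm, so nothing breaks.

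The genuine gap is Step 3, which is where the conclusion is actually decided. You assert without proof that weak limit points of $v_n=\pi(\tilde J_{[-n,n]})\Omega$ are translation invariant. You then leave open that a translation-invariant vector orthogonal to $\Omega$ must vanish. The appeal to a mean ergodic argument and to ``arranging'' zero overlap is not an argument.

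Both points need one ingredient that you must extract from the hypothesis: clustering of $\omega_g$. As the paper notes, the hypothesis forces $\ker H_g=\cc\Omega$ with a gap. The exponential clustering theorem of Hastings--Koma and Nachtergaele--Sims, which uses Lieb--Robinson bounds for the nearest-neighbour dynamics, then gives $\omega_g(B^*\alpha_x(C))\to\omega_g(B^*)\omega_g(C)$ as $|x|\to\infty$, where $\alpha_x$ is the shift. Clustering without a rate suffices.

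With clustering, the argument closes as follows.
\begin{enumerate}
\item Write $j_x=J_x-\omega_g(J_x)$ and let $U$ be the GNS translation unitary. The vector $Uv_n-v_n=\pi(j_{n+1}-j_{-n})\Omega$ is bounded, and its overlaps with $\pi(B)\Omega$ are $\omega_g(B^*j_{n+1})-\omega_g(B^*j_{-n})\to0$. So every weak limit point $v$ of $(v_n)$ satisfies $Uv=v$.
\item Then for local $C$, $\langle\pi(C)\Omega,v\rangle=\langle\pi(\alpha_n(C))\Omega,v\rangle\to\overline{\omega_g(C)}\langle\Omega,v\rangle$. Hence $v\in\cc\Omega$.
\item Since $\langle\Omega,v\rangle=\lim\omega_g(\tilde J_{[-n,n]})=0$, we get $v=0$.
\item The quantity $\omega_g([J_{[-n,n]},B])=\langle v_n,\pi(B)\Omega\rangle-\langle\pi(B^*)\Omega,v_n\rangle$ is constant for large $n$ and tends to $0$ along a subsequence. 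It therefore vanishes.
\end{enumerate}

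No ergodic theorem is needed. The final step is also simpler than you make it. $\alpha_u(B)$ has the same support as $B$, so applying the above to $\alpha_u(B)$ shows the derivative of $u\mapsto\omega_g(\alpha_u(B))$ vanishes at every $u$. You do not need to argue that $\omega_g\circ\alpha_u$ is again a gapped ground state.
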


The hypothesis \eqref{eq:goldstone_hypothesis} is strictly stronger than the GNS Hamiltonian $H_g$ of $\omega_g$ being gapped. It asserts simultaneously that $\omega_g$ is a ground state of the dynamics $\tau_t$, that $\ker H_g$ is one dimensional, and that $H_g$ has a gap bounded below by $\gamma$. Our arguments show that the singlet model satisfies the first and third of these but not the second: the vectors $\pi(A_x)\Omega$ have zero energy and vanishing expectation in $\omega$, so \eqref{eq:goldstone_hypothesis} fails for $A = A_x$ despite the gap. As stated earlier, the symmetry breaking of $\omega$ is therefore accompanied not by gapless excitations but by an infinite degeneracy at exactly zero energy. This gives the singlet model its particular manifestation of the Goldstone theorem. 


\subsection{Instability of the Gap}
\label{sec:instability}

We argue the gap of the singlet model is unstable, in the sense that there exist arbitrarily small $SU(2)$-invariant perturbations for which the thermodynamic gap closes. Let us first introduce the class of models we perturb with, which were briefly discussed in the introduction for the $s=1$ case. A translation invariant, rotation invariant, nearest neighbor interaction $h_{x,x+1}$ on a spin-$s$ chain may be written, up to a shift of the spectrum, either as a polynomial in $\vec{S}\cdot\vec{S}$ or in terms of projections onto the possible spin subspaces of two neighboring spins:
\begin{equation}\label{eq:general_interaction}
    h_{x,x+1} = \sum_{j=1}^{2s}J_j\big(\vec{S}_x\cdot\vec{S}_{x+1}\big)^j = \sum_{j=0}^{2s}K_jP^{(j)}_{x,x+1}\,.
\end{equation}
Hence there is a phase diagram structure within the parameter space $\{K_0,\dots,K_{2s}\}$. What follows is an attempt to transport that structure from $s=1$, where the phase diagram is better understood, to higher spin: we write down inequalities among the $K_j$ that at $s=1$ cut out the known phases, observe that they make sense for every $s$, and take the regions they define at $s=\tfrac32$ as our working guess for where the same behavior occurs. This is the basis for our expectation that the regions into which we perturb the singlet model below are gapless. We stress that it is an expectation and not a theorem. 

The ferromagnetic models are those with $K_{2s} < K_0,\dots,K_{2s-1}$, where the ground state favors nearest neighbor spins being in the maximal spin subspace. For example, the Heisenberg ferromagnet with interaction $-\vec{S}_x \cdot \vec{S}_{x+1}$ is included in this family, and has $\ket{s\cdots s}$ as a ground state. These models have ground state space given by the maximal spin subspace of the chain, of dimension $2sL+1$, and carry gapless spin wave excitations
\begin{equation}\label{eq:su2_magnon}
    \psi_k = \sum_{x=1}^{L}e^{ikx}\ket{s-1}_x\,,
\end{equation}
where recall $\ket{s-1}_x$ was our notation for the state with spin $s$ at each site except for $x$ which has spin $s-1$. The energies of these states are computed in Proposition~\ref{thm:spinwave_energies} for all members of the family in \ref{eq:general_interaction}. Equality of $K_{2s}$ with at least one of the larger coefficients places the model on the boundary of the ferromagnetic phase, for example, the singlet model corresponds to $K_{2s} = K_{2s-1} = \cdots = K_1 = 0$ and $K_0 = 1$. 

A second region is relevant to our purpose. In the $s=1$ case, after normalizing to rewrite the expression in Equation~\eqref{eq:general_interaction} as in Equation~\eqref{eq:spin1_family}, the region is given by $\theta \in (\tfrac\pi4, \tfrac\pi2)$, lying between the Uimin--Lai--Sutherland (ULS) model $\theta=\tfrac\pi4$ and the purely biquadratic point $\theta=\tfrac\pi2$, which up to a shift and normalization is the singlet model. As mentioned in the introduction, this phase is conjectured to be gapless with soft modes at momenta $\pm \tfrac{2\pi}{3}$, and critical (power law decay of correlations). In terms of projection operators the region is given by $K_0 > K_2 > K_1$. Rewritten as
\begin{equation}\label{eq:critical_spin1}
    K_0,\,K_2 \;>\; \min_{j} K_j\,, \qquad K_2 \;<\; \max_{j} K_j\,,
\end{equation}
these conditions can be generalized to every $s$. We therefore consider the two nested regions
\begin{align}\label{eq:regions_A&B}
    A_s \coloneq \Big\{\, K_0,\,K_{2s} > \min_{j} K_j \,\Big\}\,, 
    \qquad 
    B_s \coloneq A_s \cap \Big\{\, K_{2s} < \max_{j} K_j \,\Big\}
    \;\subseteq\; A_s\,.
\end{align}
At $s=1$ these regions reproduce known ones: parametrizing as in Equation~\eqref{eq:spin1_family}, $A_1 = \Big(\arctan\tfrac13,\ \tfrac{\pi}{2}\Big)$ and $B_1 = \Big(\tfrac{\pi}{4},\ \tfrac{\pi}{2}\Big)$. The region $A_1$ has on its lower endpoint the AKLT model, where real space correlations first become incommensurate with the lattice \cite{schollwock1996}. At the point $\theta \cong  0.1314\pi$ within $A_1$, the structure factor (defined as the Fourier transform of the real space correlation function) has peaks that begin to drift away from $\pi$ towards $\tfrac{2\pi}{3}$ when $\theta$ reaches $\tfrac\pi4$ \cite{bursill1995}, in agreement with the soft mode found at momenta $\tfrac{2\pi}{3}$ for the ULS model. Soft modes at this value of momenta are expected to persist throughout the critical region $B_1$ \cite{fath1991period}.

We expect a similar situation in the $s=\tfrac32$ case. Here, the boundary of $A_{3/2}$ passes through the ULS model and through $P^{(3)}$, the maximal spin projector and so the analogue of the AKLT model; this matches the $s=1$ pattern. Preliminary DMRG computations near $P^{(3)}$, and at $\left(\vec{S}\cdot\vec{S}\right)^2$, to be reported elsewhere, find the structure factor peaked away from $\pi$ inside $A_{3/2}$ and $B_{3/2}$, and peaked at $\pi$ outside $A_{3/2}$, again as in $s=1$. The similarity to the $s=1$ case for the behavior of the structure factor, the structure of the inequalities, and the models lying on the boundaries of the regions they define, are our reasons for believing $B_{3/2}$ is the $s=\tfrac32$ analog of the critical phase found in the $s=1$ case. This is further supported by the plots of the spectra on finite length chains in Figure~\ref{fig:H_eps,delt_specs}.

Comparing the inequalities that describe these two gapless phases, we see the singlet model sits between them. We find evidence below that its gap does not survive perturbation in either direction. The mechanism of instability is partially visible at the level of dimensions. Recall the ground state degeneracy of the singlet model, $D_L = [L+1]_q$, grows exponentially, whereas that of the neighboring ferromagnets grows only linearly. Thus, switching off a perturbation from the interior of the ferromagnetic region to the singlet model, the $D_L$ lowest eigenvalues of the perturbed model must collapse onto zero. For finite $L$ this collapse is continuous in the perturbation parameter, but the number of levels involved grows exponentially with $L$, and one expects the closing of the gap to become abrupt in the thermodynamic limit. Some of these excited states on the ferromagnetic side that collapse to zero are the spin waves \eqref{eq:su2_magnon} and illustrate the collapse directly: the explicit form of the singlet projector in \eqref{eq:projector_P0} shows that $H_L$ annihilates them, so they are exact ground states at zero perturbation, and by Proposition~\ref{thm:spinwave_energies} their energies tend to $0$ as $K_{2s}, K_{2s-1} \to 0$. Below we quantify the closing of the gap, by perturbing the singlet model within the family \eqref{eq:general_interaction} and tracking the energy gap between the $D_L$-th and $(D_L+1)$-th eigenvalues, whose value at $0$ perturbation is the finite volume gap of the singlet model. 

We also note the instability we find implies the singlet model does not satisfy a local topological order condition of the sort required by the gap stability results of \cite{nachtergaele2022quasilocality}.


\subsubsection{Spin-$1$}
\label{sec:instability1}

For $s=1$ the family \eqref{eq:general_interaction} has a one dimensional parameter space after normalization,
\begin{equation}\label{eq:spin1_family}
    h_{x,x+1}(\theta) = \cos\theta\,\big(\vec{S}_x\cdot\vec{S}_{x+1}\big) + \sin\theta\,\big(\vec{S}_x\cdot\vec{S}_{x+1}\big)^2\,.
\end{equation}
The phase diagram in parameter space $\theta \in [0,2\pi)$ is shown below in Figure~\ref{fig:Spin1 Diagram}.

\begin{figure}[H]
  \centering
  \includegraphics[scale=0.4]{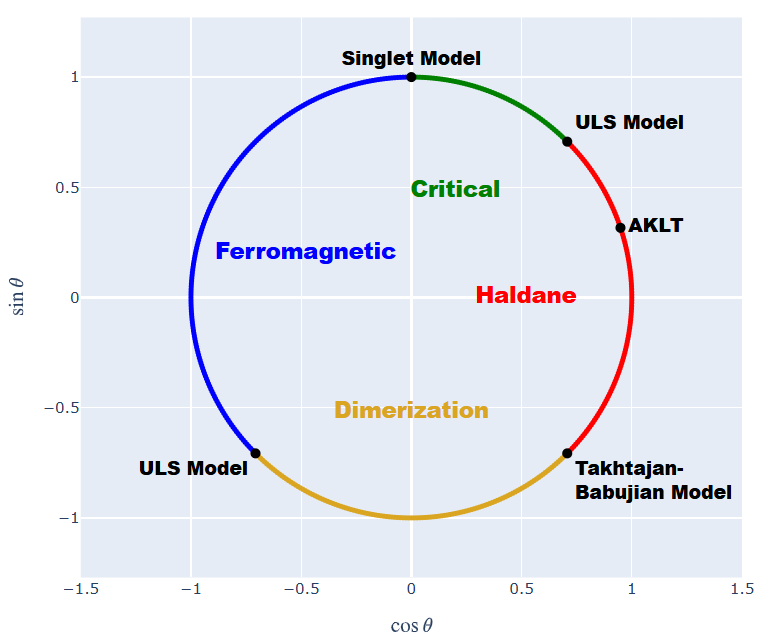}
  \caption{Conjectured phase diagram for spin-$1$ $SU(2)$ invariant chains. The singlet model is the purely biquadratic point up to a shift and normalization of the spectrum.}
  \label{fig:Spin1 Diagram}
\end{figure}

We perturb the singlet model into the two adjacent regions by setting
\begin{equation}\label{eq:H_1_eps}
    H_1(\epsilon, L) = \sum_x P^{(0)}_{x,x+1} - \epsilon\,P^{(1)}_{x,x+1}\,, \qquad \abs{\epsilon}<1\,.
\end{equation}
For $\epsilon<0$ this is a ferromagnetic chain as $P^{(2)}$ has the smallest coefficient in the interaction; for $\epsilon>0$ it enters the gapless phase described by $B_1$. Figure~\ref{fig:H_1_eps_spectrums} shows the spectrum of $H_1(\epsilon, L)$ for $L=4,5,6$ with open boundaries, and Figure~\ref{fig:perturbed_points_s1} labels these models in black within the spin-$1$ phase diagram. For $\epsilon$ bounded away from $0$ one sees a dense fan of states, spanned by the first $D_L$ eigenvectors, that collapse inward onto the $D_L$-fold degenerate ground state of the singlet model. We see a similar picture in the case of periodic boundary conditions.

\begin{figure}[H]
    \centering
    \begin{subfigure}[b]{\textwidth}
        \centering
        \includegraphics[width=\textwidth]{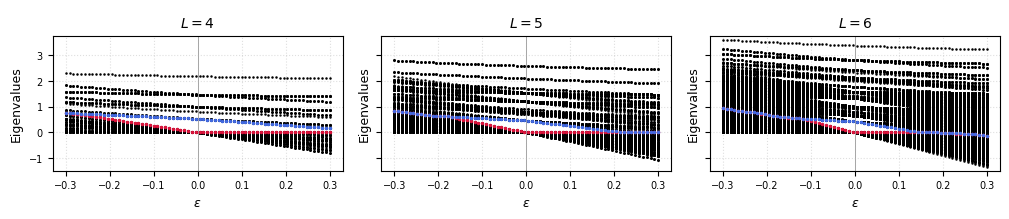}
    \end{subfigure}
    \caption{The spectrum of the Hamiltonian in \eqref{eq:H_1_eps} with $s=1$, OBC, $L=4,5,6$ plotted against $\epsilon$. The first $D_L$ eigenvalues are in red and the $(D_L+1)$-st eigenvalue is in blue.}
    \label{fig:H_1_eps_spectrums}
\end{figure}

\begin{figure}[H]
    \centering
    \begin{subfigure}[b]{0.5\textwidth}
        \centering
        \includegraphics[width=\textwidth]{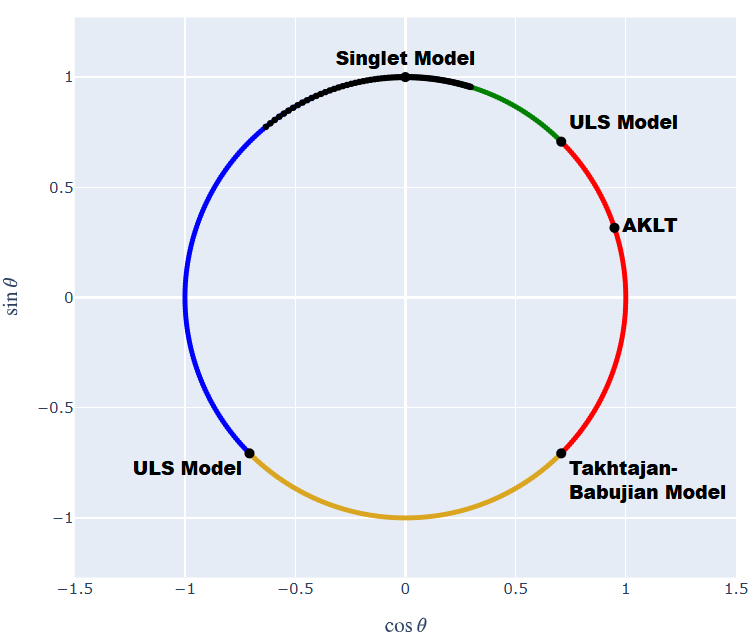}
    \end{subfigure}
    \caption{The perturbed models in the spin-$1$ phase diagram whose spectrum we plot in Figure~\ref{fig:H_1_eps_spectrums} are labeled in black.}
    \label{fig:perturbed_points_s1}
\end{figure}

We let $\zeta_1(\epsilon, L)$ denote the energy gap between the $D_L$-th and  $(D_L+1)$-st eigenvalue of the Hamiltonian in \eqref{eq:H_1_eps}. As stated earlier, this is expected to vanish as $L \to \infty$ for each fixed $\epsilon \neq 0$. Let 
\begin{equation}\label{eq:mu1}
\mu_1^\pm(L) = \lim_{\epsilon \to 0^\pm} \frac{\zeta_1(\epsilon, L) - \zeta_1(0, L)}{\epsilon}
\end{equation}
be the one-sided derivatives of the tracked level spacing for a particular $L$. The magnitude of $\mu_1^{\pm}(L)$ grows with $L$ in both directions of $\epsilon$, with the expectation that the rate at which the gap closes is extensive and hence that no $L$-independent lower bound on $\zeta_1(\epsilon, L)$ survives. Values of $\mu_1^\pm$ are computed for various $L$ in Table~\ref{tab:mu1_tab} and plotted in Figure~\ref{fig:mu1_fig} for open boundaries. We conjecture the magnitude of $\mu_1^\pm(L)$ grows linearly in $L$. We however see a slightly different picture in the case of periodic boundary conditions, with several points deviating too strongly for us to conclude confidently that the growth is linear. The values of $\mu_1^\pm$ in this case are computed for various $L$ and plotted below as well.

\begin{table}[h]
    \centering
    \small
    \begin{tabular}{ccccccccc}
        \hline
        $L$ & $D_L$ (OBC) & $\gamma_L$ (OBC) & $\mu_1^+$ (OBC) & $\mu_1^-$ (OBC) & $D_L$ (PBC) & $\gamma_L$ (PBC) & $\mu_1^+$ (PBC) & $\mu_1^-$ (PBC) \\
        \hline
        3 & 21 & 0.6667 & -0.500 & 1.500 & 18 & 0.6667 & -1.500 & 1.500 \\
        4 & 55 & 0.5286 & -1.158 & 1.868 & 47 & 0.3333 & -2.000 & 1.800 \\
        5 & 144 & 0.4607 & -2.000 & 2.587 & 123 & 0.3479 & -1.979 & 1.915 \\
        6 & 377 & 0.4226 & -2.713 & 3.455 & 322 & 0.3333 & -3.100 & 4.178 \\
        7 & 987 & 0.3994 & -3.568 & 4.146 & 843 & 0.3360 & -3.420 & 3.905 \\
        \hline
    \end{tabular}
    \caption{Degeneracies, gaps, and one-sided derivatives $\mu_1^\pm$ for $s=1$ and various system sizes $L$ comparing open (OBC) and periodic (PBC) boundary conditions.}
    \label{tab:mu1_tab}
\end{table}

\begin{figure}[H]
    \centering
    \begin{minipage}{0.48\textwidth}
        \centering
        \includegraphics[width=\textwidth]{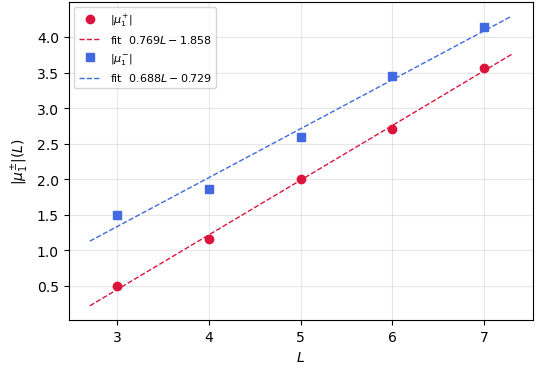}
    \end{minipage}\hfill
    \begin{minipage}{0.48\textwidth}
        \centering
        \includegraphics[width=\textwidth]{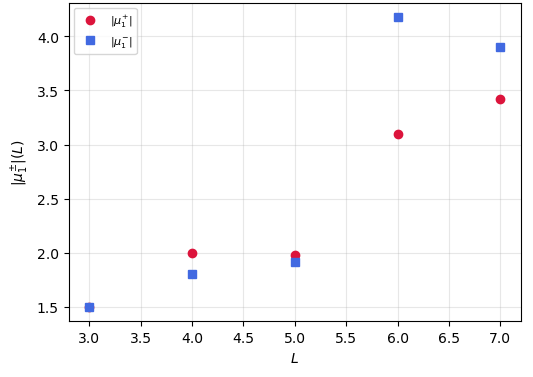}
    \end{minipage}
    \caption{Fitted plot of $\abs{\mu_1^\pm(L)}$ for $s=1$ under open boundary conditions (left) and periodic boundary conditions (right), for various system sizes $L$.}
    \label{fig:mu1_fig}
\end{figure}


\subsubsection{Spin-$\frac{3}{2}$}
\label{sec:instability32}

For $s=\tfrac32$ the family \eqref{eq:general_interaction} has a two dimensional parameter space after normalization,
\begin{equation}\label{eq:spin32_family}
    h_{x,x+1}(\theta, \phi) = \cos\theta\sin\phi\,\big(\vec{S}_x\cdot\vec{S}_{x+1}\big) + \sin\theta\sin\phi\,\big(\vec{S}_x\cdot\vec{S}_{x+1}\big)^2 + \cos\phi\,\big(\vec{S}_x\cdot\vec{S}_{x+1}\big)^3\,.
\end{equation}
The phase diagram in parameter space is now a sphere, and is partially colored as shown in Figure~\ref{fig:Spin32 Diagram}. The regions $A_{3/2} \setminus B_{3/2}$, $B_{3/2}$ and the ferromagnetic region are colored in analogy to the $s=1$ diagram. We color the rest of the sphere in grey to indicate those regions are quite unknown. We also label the singlet, ULS, $P^{(3)}$, $-P^{(1)}$ and $-P^{(2)}$ models. As mentioned previously, the ULS and $P^{(3)}$ models sit at the boundary of $A_{3/2}$. The $-P^{(1)}$ and $-P^{(2)}$ models sit at the boundary of $B_{3/2}$

\begin{figure}[H]
    \centering
    \includegraphics[width=0.85\textwidth]{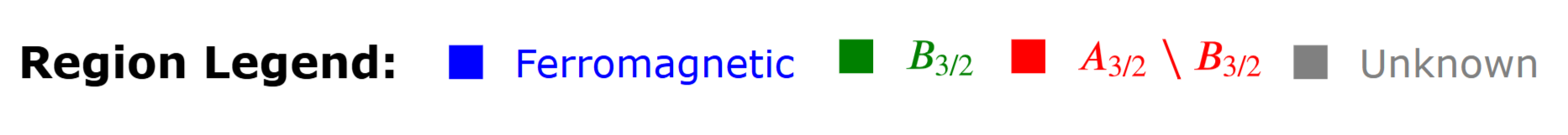}\\[1.5ex]

    \begin{minipage}[c]{0.48\textwidth}
        \centering
        \includegraphics[width=\textwidth]{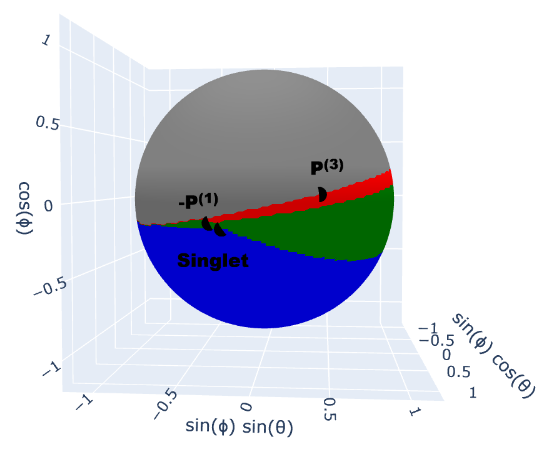}%
    \end{minipage}%
    \hfill
    \begin{minipage}[c]{0.48\textwidth}
        \centering
        \includegraphics[width=\textwidth]{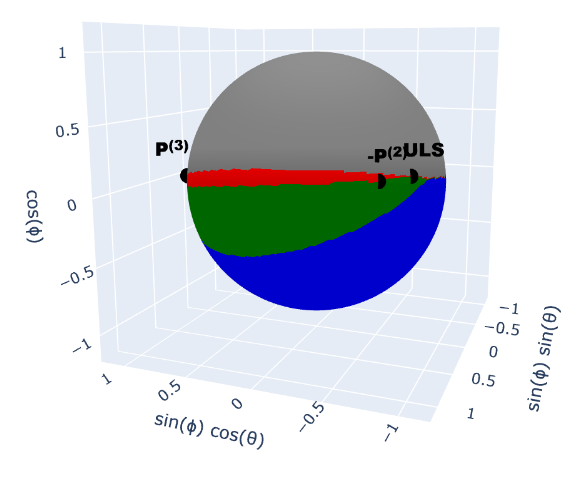}%
    \end{minipage}

    \caption{A conjectured partial phase diagram for spin-$\tfrac32$ $SU(2)$ invariant chains.}
    \label{fig:Spin32 Diagram}
\end{figure}

Fixing $K_0=1$ and $K_3=0$ after rewriting the spin-$\tfrac32$ interaction in terms of projectors, leaves a two parameter family of perturbations,
\begin{equation}\label{eq:H_32_eps,delt}
    H_{3/2}(\epsilon, \delta, L) = \sum_x P^{(0)}_{x,x+1}-\epsilon\,P^{(1)}_{x,x+1}-\delta\,P^{(2)}_{x,x+1}\,,
\end{equation}
and the same collapse of a dense fan of states into the singlet model's ground state is observed. We will focus on the case $\epsilon = \delta$ as this allows us to perturb both into the critical and ferromagnetic region according to our inequalities. Fixing either $\epsilon = 0$ or $\delta = 0$ and varying the other moves us along the boundary between these regions and not into the interior. The spectra of $H_{3/2}(\epsilon, \epsilon, L)$ are given in Figure~\ref{fig:H_eps,delt_specs} for various values of $\epsilon$ and $L=3,4,5$. The corresponding models are labeled by black dots within the spin-$3/2$ phase diagram in Figure~\ref{fig:perturbed_points_s32}. We again see a similar dense fan collapsing in the case of periodic boundary conditions.

\begin{figure}[H]
    \centering
    \begin{subfigure}[b]{\textwidth}
        \centering
        \includegraphics[width=\textwidth]{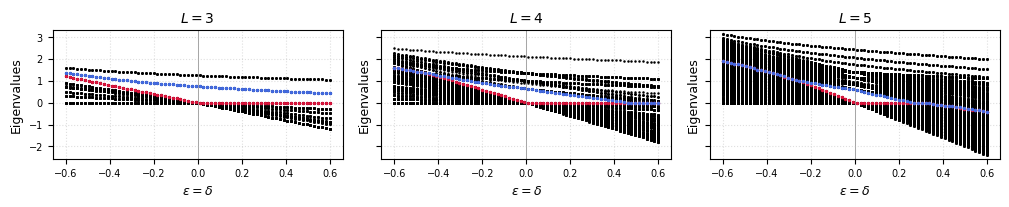}
    \end{subfigure}

    \caption{The spectrum of the Hamiltonian in \eqref{eq:H_32_eps,delt} with $s=\tfrac32$, open boundaries, $L=3,4,5$ plotted against $\epsilon = \delta$. The first $D_L$ eigenvalues are in red and the $(D_L+1)$-st eigenvalue is in blue.}
    \label{fig:H_eps,delt_specs}
\end{figure}

\begin{figure}[H]
    \centering
    \begin{minipage}[b]{0.48\textwidth}
        \centering
        \includegraphics[width=\textwidth]{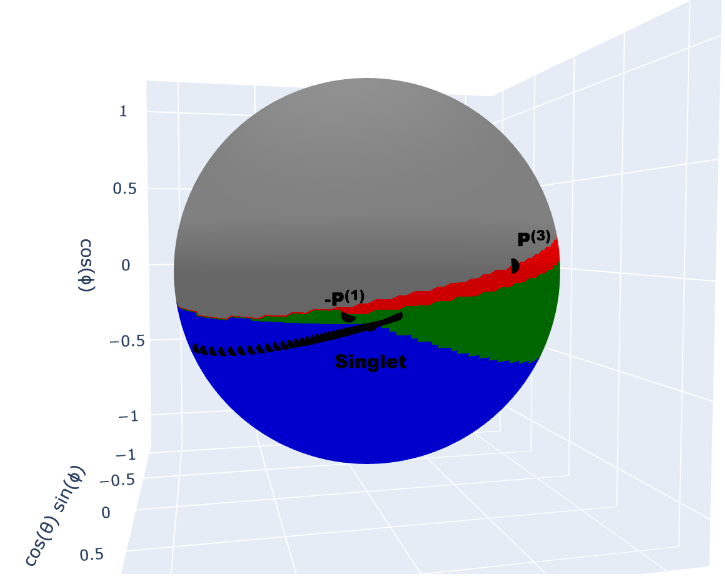}
    \end{minipage}
    \caption{The perturbed models in the spin-$\tfrac32$ phase diagram whose spectrum we plot in Figure~\ref{fig:H_eps,delt_specs} are labeled with black dots.}
    \label{fig:perturbed_points_s32}
\end{figure}

We let $\zeta_{3/2}(\epsilon, \delta, L)$ denote the energy gap between the $D_L$-th and $(D_L+1)$-st eigenvalue of the Hamiltonian in \eqref{eq:H_32_eps,delt}. We similarly let
\begin{equation}\label{eq:mu32}
\mu_{3/2}^\pm(L) = \lim_{\epsilon \to 0^\pm} \frac{\zeta_{3/2}(\epsilon, \epsilon, L) - \zeta_{3/2}(0, 0, L)}{\epsilon}
\end{equation}
be the one-sided derivatives of the tracked level spacing for a particular $L$. The magnitude of $\mu_{3/2}^{\pm}(L)$ again grows with $L$. Its values are computed for various $L$ in Table~\ref{tab:mu32_tab} and plotted in Figure~\ref{fig:mu32_fig} for the case of open and periodic boundary conditions. Similarly to the $s=1$ case, the fit in $L$ is linear for open boundaries and not exactly linear for periodic boundary.

\begin{table}[h]
    \centering
    \small
    \begin{tabular}{ccccccccc}
        \hline
        $L$ & $D_L$ (OBC) & $\gamma_L$ (OBC) & $\mu_{3/2}^+$ (OBC) & $\mu_{3/2}^-$ (OBC) & $D_L$ (PBC) & $\gamma_L$ (PBC) & $\mu_{3/2}^+$ (PBC) & $\mu_{3/2}^-$ (PBC) \\
        \hline
        3 & 56 & 0.7500 & -0.667 & 1.333 & 52 & 0.5000 & -2.500 & 0.500 \\
        4 & 209 & 0.6464 & -1.541 & 1.878 & 194 & 0.5000 & -2.800 & 2.425 \\
        5 & 780 & 0.5955 & -2.401 & 2.616 & 724 & 0.5000 & -3.200 & 2.944 \\
        6 & 2911 & 0.5670 & -3.299 & 3.471 & 2702 & 0.5000 & -3.956 & 4.176 \\
        \hline
    \end{tabular}
    \caption{Degeneracies, gaps, and one-sided derivatives $\mu_{3/2}^\pm$ for $s=3/2$ and various system sizes $L$ comparing open (OBC) and periodic (PBC) boundary conditions.}
    \label{tab:mu32_tab}
\end{table}

\begin{figure}[H]
    \centering
    \begin{minipage}[b]{0.48\textwidth}
        \centering
        \includegraphics[width=\textwidth]{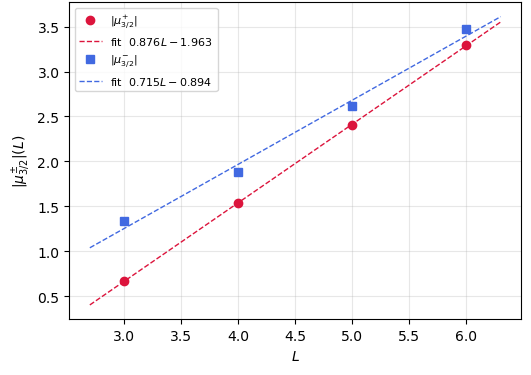}
    \end{minipage}
    \hfill
    \begin{minipage}[b]{0.48\textwidth}
        \centering
        \includegraphics[width=\textwidth]{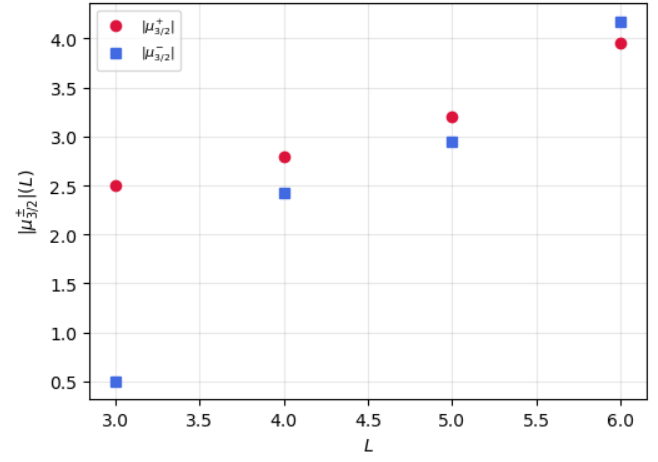}
    \end{minipage}
    \caption{Fitted plot of $\abs{\mu_{3/2}^\pm(L)}$ for $s=\tfrac32$ under open boundary conditions (left) and periodic boundary conditions (right), for various system sizes $L$.}
    \label{fig:mu32_fig}
\end{figure}

\begin{appendix}

\section{Spin Waves for $SU(2)$ Invariant Interactions}
\begin{prop}\label{thm:spinwave_energies}
Consider the general $SU(2)$ invariant two-site interaction $h_{x,x+1}$ as in Equation~\eqref{eq:general_interaction}. The state
\begin{equation}\label{eq:psi_k}
    \psi_k = \sum_{y=1}^L e^{iky} \ket{s-1}_y
\end{equation}
is an exact eigenstate of the Hamiltonian $H_{SU(2), L}^{\text{PBC}} = \sum_{x=1}^L h_{x,x+1}$ under periodic boundary conditions, with corresponding energy
\begin{equation}\label{eq:E_k}
    E_k = K_{2s}(L-2) + K_{2s-1}(1-\cos k) + K_{2s}(1+\cos k)\,,
\end{equation}
where $k=\frac{2\pi}{L}j$ for $j = 0,1,\dots,L-1$.
\end{prop}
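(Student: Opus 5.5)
The plan is to decompose the two-site states that appear in $\psi_k$ into total-spin components and then read off the action of each projector $P^{(j)}$. Let $J = S^3_x + S^3_{x+1}$. The states $\ket{s,s-1}$ and $\ket{s-1,s}$ on a bond $\{x,x+1\}$ span the $J=2s-1$ weight space of $V_s\otimes V_s$. This space is two dimensional, and only $V_{2s}$ and $V_{2s-1}$ contribute to it. In the spin-$2s$ multiplet, the vector of weight $2s-1$ is obtained by applying the lowering operator $S^-_x+S^-_{x+1}$ to $\ket{s,s}$, which gives the symmetric combination $\tfrac{1}{\sqrt2}(\ket{s,s-1}+\ket{s-1,s})$. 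The antisymmetric combination $\tfrac{1}{\sqrt2}(\ket{s,s-1}-\ket{s-1,s})$ is orthogonal to it and has the same weight, so it lies in $V_{2s-1}$. It follows that $P^{(j)}$ annihilates both states for $j<2s-1$, and that
\[
h\ket{s,s-1} = \tfrac{K_{2s}+K_{2s-1}}{2}\ket{s,s-1} + \tfrac{K_{2s}-K_{2s-1}}{2}\ket{s-1,s},
\]
with the same formula after exchanging the two tensor factors. In addition, $h\ket{s,s} = K_{2s}\ket{s,s}$, because $\ket{s,s}$ lies in $V_{2s}$.

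Next, apply $H^{\text{PBC}}_{SU(2),L}$ to a single configuration $\ket{s-1}_y$. The $L-2$ bonds not containing $y$ see $\ket{s,s}$, and together they contribute $K_{2s}(L-2)\ket{s-1}_y$. The bonds $\{y-1,y\}$ and $\{y,y+1\}$ each contribute a diagonal term $\tfrac{K_{2s}+K_{2s-1}}{2}\ket{s-1}_y$. They also contribute hopping terms $\tfrac{K_{2s}-K_{2s-1}}{2}\ket{s-1}_{y\mp1}$. Summing over $y$ with the weight $e^{iky}$, we reindex the shifted terms as in \eqref{eq:shifts}; this is legitimate by periodicity and the quantization $k=\tfrac{2\pi}{L}j$. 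The result is the eigenvalue
\[
K_{2s}(L-2) + (K_{2s}+K_{2s-1}) + (K_{2s}-K_{2s-1})\cos k,
\]
which can be rearranged into \eqref{eq:E_k}.

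The only delicate point is the first step. We must make sure that the weight-$(2s-1)$ sector really consists of just the symmetric vector, which lies in $V_{2s}$, and the antisymmetric vector, which lies in $V_{2s-1}$. We also need the correct relative sign between them. Both facts follow from the lowering-operator computation above: $S^-_x+S^-_{x+1}$ acts symmetrically on the two sites, so the vector it produces is symmetric. The rest of the argument is bookkeeping, with $L\geq 3$ assumed so that the two bonds at $y$ are distinct and the count $L-2$ of remaining bonds is correct.
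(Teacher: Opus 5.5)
Your proposal is correct and follows the paper's proof essentially step for step. The paper likewise reads off the action of $P^{(2s)}$ and $P^{(2s-1)}$ on $\ket{s,s-1}$ and $\ket{s-1,s}$, applies the Hamiltonian to $\ket{s-1}_y$, and reindexes the Fourier sum using periodicity; you justify the two-site action more explicitly, via the lowering operator and orthogonality within the weight-$(2s-1)$ space. Your restriction to $L\geq 3$ is harmless but not needed: for $L=2$, $h_{1,2}$ and $h_{2,1}$ are still separate summands, and the same bookkeeping gives the formula with $L-2=0$.
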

\begin{proof}
By the structure of the Clebsch-Gordan coefficients, the only projectors $P^{(i)}$ that do not annihilate $\ket{s, s-1}$ and $\ket{s-1, s}$ are those with $i=2s$ and $i=2s-1$. In such cases, reading off the Clebsch-Gordan coefficients gives
\begin{align}\label{eq:clebsch_gordan}
    P^{(2s)} \ket{s, s-1} &= \frac{1}{2} \left[\ket{s, s-1} + \ket{s-1, s}\right]\,, \nonumber \\
    P^{(2s-1)} \ket{s, s-1} &= \frac{1}{2} \left[\ket{s, s-1} - \ket{s-1, s}\right]\,, \nonumber \\
    P^{(2s)} \ket{s-1, s} &= \frac{1}{2}\left[ \ket{s-1, s} + \ket{s, s-1}\right]\,, \nonumber \\
    P^{(2s-1)} \ket{s-1, s} &= \frac{1}{2}\left[\ket{s-1, s} - \ket{s, s-1}\right]\,.
\end{align}
All other projectors act uniformly as $0$. Thus, for each $y \in [1,L]$ we have:
\begin{align}\label{eq:H_action_state_y2}
    H_{SU(2), L}^{\text{PBC}} \ket{s-1}_y &= \sum_{x=1}^L \left(K_{2s} P^{(2s)}_{x,x+1} + K_{2s-1} P^{(2s-1)}_{x,x+1}\right) \ket{s-1}_y \nonumber \\
    &= K_{2s}(L-2) \ket{s-1}_y + \frac{K_{2s}}{2}\left[\ket{s-1}_{y-1} + \ket{s-1}_y\right] + \frac{K_{2s-1}}{2}\left[\ket{s-1}_y - \ket{s-1}_{y-1}\right] \nonumber \\
    &\quad + \frac{K_{2s}}{2}\left[\ket{s-1}_y + \ket{s-1}_{y+1}\right] + \frac{K_{2s-1}}{2}\left[\ket{s-1}_y - \ket{s-1}_{y+1}\right]\,.
\end{align}
Applying $H_{SU(2), L}^{\text{PBC}}$ directly to $\psi_k$ and substituting the derived expression enables us to factor out $e^{\pm iky}$ by shifting indices within the sums over $y-1$ and $y+1$. Recombining gives $E_k\psi_k$. As usual, translation invariance of $\psi_k$ discretizes the momenta as $k = \frac{2\pi}{L}j$ for $j = 0,1,\dots,L-1$.
\end{proof}

For the spin-$1$ case, the interaction is of the form $\cos \theta \vec{S}\cdot\vec{S} + \sin \theta(\vec{S}\cdot \vec{S})^2$ as discussed in the introduction and Section~\ref{sec:instability1}. We plot in Figure~\ref{fig:Spin Wave Spec} the spectrum of the Hamiltonian on a periodic chain of length $L= 5$ as a function of $\theta$ and label in red those eigenvalues corresponding to the spin wave energies we just found. We see these energies minimize the spectrum in the ferromagnetic region, maximize the spectrum in the antiferromagnetic region, and play intermediary roles in other regions. 

\begin{figure}[H]
    \centering
    \includegraphics[width=\textwidth]{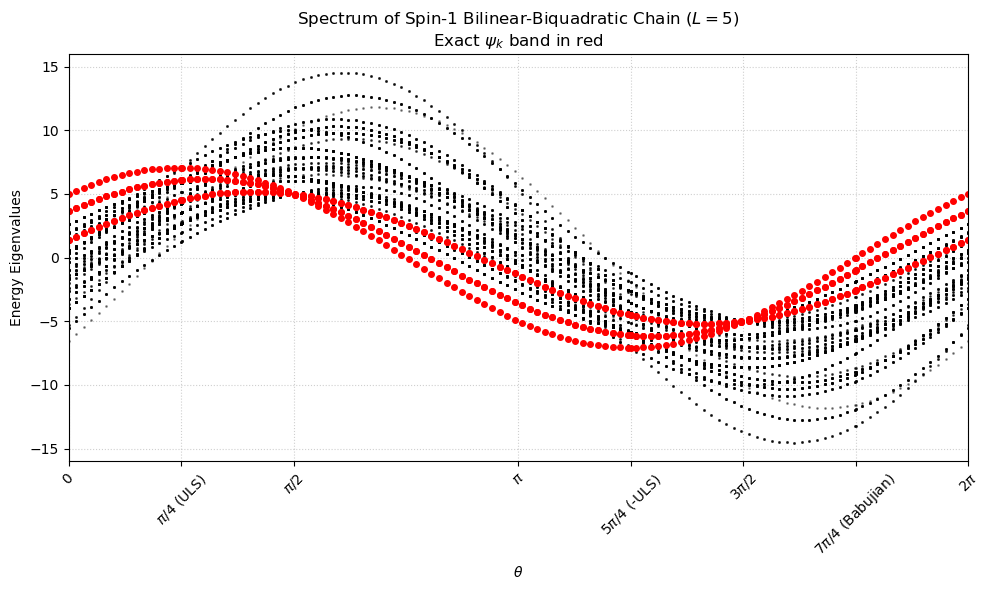} 
    \caption{Spectrum of the Hamiltonian for the bilinear-biquadratic interaction on a periodic chain of length $L=5$ with spin wave energies highlighted in red.}
    \label{fig:Spin Wave Spec}
\end{figure}


\section{Convergence of Dynamics and the Spectrum}
\label{sec:spin_system_facts}

Recall $\tau_t^{\Phi_n}$ and $\tau_t$ denote the infinite volume dynamics, and $\tau_t^{\Phi_n, \Lambda}$ and $\tau_t^\Lambda$ denote the finite volume dynamics, associated to the interactions $\Phi_n$ and $\Phi$ as before. Let $f \in \mathcal{S}(\rr)$ be a Schwartz function and define $\tau_f(A) = \int f(t)\tau_t(A)\,\dd{t}$ for each $A \in \cA_{\text{loc}}$. Similarly define $\tau_f^{\Phi_n}(A)$. 

\begin{lem}\label{lem:dynamics_convg}
For all $A \in \cA_{\text{loc}}$ and $T \geq 0$,
\begin{equation}\label{eq:dynamics_convg}
    \lim_{n \to \infty} \,\sup_{\abs{t}\leq T} \big\|\tau_t^{\Phi_n}(A) - \tau_t(A)\big\| = 0
\end{equation}
and
\begin{equation}\label{eq:smeared_dynamics_convg}
    \lim_{n \to \infty} \big\|\tau_f(A) - \tau_f^{\Phi_n}(A)\big\| = 0\,.
\end{equation}
\end{lem}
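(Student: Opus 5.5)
The approach is to treat this as a standard Lieb--Robinson / finite-range comparison, using that $\Phi_n$ and $\Phi$ agree on all bonds inside $\Lambda_n$ except the wrap-around bond. Fix $A\in\cA_{\text{loc}}$ with support $X$, and let $n$ be large enough that $X\subset\Lambda_n$ and $X$ sits deep inside $\Lambda_n$. The plan is to compare both dynamics with the finite-volume dynamics on $\Lambda_n$ generated by $H_{\Lambda_n}$, the open Hamiltonian with bonds $\{x,x+1\}\subset\Lambda_n$. This Hamiltonian is a common piece of both: $H^{\Phi_n}_{\Lambda_n}=H_{\Lambda_n}+h_{L_n,-L_n}$. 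It also approximates the full $\Phi$ dynamics.

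The first step is to control $\tau_t^{\Phi_n}$ against $\tau_t^{\Lambda_n}$. Since $\Phi_n$ has no terms outside $\Lambda_n$, $\tau_t^{\Phi_n}(A)=e^{itH^{\Phi_n}_{\Lambda_n}}Ae^{-itH^{\Phi_n}_{\Lambda_n}}$. The Duhamel formula then gives
\[
\tau_t^{\Phi_n}(A)-\tau_t^{\Lambda_n}(A)=i\int_0^t \tau_{t-u}^{\Phi_n}\big([h_{L_n,-L_n},\tau_u^{\Lambda_n}(A)]\big)\,du,
\]
whose norm is at most $2\int_0^{|t|}\|[h_{L_n,-L_n},\tau_u^{\Lambda_n}(A)]\|\,du$. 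The Lieb--Robinson bound for the uniformly bounded nearest-neighbour interaction $\Phi$ controls this commutator by $C\|A\||X|e^{v|u|-\mathrm{dist}(X,\{\pm L_n\})}$. The bound is uniform in $n$, and so is the comparison constant.

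The second step is the same argument for $\tau_t(A)$ against $\tau_t^{\Lambda_n}(A)$. Here one uses the standard fact that $\tau_t(A)=\lim_{m}\tau_t^{\Lambda_m}(A)$ and applies Duhamel between $\Lambda_n$ and $\Lambda_m$. The only differing terms are the bonds $h_{-L_n-1,-L_n}$ and $h_{L_n,L_n+1}$ together with bonds further out. After the same Lieb--Robinson estimate, these contribute a series that is summable and exponentially small in $\mathrm{dist}(X,\partial\Lambda_n)$.

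Combining the two steps by the triangle inequality gives $\sup_{|t|\le T}\|\tau_t^{\Phi_n}(A)-\tau_t(A)\|\le C'T e^{vT}e^{-\mathrm{dist}(X,\partial\Lambda_n)}\to0$, which is \eqref{eq:dynamics_convg}.

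For \eqref{eq:smeared_dynamics_convg}, split the integral at $|t|\le T$ and $|t|>T$. The tail is bounded by $2\|A\|\int_{|t|>T}|f|$, since both dynamics are automorphisms and hence norm-preserving. The tail can therefore be made small by choosing $T$ first. For that fixed $T$, the compact part is bounded by $\|f\|_1\sup_{|t|\le T}\|\tau_t^{\Phi_n}(A)-\tau_t(A)\|$, which tends to $0$ by the first part. The main point needing care is the uniformity of the Lieb--Robinson constants across the family $\Phi_n$ and $\Phi$. This holds because every interaction term is a projection of norm $1$ on a pair of sites, so a single Lieb--Robinson bound covers all of them. The wrap-around bond is handled exactly like any other bond, and only its distance to $X$ matters.
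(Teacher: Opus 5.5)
Your proposal is correct and follows essentially the same route as the paper. The paper also uses a Duhamel comparison of $\tau^{\Phi_n}_t$ with the open-chain dynamics $\tau^{\Lambda_n}_t$, in which only the wrap-around term $P^{(0)}_{L_n,-L_n}$ survives and is controlled by a Lieb--Robinson bound, then convergence $\tau^{\Lambda_n}_t(A)\to\tau_t(A)$ uniformly on $|t|\le T$, the triangle inequality, and the split of the smeared integral at $|t|=T$. The one difference is that the paper cites a standard thermodynamic-limit theorem for $\Phi$ for the step $\tau^{\Lambda_n}_t\to\tau_t$, while you sketch it yourself; there, only the two boundary bonds $h_{-L_n-1,-L_n}$ and $h_{L_n,L_n+1}$ actually give nonzero commutators with $\tau^{\Lambda_n}_u(A)\in\cA_{\Lambda_n}$.
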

\begin{proof}
Let $X=\mathrm{supp}(A)$ and consider $n$ large enough so that $X \subset \Lambda_n=[-L_n,L_n]$. Define $g_n(r) = \tau_r^{\Phi_n,\Lambda_n} \bigl(\tau_{t-r}^{\Lambda_n}(A)\bigr)$. Differentiating gives
\begin{align*}
g_n'(r) & = -i\,\tau_r^{\Phi_n,\Lambda_n}\left([H_{\Lambda_n}^\Phi,\tau_{t-r}^{\Lambda_n}(A)]\right) + i\,\tau_r^{\Phi_n,\Lambda_n} \left( [H_{\Lambda_n}^{\Phi_n},\tau_{t-r}^{\Lambda_n}(A)] \right)\\
&= i\,\tau_r^{\Phi_n,\Lambda_n}\left([P^{(0)}_{L_n,-L_n},\tau_{t-r}^{\Lambda_n}(A)]\right)\,.
\end{align*}
Integration yields
\[
\tau_t^{\Phi_n,\Lambda_n}(A)-\tau_t^{\Lambda_n}(A)
= \int_0^t i\,\tau_r^{\Phi_n,\Lambda_n}\left([P^{(0)}_{L_n,-L_n},\tau_{t-r}^{\Lambda_n}(A)]\right)\dd{r}\,.
\]
Making substitutions based on $t>0$ or $t<0$ gives, for all $t\in\rr$,
\[
\bigl\|\tau_t^{\Phi_n,\Lambda_n}(A)-\tau_t^{\Lambda_n}(A)\bigr\| \leq \int_0^{|t|} \bigl\|[\tau_{\mathrm{sgn}(t)s}^{\Lambda_n}(A),P^{(0)}_{L_n,-L_n}]\bigr\| \dd{s}\,.
\]
Let $Y_n = \{-L_n, L_n\}$ so for all $n$ sufficiently large, $X\cap Y_n=\varnothing$. The Lieb--Robinson bound of Theorem~3.1 in \cite{nachtergaele2019quasilocality} gives
\[
\bigl\|[\tau_{\mathrm{sgn}(t)s}^{\Lambda_n}(A),P^{(0)}_{L_n,-L_n}]\bigr\| \leq \frac{2\|A\|}{C_F}\left(e^{2C_F|s|\,\|\Phi\|_F}-1\right) \sum_{x\in X}\sum_{y\in Y_n}F(|x-y|)\,.
\]
Hence, for $|t|\leq T$,
\[
\bigl\|\tau_t^{\Phi_n,\Lambda_n}(A)-\tau_t^{\Lambda_n}(A) \bigr\| \leq \frac{2\|A\|}{C_F}\,T\left(e^{2C_FT\|\Phi\|_F}-1\right) \sum_{x\in X}\sum_{y\in Y_n}F(|x-y|).
\]
The double sum has $2\abs{X}$ terms, each of the form $F(\abs{x-y})$ for $x \in X$ and $y \in \{-L_n, L_n\}$. Since $F(z)\to0$ as $z\to\infty$, the sum tends to zero. Therefore
\[
\lim_{n\to\infty}\sup_{|t|\leq T}\bigl\|\tau_t^{\Phi_n,\Lambda_n}(A)-\tau_t^{\Lambda_n}(A)\bigr\|=0\,.
\]
On the other hand, the usual thermodynamic-limit convergence for the interaction $\Phi$, Theorem~3.5 in \cite{nachtergaele2019quasilocality}, gives
\[
\lim_{n\to\infty}\sup_{|t|\leq T}\bigl\|\tau_t^{\Lambda_n}(A)-\tau_t(A)\bigr\|=0\,.
\]
Since $\Phi_n$ has no interaction terms outside $\Lambda_n$,  $\tau_t^{\Phi_n}(A)=\tau_t^{\Phi_n,\Lambda_n}(A)$ for all sufficiently large $n$. The triangle inequality now gives the first part of the statement in Equation~\eqref{eq:dynamics_convg}.

It remains to pass from $\tau_t$ to $\tau_f$. Let $f \in \mathcal{S}(\rr)$ and $\epsilon > 0$. Since $\|\tau_t(A)-\tau^{\Phi_n}_t(A)\| \leq 2\|A\|$ for every $t$ and $n$, splitting the integral at $|t| = T$ gives
\[
    \big\|\tau_f(A)-\tau^{\Phi_n}_f(A)\big\|
    \;\leq\; \int_{|t| \leq T}|f(t)|\,\big\|\tau_t(A)-\tau^{\Phi_n}_t(A)\big\|\,\dd t
    \;+\; 2\|A\|\int_{|t| > T}|f(t)|\,\dd t\,.
\]
By the rapid decay of $f$ choose $T$ so that the second term is less than $\epsilon/2$; the first term is at most $\|f\|_{L^1}\sup_{|t|\leq T}\|\tau_t(A)-\tau^{\Phi_n}_t(A)\|$, which is less than $\epsilon/2$ for all large $n$ by Equation~\eqref{eq:dynamics_convg}. As $\epsilon$ was arbitrary, the second claim in Equation~\eqref{eq:smeared_dynamics_convg} follows.
\end{proof}

In our argument for Theorem~\ref{thm:gns_gap}, we use a lower bound on the finite volume gaps when periodic boundary conditions are considered. We need this to transfer to a lower bound for the gap of $H$, where $H$ is constructed using the interaction implementing open boundaries. To this end we use a statement from Proposition 5.4 of \cite{bachmann2016lieb}, adapted to the present setting, that says the spectral gaps of the finite volume Hamiltonians persist in the GNS representation.

\begin{prop}\label{prop:spectrum_transfer}
Let $H$ be the GNS Hamiltonian of $\omega$ with respect to $\tau_t$ as before. Suppose $E \in \rr$ is such that there exist $\epsilon > 0$ and $N \in \nn$ with
\begin{equation}\label{eq:gap_hypothesis}
    (E-\epsilon,\,E+\epsilon) \cap \sigma\big(H^{\Phi_n}_{\Lambda_n}\big) = \emptyset
    \qquad \text{for all } n \geq N\,,
\end{equation}
where $\Lambda_n = [-L_n, L_n]$ as before. Then $E \notin \sigma(H)$.
\end{prop}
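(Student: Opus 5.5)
The plan is to adapt the argument of Proposition 5.4 of \cite{bachmann2016lieb}: use smeared dynamics together with Lemma~\ref{lem:dynamics_convg}. Fix $E$, $\epsilon$, $N$ as in \eqref{eq:gap_hypothesis}. Choose $f \in \mathcal{S}(\rr)$ whose Fourier transform $\hat f$ is smooth, supported in $(E-\epsilon,E+\epsilon)$, and satisfies $\hat f(E) \neq 0$. The goal is to show that $\pi(\tau_f(A))\Omega = 0$ for every $A \in \cA_{\text{loc}}$. Since $e^{itH}\pi(A)\Omega = \pi(\tau_t(A))\Omega$ (using $H\Omega=0$), we have $\pi(\tau_f(A))\Omega = \hat f(H)\pi(A)\Omega$, with the convention $\hat f(\lambda) = \int f(t)e^{it\lambda}\,\dd t$. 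Once this vanishes for all local $A$, density of $\pi(\cA_{\text{loc}})\Omega$ gives $\hat f(H) = 0$. The spectral theorem then excludes from $\sigma(H)$ a neighbourhood of $E$ on which $\hat f \neq 0$.

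The finite volume computation runs as follows. Since $\Phi_n$ has no terms outside $\Lambda_n$, for $\mathrm{supp}(A) \subseteq \Lambda_n$ we get $\tau_t^{\Phi_n}(A) = e^{itH_n}Ae^{-itH_n}$ with $H_n = H^{\Phi_n}_{\Lambda_n}$. Hence $\tau_f^{\Phi_n}(A) = \int f(t)e^{itH_n}Ae^{-itH_n}\,\dd t$. The state $\omega$ restricts on $\cA_{\Lambda_n}$ to the vector state $\ket{s\cdots s}_{\Lambda_n}$, which is annihilated by $H_n$, so $e^{-itH_n}$ fixes it. Therefore
\[
\tau_f^{\Phi_n}(A)\ket{s\cdots s}_{\Lambda_n} = \hat f(H_n)A\ket{s\cdots s}_{\Lambda_n} = 0 \qquad \text{for all } n \geq N,
\]
because $\hat f$ vanishes on $\sigma(H_n)$ by \eqref{eq:gap_hypothesis}. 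Note also that $\hat f(0)=0$ is automatic, provided $0 \notin (E-\epsilon,E+\epsilon)$; if $0$ lies in that interval the hypothesis fails, since $0 \in \sigma(H_n)$. Translating back to the GNS space, $\|\pi(\tau_f^{\Phi_n}(A))\Omega\|^2 = \omega(\tau_f^{\Phi_n}(A)^*\tau_f^{\Phi_n}(A)) = 0$.

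To pass to the limit, combine this with \eqref{eq:smeared_dynamics_convg}:
\[
\|\pi(\tau_f(A))\Omega\| \le \|\tau_f(A) - \tau_f^{\Phi_n}(A)\| \to 0,
\]
so $\pi(\tau_f(A))\Omega = 0$, which completes the argument.

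The main technical point I expect is justifying $\pi(\tau_f(A))\Omega = \hat f(H)\pi(A)\Omega$. This requires $\tau_f(A)$ to be well defined in $\cA$ (a Bochner integral of a norm continuous bounded function, integrable against $f$) and $\pi$ to commute with the integral. Both are standard. A secondary point is that $\tau_f^{\Phi_n}(A)$ is supported in $\Lambda_n$, so that the restriction of $\omega$ to $\cA_{\Lambda_n}$ is the only piece needed.
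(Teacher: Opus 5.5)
Your proposal is correct and follows essentially the same route as the paper. You smear the dynamics with a Schwartz $f$ whose $\hat f$ is supported in $(E-\epsilon,E+\epsilon)$, use frustration freeness to get $\hat f(H^{\Phi_n}_{\Lambda_n})$ acting on the fully polarized vector (which vanishes by the gap hypothesis), and pass to the limit with Lemma~\ref{lem:dynamics_convg}. The differences are only cosmetic: you prove $\pi(\tau_f(A))\Omega = 0$ in norm where the paper shows the matrix elements $\omega(A^*\tau_f(B))$ vanish, and you spell out the closing spectral-theorem step ($\hat f(H)=0$ with $\hat f(E)\neq 0$ forces $E\notin\sigma(H)$), which the paper leaves implicit.
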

\begin{proof}
Let $f \in \mathcal{S}(\rr)$ be a Schwartz function whose Fourier transform $\hat f \in C_c^\infty((E-\epsilon, E+\epsilon))$ with $\hat f(E)=1$. It suffices to show that $\langle \pi(A)\Omega, \hat f(H)\pi(B)\Omega\rangle = 0$ for all $A,B \in \cA_{\text{loc}}$. Setting $\tau_f(B) = \int f(t)\tau_t(B)\dd{t}$ as before and using invariance of $\Omega$,
\[
    \hat f(H)\pi(B)\Omega = \int f(t)\,e^{itH}\pi(B)e^{-itH}\Omega\,\dd t = \pi(\tau_f(B))\Omega\,,
\]
so that $\langle \pi(A)\Omega, \hat f(H)\pi(B)\Omega\rangle = \omega(A^*\tau_f(B))$.

Fix $n$ with $n \geq N$ and $\Lambda_n \supseteq \mathrm{supp}\,A \cup \mathrm{supp}\,B$. The interaction $\Phi_n$ is nonzero only on the $N_n$ bonds of the ring on $\Lambda_n$, so $H^{\Phi_n}_{\Lambda_n}$ is a bounded element of $\cA_{\Lambda_n}$ and, for every $\Lambda \supseteq \Lambda_n$, $H^{\Phi_n}_\Lambda = H^{\Phi_n}_{\Lambda_n}$. Hence $\tau_t^{\Phi_n}$ is implemented by $H^{\Phi_n}_{\Lambda_n}$, preserves $\cA_{\Lambda_n}$, and
$\tau_f^{\Phi_n}(B) \in \cA_{\Lambda_n}$. As the family $\{\omega_\Lambda\}$ is consistent, $\omega$ restricts on $\cA_{\Lambda_n}$ to $\omega_{\Lambda_n}$, whence
\[
    \omega\big(A^*\tau_f^{\Phi_n}(B)\big)
    = \omega_{\Lambda_n}\big(A^*\tau_f^{\Phi_n}(B)\big)
    = \omega_{\Lambda_n}\Big(A^*\,\hat f\big(H^{\Phi_n}_{\Lambda_n}\big)\,B\Big)\,,
\]
the second equality because $H^{\Phi_n}_{\Lambda_n}\ket{s\cdots s} = 0$ by frustration freeness, so that $e^{-itH^{\Phi_n}_{\Lambda_n}}\ket{s\cdots s} = \ket{s\cdots s}$ and the $t$-integral collects into $\hat f\big(H^{\Phi_n}_{\Lambda_n}\big)$. By \eqref{eq:gap_hypothesis} the function $\hat f$ is supported outside $\sigma\big(H^{\Phi_n}_{\Lambda_n}\big)$, so $\hat f\big(H^{\Phi_n}_{\Lambda_n}\big) = 0$ and the expression vanishes. Therefore $\omega\big(A^*\tau_f^{\Phi_n}(B)\big) = 0$ for every such $n$, and since
\[
    \big|\omega\big(A^*\tau_f(B)\big)\big|
    = \big|\omega\big(A^*\tau_f(B)\big) - \omega\big(A^*\tau_f^{\Phi_n}(B)\big)\big|
    \leq \|A\|\,\big\|\tau_f(B)-\tau_f^{\Phi_n}(B)\big\|\,,
\]
Lemma~\ref{lem:dynamics_convg} gives $\omega\big(A^*\tau_f(B)\big) = 0$ on letting $n \to \infty$.
\end{proof}


\section{Module Theory and the Double Centralizer Theorem}

Here we list the general algebra facts we use throughout the paper. All algebras of interest are semisimple and over $\cc$. We begin with a fact about faithful modules. The faithfulness assumption in Theorem~\ref{thm:double_centralizer} is to have all distinct simple $A$-modules occur in the decomposition. The faithfulness of $\rho_d$ in Proposition~\ref{prop:decomposition} served this case as well: each $W_{L,k}$ occurs in the decomposition. This is in general given by the following simple consequence of the Artin-Wedderburn theorem, which can be found in standard texts \cite{cohn2003basic}, \cite{lang2002algebra}.

\begin{prop}\label{prop:faithful_iff}
Let $A$ be a finite dimensional semisimple algebra with $\cH_1,\dots,\cH_r$ a complete list of its simple modules up to isomorphism, and
\[
    \cH \;\cong\; \bigoplus_{k=1}^{r} \cH_k^{\oplus m_k}
\]
a finite dimensional $A$-module. Then $\cH$ is faithful if and only if $m_k > 0$ for every $k$.
\end{prop}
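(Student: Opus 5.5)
The plan is to use the Artin--Wedderburn theorem. Since $A$ is semisimple and finite dimensional over $\cc$, the map
\[
    A \to \bigoplus_{k=1}^r \rmEnd(\cH_k), \qquad a \mapsto \big(\rho_1(a),\dots,\rho_r(a)\big),
\]
given by the simple representations is an algebra isomorphism. Let $\rho$ be the action on $\cH$. Then $\rho(a)$ is block diagonal, with $m_k$ copies of $\rho_k(a)$ on the $k$-th isotypic block. The kernel of $\rho$ is therefore $\bigcap_{k:\,m_k>0}\ker\rho_k$.

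\emph{If} $m_k>0$ for every $k$, this intersection is $\bigcap_{k=1}^r\ker\rho_k$. That is the kernel of the Artin--Wedderburn isomorphism, which is zero, so $\cH$ is faithful.

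\emph{Conversely}, suppose $m_j=0$ for some $j$. By surjectivity of the Artin--Wedderburn map, choose $a\in A$ corresponding to $(0,\dots,0,\ds_{\cH_j},0,\dots,0)$; this is the central idempotent of the $j$-th block. Then $a\neq0$, but $\rho_k(a)=0$ for every $k$ with $m_k>0$, so $\rho(a)=0$ and $\cH$ is not faithful.

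The only substantive input is the Artin--Wedderburn theorem itself, specifically the identification of $A$ with the product of the endomorphism algebras of its simple modules. Once that is granted, there is no real obstacle. The one point needing care is that every finite dimensional module of a semisimple algebra is completely reducible, which is what justifies the block-diagonal decomposition of $\cH$ used above.
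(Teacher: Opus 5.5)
Your proof is correct and follows essentially the same route as the paper: both rest on the Artin--Wedderburn decomposition of $A$ into blocks $\rmEnd(\cH_k)$, each acting faithfully on $\cH_k$ and as zero on the other simple modules. The paper observes that $\mathrm{Ann}(\cH)$ is a two-sided ideal, hence a sum of blocks, and concludes $\mathrm{Ann}(\cH)=\bigoplus_{j:\,m_j=0}A_j$; you instead compute $\ker\rho=\bigcap_{k:\,m_k>0}\ker\rho_k$ directly and exhibit the central idempotent of a missing block, which is only a cosmetic difference.
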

\begin{proof}
By the Artin-Wedderburn theorem, $A = \bigoplus_{k=1}^{r}A_k$ with each $A_k \cong \rmEnd(\cH_k)$ the simple two-sided ideals, indexed so that $A_k$ acts faithfully on $\cH_k$ and annihilates $\cH_j$ for $j \neq k$; moreover every two-sided ideal of $A$ is a sum of the $A_k$.

Since $\mathrm{Ann}(\cH)$ is a two-sided ideal, it is therefore $\bigoplus_{j \in S}A_j$ for some $S \subseteq \{1,\dots,r\}$, and $A_j \subseteq \mathrm{Ann}(\cH)$ precisely when $A_j$ annihilates every simple constituent of $\cH$. As $A_j$ annihilates $\cH_k$ for $k \neq j$ and acts faithfully on $\cH_j$, this happens precisely when $\cH_j$ does not occur in $\cH$, that is when $m_j = 0$. Hence
\[
    \mathrm{Ann}(\cH) \;=\; \bigoplus_{j\,:\,m_j = 0} A_j\,,
\]
and since each $A_j \neq 0$, the annihilator vanishes if and only if every $m_k$ is strictly positive.
\end{proof}

The next important theorem is the double centralizer theorem, which was the basis of our decomposition Proposition~\ref{prop:decomposition}. The proofs can be found in \cite{procesi2007lie} and \cite{goodman2009symmetry}. As stated earlier, the assumption of faithfulness guarantees each standard module $W_k$ occurs in the decomposition. Without this assumption, the decomposition still holds, with some of the multiplicity spaces vanishing. 

\begin{thm}[Double Centralizer Theorem]\label{thm:double_centralizer}
Let $A$ be a semisimple algebra and $\cH$ a finite-dimensional faithful $A$-module. Let $B = \rmEnd_A(\cH)$ be the centralizer of $A$ in $\rmEnd(\cH)$. Then:
\begin{enumerate}
    \item $B$ is a semisimple algebra.
    \item The centralizer of $B$ in $\rmEnd(\cH)$ is exactly $A$, i.e., $\rmEnd_B(\cH) = A$.
    \item As an $A \otimes B$ module, $\cH$ decomposes as 
    \[ \cH \cong \bigoplus_k W_k \otimes U_k \]
    where the $W_k$ are pairwise non-isomorphic simple $A$-modules and the $U_k$ are pairwise non-isomorphic simple $B$-modules. The sum runs over all distinct simple modules of $A$, which are in bijection with those of $B$.
\end{enumerate}
\end{thm}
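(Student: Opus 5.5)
The proof will go through the Artin--Wedderburn decomposition of $A$, so that the commutant can be computed block by block. Since $A$ is semisimple, $A \cong \bigoplus_{k} \rmEnd(W_k)$, where $W_1,\dots,W_r$ is a complete list of simple $A$-modules. By complete reducibility the module $\cH$ decomposes as
\[
    \cH \cong \bigoplus_k W_k \otimes \cc^{m_k},
\]
with $A$ acting on the first factor of each summand. Faithfulness, via Proposition~\ref{prop:faithful_iff}, gives $m_k>0$ for every $k$.

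The first step is to compute $B=\rmEnd_A(\cH)$. I will use Schur's lemma over $\cc$: a module map $W_k \to W_j$ is zero for $j\neq k$ and a scalar for $j=k$. Hence any $T\in B$ is block diagonal and acts on the $k$-th summand as $\ds_{W_k}\otimes M_k$ with $M_k\in\rmEnd(\cc^{m_k})$. This gives
\[
    B\cong\bigoplus_k \rmEnd(\cc^{m_k}).
\]
A direct sum of full matrix algebras is semisimple, which proves (1). The simple $B$-modules are then $U_k\coloneq\cc^{m_k}$. These are pairwise non-isomorphic, since they are annihilated by different blocks, and they are in bijection with the $W_k$.

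For (2), I will apply the same Schur-type computation to the $B$-action. The commutant of $\bigoplus_k \ds_{W_k}\otimes \rmEnd(U_k)$ consists of the operators that are block diagonal and of the form $X_k\otimes \ds_{U_k}$ on each block, with $X_k\in\rmEnd(W_k)$ arbitrary. This commutant is $\bigoplus_k\rmEnd(W_k)$. Because $A\to\rmEnd(\cH)$ is injective, the image of $A$ is exactly this algebra, so $\rmEnd_B(\cH)=A$.

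Statement (3) then follows immediately from the decomposition $\cH\cong\bigoplus_k W_k\otimes U_k$, on which $A\otimes B$ acts factorwise. The sum runs over all simple $A$-modules because every $m_k>0$.

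The main obstacle is the second step, the identification of the commutant. It needs Schur's lemma over an algebraically closed field, together with care that the block decomposition of $\cH$ into isotypic components is canonical and preserved by every $T\in B$. To settle this I would first prove that the isotypic projections lie in the center of the image of $A$. That reduces everything to a single isotypic block $W\otimes\cc^m$, where the commutant computation is the elementary statement $\rmEnd_{\rmEnd(W)}(W\otimes\cc^m)=\ds\otimes\rmEnd(\cc^m)$.
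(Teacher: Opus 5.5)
Your proposal is correct. It is the standard Artin--Wedderburn plus Schur's lemma block computation; the paper gives no proof of its own here and simply defers to Procesi and Goodman--Wallach, and your route reuses the Wedderburn decomposition the paper already invokes in Proposition~\ref{prop:faithful_iff}. The one step to make explicit is in (2): injectivity of $A\to\rmEnd(\cH)$ yields equality with $\bigoplus_k \rmEnd(W_k)\otimes\ds_{U_k}$ only together with $\dim A=\sum_k(\dim W_k)^2$, or equivalently surjectivity of the natural map $A\to\bigoplus_k\rmEnd(W_k)$. Full matrix blocks over $\cc$ are available because faithfulness on the finite-dimensional $\cH$ forces $\dim A<\infty$.
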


The following is elementary and requires neither semisimplicity nor faithfulness, but it is convenient to have it stated alongside the results that do.

\begin{lem}\label{lem:iso_spectrum}
Let $A$ be an algebra and let $V,V'$ be isomorphic $A$-modules, with $\pi$ and $\pi'$ the corresponding actions. Then $\sigma\big(\pi(a)\big) = \sigma\big(\pi'(a)\big)$ for every $a \in A$.
\end{lem}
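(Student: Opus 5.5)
The statement to prove is Lemma~\ref{lem:iso_spectrum}: for isomorphic $A$-modules $V,V'$ with actions $\pi,\pi'$, the operators $\pi(a)$ and $\pi'(a)$ have the same spectrum for every $a\in A$. The plan is to reduce everything to similarity of matrices. By definition of module isomorphism there is a linear bijection $\Phi:V\to V'$ intertwining the actions,
\[
    \Phi\,\pi(a) = \pi'(a)\,\Phi \qquad \text{for all } a\in A\,.
\]
Since $\Phi$ is invertible, this rearranges to $\pi'(a) = \Phi\,\pi(a)\,\Phi^{-1}$. So $\pi(a)$ and $\pi'(a)$ are similar operators. No semisimplicity, faithfulness or self-adjointness is needed at any point.

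Next we check that similarity preserves the set of eigenvalues, which is the meaning of $\sigma$ used throughout the paper, as in Proposition~\ref{prop:isospectral}. For $\lambda\in\cc$,
\[
    \pi'(a)-\lambda\ds_{V'} = \Phi\big(\pi(a)-\lambda\ds_V\big)\Phi^{-1}\,,
\]
so one side is invertible exactly when the other is. Hence $\lambda\in\sigma(\pi(a))$ if and only if $\lambda\in\sigma(\pi'(a))$.

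The same argument is more concrete at the level of eigenvectors. If $\pi(a)v=\lambda v$ with $v\neq0$, then $\pi'(a)\Phi v = \Phi\pi(a)v = \lambda\Phi v$ and $\Phi v\neq 0$. The converse uses $\Phi^{-1}$. This also shows that multiplicities agree, since $\Phi$ maps each eigenspace, and each generalized eigenspace, of $\pi(a)$ bijectively onto that of $\pi'(a)$. In finite dimensions one could alternatively note that the characteristic polynomials coincide, because determinants are invariant under conjugation.

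There is no real obstacle here. The only point to be careful about is that the statement is about the spectrum as a set of eigenvalues in the finite dimensional setting where it is applied. The identity $(\Phi T\Phi^{-1}-\lambda)^{-1} = \Phi(T-\lambda)^{-1}\Phi^{-1}$ handles the general case verbatim, including infinite dimensional spaces when $\Phi$ is a bounded isomorphism with bounded inverse.
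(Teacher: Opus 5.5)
Your proof is correct and follows the paper's argument: the module isomorphism $\Phi$ intertwines the actions, so $\pi'(a)=\Phi\,\pi(a)\,\Phi^{-1}$, and conjugate operators have the same eigenvalues. Your extra remarks on eigenvectors, multiplicities and the bounded infinite-dimensional case are accurate, but they go beyond the finite-dimensional setting in which the paper uses the lemma.
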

\begin{proof}
Let $\Theta : V \to V'$ be an isomorphism of $A$-modules, so that $\Theta(a\cdot v) = a\cdot\Theta(v)$ for all $a \in A$ and $v \in V$. This reads $\Theta\pi(a) = \pi'(a)\Theta$, whence $\pi'(a) = \Theta\pi(a)\Theta^{-1}$, and conjugate operators have the same eigenvalues.
\end{proof}

Lemma~\ref{lem:iso_spectrum} and Corollary~\ref{cor:isospectral_general} below are complementary, and neither implies the other. The first compares two copies of the same module and applies in particular to simple modules, which are never faithful once $A$ has more than one block. The second compares two faithful modules, which are typically not isomorphic, and needs semisimplicity to reduce to a common list of simple constituents.

\begin{prop}\label{prop:isospectral_general}
Let $A$ be a finite dimensional semisimple algebra and let $
V,V'$ be $A$-modules with $V'$ faithful. If $a \in A$ and $0 \neq v \in V$ satisfy $a\cdot v = \lambda v$ for some $\lambda \in \cc$, then there exists $0 \neq v' \in V'$ with $a\cdot v' = \lambda v'$.
\end{prop}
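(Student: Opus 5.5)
The plan is to reduce the statement to a single simple module by means of Artin--Wedderburn, exactly as in the proof of Proposition~\ref{prop:faithful_iff}. Since $A$ is finite dimensional and semisimple, the modules $V$ and $V'$ are completely reducible:
\[
V\cong\bigoplus_k \cH_k^{\oplus m_k}, \qquad V'\cong\bigoplus_k \cH_k^{\oplus m_k'},
\]
where $\cH_1,\dots,\cH_r$ is the complete list of simple $A$-modules. Faithfulness of $V'$ gives $m_k'>0$ for every $k$, by Proposition~\ref{prop:faithful_iff}.

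First I would fix $\lambda$ and consider the operator $a-\lambda$. Each isomorphism above is an $A$-module isomorphism, and under it the action of $a$ becomes the direct sum of its actions on the summands. The eigenvalue equation $a\cdot v=\lambda v$ then decomposes componentwise: writing $v=\sum_{k,j} v_{k,j}$ with $v_{k,j}$ in the $j$-th copy of $\cH_k$, each component satisfies $a\cdot v_{k,j}=\lambda v_{k,j}$.

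Since $v\neq0$, some component $v_{k,j}$ is nonzero. Hence $\lambda$ is an eigenvalue of the action of $a$ on the simple module $\cH_k$.

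Next I would transport this eigenvector into $V'$. Because $m_k'>0$, there is an injective module map $\iota:\cH_k\to V'$ onto one copy of $\cH_k$ in $V'$. Set $v'=\iota(v_{k,j})$. This vector is nonzero by injectivity, and it satisfies
\[
a\cdot v'=\iota(a\cdot v_{k,j})=\lambda v'.
\]
Alternatively, this step is Lemma~\ref{lem:iso_spectrum} applied to $\cH_k$ and its copy inside $V'$.

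I do not expect a real obstacle. The only point needing care is the componentwise splitting of the eigenvector. This requires that the decomposition of $V$ be a decomposition into $A$-submodules, so that $a$ preserves each summand; complete reducibility, which comes from semisimplicity, supplies exactly this. Note that no inner product or self-adjointness is needed, since we only use invariance of the summands and not orthogonality.

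Applying the result symmetrically, with both modules faithful, then gives the equality of eigenvalue sets used in Proposition~\ref{prop:isospectral}.
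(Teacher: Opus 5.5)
Your proposal is correct and is essentially the paper's own proof: you decompose $V$ and $V'$ into simple modules by complete reducibility, use invariance of each summand under $a$ to split the eigenvector componentwise, pick a nonzero component in some copy of $\cH_k$, and push it into a copy of $\cH_k$ inside $V'$, which exists because faithfulness gives $m_k'\geq 1$ by Proposition~\ref{prop:faithful_iff}. Your remark that only invariance of the summands, and not orthogonality, is needed also matches the paper's argument.
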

\begin{proof}
Let $\cH_1,..., \cH_r$ be a complete list of simple $A$ modules up to isomorphism. Since $A$ is semisimple, so is every $A$-module, and we may fix isomorphisms of $A$-modules
\[
    \theta : V \longrightarrow \bigoplus_{i=1}^{r}\cH_i^{\oplus m_i}\,,
    \qquad
    \theta' : \bigoplus_{i=1}^{r}\cH_i^{\oplus m_i'} \longrightarrow V'\,,
\]
where $m_i$ and $m_i'$ are the multiplicities of $\cH_i$ in $V$ and in $V'$ respectively. Because $V'$ is faithful, Proposition~\ref{prop:faithful_iff} gives $m_i' \geq 1$ for every $i$. For $1 \leq j \leq m_i$ we write $\cH_i^{(j)}$ for the $j$-th copy of $\cH_i$ in the decomposition of $V$, and similarly for the copy in the domain of $V'$.

Since $\theta$ intertwines the actions, $\theta(v) \neq 0$ and
$a\cdot\theta(v) = \theta(a\cdot v) = \lambda\,\theta(v)$. Decompose
\[
    \theta(v) = \sum_{i=1}^{r}\sum_{j=1}^{m_i}w_{i,j}\,, \qquad w_{i,j} \in \cH_i^{(j)}\,.
\]
Each $\cH_i^{(j)}$ is a submodule and hence invariant under the action of $a$, so
$a\cdot w_{i,j}-\lambda w_{i,j} \in \cH_i^{(j)}$ and
\[
    0 \;=\; a\cdot\theta(v)-\lambda\,\theta(v) \;=\; \sum_{i,j}\big(a\cdot w_{i,j}-\lambda w_{i,j}\big)\,.
\]
Since we have a direct sum, every term therefore vanishes, so $a\cdot w_{i,j} = \lambda w_{i,j}$ for all $i,j$. As $\theta(v) \neq 0$, some $w_{i,j} \neq 0$; fix such a pair and regard $w \coloneq w_{i,j}$ as a nonzero element of $\cH_i$ satisfying $a\cdot w = \lambda w$.

Since $m_i' \geq 1$, $V'$ has a copy of $\cH_i$. Let $\iota : \cH_i \to \bigoplus_{l}\cH_l^{\oplus m_l'} \cong V'$ be the inclusion of that copy and set $v' \coloneq \theta'\big(\iota(w)\big) \in V'$. Then $v' \neq 0$ because $\theta'$ and $\iota$ are injective and $w \neq 0$, and since both intertwine the actions,
\[
    a\cdot v' = a\cdot\theta'\big(\iota(w)\big) = \theta'\big(\iota(a\cdot w)\big)
    = \theta'\big(\iota(\lambda w)\big) = \lambda\,v'\,. \qedhere
\]
\end{proof}

\begin{cor}\label{cor:isospectral_general}
Let $A$ be a finite dimensional semisimple $\cc$-algebra and let $V,V'$ be faithful $A$-modules. Then for every $a \in A$ the operators given by the action of $a$ on $V$ and on $V'$ have the same set of eigenvalues.
\end{cor}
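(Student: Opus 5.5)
The plan is to apply Proposition~\ref{prop:isospectral_general} twice, once in each direction. The key point is that the hypothesis there asks only that the \emph{target} module be faithful, and here both $V$ and $V'$ are faithful.

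Fix $a \in A$ and suppose $\lambda$ is an eigenvalue of the action of $a$ on $V$. Then there is a nonzero $v \in V$ with $a\cdot v = \lambda v$. Apply Proposition~\ref{prop:isospectral_general} with source $V$ and target $V'$; this is legitimate because $V'$ is faithful. It produces a nonzero $v' \in V'$ with $a \cdot v' = \lambda v'$, so $\lambda$ is an eigenvalue of the action of $a$ on $V'$. Hence
\[
\sigma(a|_V) \subseteq \sigma(a|_{V'})\,.
\]

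Next exchange the roles of the two modules. Now $V'$ is the source and $V$ is the target, and $V$ is faithful, so the same proposition gives the reverse inclusion. Combining the two inclusions yields equality of the eigenvalue sets.

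Some remarks on the hypotheses:
\begin{itemize}
\item Since $V$ and $V'$ are finite dimensional over $\cc$, the spectrum of each operator is exactly its set of eigenvalues. Nothing further is needed there.
\item For the application in Proposition~\ref{prop:isospectral}, one must check that $TL_L(d)$ is finite dimensional and semisimple. This was recorded in Section~\ref{sec:model}, via \cite{ridout2014standard}, for $d \geq 2$.
\end{itemize}

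There is no real obstacle in this argument. The only point requiring care is noticing that Proposition~\ref{prop:isospectral_general} is asymmetric, so the symmetric statement needs faithfulness of both modules, one for each direction. This is exactly why the corollary is stated for two faithful modules.
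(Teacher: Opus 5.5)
Your proof is correct and matches the paper's approach: the paper states the corollary without a separate proof right after Proposition~\ref{prop:isospectral_general}, and it follows by applying that proposition once in each direction. Faithfulness of $V'$ gives $\sigma(a|_V)\subseteq\sigma(a|_{V'})$, and faithfulness of $V$ gives the reverse inclusion. Your side remark assumes $V,V'$ are finite dimensional, which the statement does not require, but the argument does not depend on it because the claim concerns sets of eigenvalues.
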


\end{appendix}

\paragraph{AI Statement}
The authors acknowledge the use of Claude (Anthropic) for editorial assistance, for help working out the details of several arguments, and for independent numerical checks. Google Gemini was used for assistance with the code generating the tables and figures. Full responsibility for the entire content rests with the authors.

\paragraph{Funding information}
This work was supported in part by the National Science Foundation under grant DMS-2510824 (BN \& RF). BN would like to thank the Isaac Newton Institute for Mathematical Sciences, Cambridge, for support and hospitality during the programme Mathematics of Many-Body Entanglement, where work on this paper was undertaken. This work was partially supported by EPSRC grant EP/Z000580/1 and by a grant from the Simons Foundation (BN).

\bibliographystyle{amsalpha}
\bibliography{references}

\end{document}